\newtheorem{theorem}{Theorem} [section]
\newtheorem{corollary}[theorem]{Corollary}	
\newtheorem{lemma}[theorem]{Lemma}
\newtheorem{remark}[theorem]{Remark}
\theoremstyle{definition}
\DeclareMathOperator{\erfc}{erfc}
\newcommand{\C}{\mathbb{C}}
\newcommand{\sfz}{\mathsf{z}}
\newcommand{\Int}{\operatorname{Int}}
\newcommand{\R}{\mathbb{R}}
\newcommand{\re}{\text{\upshape Re\,}}
\newcommand{\res}{\text{\upshape Res\,}}
\tikzset{
	master/.style={
		execute at end picture={
			\coordinate (lower right) at (current bounding box.south east);
			\coordinate (upper left) at (current bounding box.north west);
		}
	},
	slave/.style={
		execute at end picture={
			\pgfresetboundingbox
			\path (upper left) rectangle (lower right);
		}
	}
}
\let\oldbibliography\thebibliography
\renewcommand{\thebibliography}[1]{\oldbibliography{#1}
\setlength{\itemsep}{-0.5pt}}
\def\XXint#1#2#3{{\setbox0=\hbox{$#1{#2#3}{\int}$}
\vcenter{\hbox{$#2#3$}}\kern-.5\wd0}}
\tikzset{->-/.style={decoration={
				markings,
				mark=at position #1 with {\arrow{latex}}},postaction={decorate}}}
	\tikzset{-<-/.style={decoration={
				markings,
				mark=at position #1 with {\arrowreversed{latex}}},postaction={decorate}}}
\tikzset{cross/.style={cross out, draw,
         minimum size=2*(#1-\pgflinewidth),
         inner sep=0pt, outer sep=0pt}}
\newcommand{\dbar}{\bar{\partial}}
\numberwithin{equation}{section}
\def\ds{\displaystyle}
\def\bigO{{\cal O}}
\begin{document}
\title{\vspace*{-1.5cm} Exponential moments for disk counting statistics \\ at the hard edge of random normal matrices}
\author{Yacin Ameur$^*$, Christophe Charlier$^*$, Joakim Cronvall\footnote{Centre for Mathematical Sciences, Lund University, 22100 Lund, Sweden. e-mails: yacin.ameur@math.lu.se,  christophe.charlier@math.lu.se, joakim.cronvall@math.lu.se
} \, and Jonatan Lenells\footnote{Department of Mathematics, KTH Royal Institute of Technology, 10044 Stockholm, Sweden. e-mail: jlenells@kth.se}}

\maketitle

\begin{abstract}

We consider the multivariate moment generating function of the disk counting statistics of a model Mittag-Leffler ensemble in the presence of a hard wall. Let $n$ be the number of points. We focus on two regimes: (a) the ``hard edge regime" where all disk boundaries are at a distance of order $\frac{1}{n}$ from the hard wall, and (b) the ``semi-hard edge regime" where all disk boundaries are at a distance of order $\frac{1}{\sqrt{n}}$ from the hard wall. As $n \to + \infty$, we prove that the moment generating function enjoys asymptotics of the form
\begin{align*}
& \exp \bigg(C_{1}n + C_{2}\ln n + C_{3} + \frac{C_{4}}{\sqrt{n}} + \bigO(n^{-\frac{3}{5}})\bigg), & & \mbox{for the hard edge}, \\
& \exp \bigg(C_{1}n + C_{2}\sqrt{n} \hspace{0.12cm} + C_{3} + \frac{C_{4}}{\sqrt{n}} + \bigO\bigg(\frac{(\ln n)^{4}}{n}\bigg)\bigg), & & \mbox{for the semi-hard edge}.
\end{align*}
In both cases, we determine the constants $C_{1},\dots,C_{4}$ explicitly. We also derive precise asymptotic formulas for all joint cumulants of the disk counting function, and establish several central limit theorems. Surprisingly, and in contrast to the ``bulk", ``soft edge" and ``semi-hard edge" regimes, the second and higher order cumulants of the disk counting function in the ``hard edge" regime are proportional to $n$ and not to $\sqrt{n}$.
\end{abstract}
\noindent
{\small{\sc AMS Subject Classification (2020)}: 41A60, 60B20, 60G55.}

\noindent
{\small{\sc Keywords}: Merging ``circular" discontinuities near a hard edge, Moment generating functions, Random matrix theory, Asymptotic analysis.}

\section{Introduction and statement of results}\label{section: introduction}

\subsection{Hard wall constraints in random matrix theory} 

In this work we study random normal matrix eigenvalues on subsets of the plane which are obtained by imposing a hard wall constraint. These eigenvalues can also be seen as repelling Coulomb gas particles at the inverse temperature $\beta=2$. While we shall soon specialize to a class of Mittag-Leffler ensembles, it
is convenient to start out from a broader perspective.


 Thus we fix an arbitrary lower semi-continuous function $Q_0: \mathbb{C} \to \mathbb{R}\cup \{+\infty\}$.
Along with $Q_0$ we fix a suitable closed subset $C$ of $\mathbb{C}$ and consider the modification (``external potential''):
\begin{align}\label{potential QK}
Q(z) = \begin{cases}
Q_0(z), & \mbox{if } z \in C, \\
+\infty, & \mbox{otherwise}.
\end{cases}
\end{align}
The external potential is assumed to be finite on some set of positive capacity and to satisfy the basic growth constraint
\begin{align}\label{growth cond}
{Q(z)}-\ln |z|^{2}\to +\infty,\qquad \text{as}\qquad z\to\infty.
\end{align}

Observe that $Q$ may satisfy the growth condition \eqref{growth cond} even if $Q_0$ fails to do so. In particular, this
is the case if $Q_0$ is a constant, or if $Q_0$ is an Elbau-Felder potential \cite{ElbauFelder,IT,LM,AKS2018}:
$$Q_0(z)=\tfrac 1 {t_0}(|z|^2-2\re (t_1z+\cdots+t_kz^k)).$$
Another basic class of hard walls is obtained by taking $C=\mathbb{R}$, which leads to the Hermitian random matrix theory.

Given a confining potential $Q$, we associate Coulomb gas ensembles in the following way (as mentioned, we will only consider the inverse temperature $\beta=2$).
We consider configurations of $n$ points $\{\mathsf{z}_{j}\}_{j=1}^{n}\subset \C$. The total energy, or Hamiltonian of the configuration, is defined by
\begin{align*}
H_{n} = \sum_{\substack{j,k=1 \\ j \neq k }}^{n} \ln \frac{1}{|\mathsf{z}_{j}-\mathsf{z}_{k}|} + n \sum_{j=1}^{n} Q(\mathsf{z}_{j}),
\end{align*}
and the associated Boltzmann-Gibbs measure on $\mathbb{C}^{n}$ is
\begin{align}\label{bogg}
d \mathsf{P}_{n} = \frac{1}{\mathsf{Z}_{n}} e^{-H_{n}} \prod_{j=1}^{n} d^{2}\mathsf{z}_{j},
\end{align}
where $d^{2}\mathsf{z}$ is the two-dimensional Lebesgue measure. The Coulomb gas ensemble (or ``system'') $\{\mathsf{z}_{j}\}_{j=1}^{n}$ corresponding to the external potential $Q$ is a configuration picked randomly with respect to this measure.

To a first order approximation, the system tends to follow Frostman's equilibrium measure $\mu$ associated to the potential $Q$. This is the unique minimizer of the weighted logarithmic energy functional
\begin{align*}
I_Q[\nu] = \iint_{\mathbb{C}^{2}} \ln \frac{1}{|z-w|}\, d\nu(z) d\nu(w) + \int_{\mathbb{C}} Q(z)\, d \nu(z)
\end{align*}
among all compactly supported Borel probability measures on $\mathbb{C}$. The support of $\mu$ is called the droplet and is denoted $S=S[Q]$.
If the potential is $C^{2}$-smooth in a neighborhood of $S$, then the equilibrium measure is absolutely continuous with respect to
the two-dimensional Lebesgue measure $d^2 z$ and takes the form (see \cite{SaTo})
\begin{align}\label{eqform}
d\mu(z) = \frac{1}{4\pi} \Delta Q(z) \chi_{S}(z) \,d^{2}z,
\end{align}
where $\chi_{S}$ is the indicator function of $S$ and $\Delta$ is the standard Laplacian.

It is known that the system $\{\sfz_j\}_1^n$ tends to condensate on the droplet under quite general conditions \cite{PH, Johansson, Deift, KS1999, HM2013, CGZ2014, A2021}, in the sense that as $n\to\infty$ the empirical measures $\frac{1}{n} \sum_{j=1}^{n} \delta_{\sfz_{j}}$ converge weakly to $\mu$ with high probability.

Consider now a smooth confining potential $Q_0$ on the plane whose droplet is $S_{0}$.
A case of some interest is obtained by placing the hard wall exactly along the edge of the droplet, i.e., we take $C= S_{0}$, where the equilibrium measure is still absolutely  continuous and of the form \eqref{eqform}.
 In this case, we obtain a so-called \textit{local
droplet} with a soft/hard edge. Such droplets have been studied in for example \cite{AKMW2020,HM2013,LM} and references therein. While the equilibrium measure is unchanged, the soft/hard edge produces some statistical effects near the edge. Interestingly, the concept of local droplets permits us to define some new and nontrivial ensembles,
such as the ``deltoid'' - a droplet with three maximal cusps which arises for the cubic potential $|z|^2+c\,\re(z^3)$ for a certain critical value of the constant $c$, see e.g. \cite{BK2012}.

However, the main case of interest for the present investigation is that of a hard wall in the bulk of the droplet. To study this case, we choose an external potential $Q_0$ giving rise to
a well-defined droplet $S_0$ and a closed subset $C\subset \Int S_{0}$, and we modify $Q_0$ to a potential $Q$ by defining it as $+\infty$ outside $C$. This has an effect even at the level of the equilibrium measure. Indeed, if the potential
 $Q_0$ is $C^2$-smooth in a neighborhood of $S_0$, then this effect is given by a balayage process which we briefly recall.

Let $\mu_0$ be the equilibrium measure with respect to the potential $Q_0$, given in \eqref{eqform} (with ``$S$'' and ``$Q$" replaced by ``$S_0$'' and ``$Q_{0}$"). Assuming some regularity of the boundary $\partial C$, the equilibrium measure $\mu_h$ corresponding to the potential $Q$ is then given by the formula (see \cite[Theorem II.5.12]{SaTo})
\begin{align}\label{balform}
\mu_h = \mu_{0}\cdot \chi_{C} + \mbox{Bal} \big( \mu_{0}|_{S_{0} \setminus C} , \partial C \big),
\end{align}
where $\mbox{Bal} \big( \mu_{0}|_{S_{0} \setminus C} , \partial C \big)$ is the balayage of $\mu_{0}|_{S_{0} \setminus C}$ onto the boundary $\partial C $. The formula \eqref{balform} expresses the fact that the portion $\mu_{0}|_{S_{0} \setminus C}$ is swept onto the boundary $\partial C$ according to the balayage operation, which preserves (up to a constant) the exterior logarithmic potential in the exterior of the droplet $S_{0}$. See \cite[Sections II.4 and II.5]{SaTo} as well as \cite{CFLV,Seo2021,JLM1993} for more details about the balayage.

The balayage part of \eqref{balform} is a density on the curve $\partial C$, so this part is singular with respect to the two-dimensional Lebesgue measure. We think of this balayage as a first approximation of the density for the particles which would have occupied the forbidden region outside of $C$, were it not for the hard wall. On a statistical level, in the generic case where $\Delta Q(z)>0$ for all $z \in \partial C$, the particles which are swept out of the forbidden region are expected to occupy a very narrow interface about the boundary $\partial C$ of width of order $1/n$. We call this interface the ``hard edge regime". The width $1/n$ is substantially smaller than the two-dimensional microscopic scale $1/\sqrt{n}$. We shall find below that on a $1/\sqrt{n}$-scale from $\partial C$, we obtain a transitional regime between hard edge and bulk statistics, which we call ``semi-hard edge regime". The three regimes (bulk, semi-hard edge, and hard edge) each gives rise to different kinds of statistical behavior, which we study below for a class of radially symmetric potentials.

 We remark that point-processes $\{\sfz_j\}_1^n$ of the above type can be identified
 with the eigenvalues of an $n \times n$ random normal matrix $M$, picked randomly according to the probability measure proportional to $e^{-n\, \mathrm{tr}\, Q(M)}dM$, where ``$\mathrm{tr}$" is the trace and $dM$ is the measure on the set of $n \times n$ normal matrices induced by the flat Euclidian metric of $\mathbb{C}^{n\times n}$  \cite{Mehta, CZ1998, ElbauFelder}. (Note that this makes precise the identification between eigenvalues and $\beta=2$ Coulomb gas processes mentioned above.)
 
 The process $\{\sfz_j\}_1^n$ can be thought of as a conditional process where the eigenvalue process associated with $Q_0$ is conditioned on the event that none of the eigenvalues fall outside of the closed set $C$.
 If $C\subset\Int S$, we are conditioning on a rare event.

  We mention in passing that for other conditional point processes, such as the zeros of Gaussian analytic functions conditioned on a hole event, the situation is drastically different because of the presence of a forbidden region around the singular part of the equilibrium measure \cite{GN2019, NishryWennman}.

\begin{remark}
Hard wall ensembles from Hermitian random matrix theory have been well-studied in the literature, see for example \cite{Fo, DeiftKrasVasi, CD2019, CharlierBessel, LCX2022, DXZ2022 bis, DZ2022}; see also \cite{CK} for a soft/hard edge. We remark that imposing a hard wall
in the interior of a one-dimensional droplet has a well-known global effect on the equilibrium measure, in contrast to \eqref{balform} which just alters the measure locally at the edge. However, this apparent contradiction is quickly dispelled if we note that a one-dimensional droplet consists of only edge and no interior
(regarded as a subset of $\C$).
\end{remark}


\subsection{Mittag-Leffler ensembles with a hard wall constraint} For what follows we will restrict our attention to radially symmetric potentials of the form
\begin{equation}\label{qml} Q_0(z)=|z|^{2b}-\tfrac {2\alpha} n\ln|z|,\end{equation}
where $b>0$ and $\alpha > -1$ are fixed parameters.
The unconstrained model Mittag-Leffler ensemble
is a configuration $\{\zeta_j\}_1^n$ picked randomly with respect to
the following joint probability density function
\begin{align}\label{def of point process}
\frac{1}{n!Z_{n}} \prod_{1 \leq j < k \leq n} |\zeta_{k} -\zeta_{j}|^{2} \prod_{j=1}^{n}|\zeta_{j}|^{2\alpha}e^{-n |\zeta_{j}|^{2b}}, \qquad \zeta_{1},\dots,\zeta_{n} \in \mathbb{C},
\end{align}
where $Z_{n}$ is the normalization constant.
It is well-known that the droplet $S_0$ corresponding to the potential \eqref{qml} is the disk of radius $b^{-\smash{\frac{1}{2b}}}$ centered at $0$; the density
 is given
 according to \eqref{eqform} by
 \begin{align}\label{cmsd}d\mu_0(z) =  \smash{\frac{b^{\smash{2}}}{\pi}}|z|^{\smash{2b-2}}d^{\smash{2}}z.
  \end{align}

\begin{remark}
The logarithmic and power-like singularities of \eqref{qml} at the origin are not strong enough to affect the equilibrium measure. The term ``Mittag-Leffler potential'' is from \cite{AK2013} and refers to a much broader class of potentials having similar kinds of singularities at the origin. The motivation for the terminology is that under some conditions, the local statistics near the origin can be described by a two-parametric Mittag-Leffler function \cite{AKS2018}.
\end{remark}

\medskip \medskip We now fix a parameter $\rho$ with $0<\rho<b^{-\frac{1}{2b}}$ and place a hard wall outside the circle $|z|=\rho$.
More precisely, we consider the probability density
\begin{align}\label{def of point process hard}
\frac{1}{n!\mathcal{Z}_{n}} \prod_{1 \leq j < k \leq n} |z_{k} -z_{j}|^{2} \prod_{j=1}^{n} e^{-nQ(z_{j})}, \qquad z_{1},\dots,z_{n} \in \mathbb{C},
\end{align}
where $\mathcal{Z}_{n}$ is the normalizing partition function and
\begin{align}\label{qdee}
Q(z) = \begin{cases}
|z|^{2b} - \frac{2\alpha}{n}\ln |z|, & \mbox{if } |z| \leq \rho, \\
+\infty, & \mbox{if } |z| > \rho.
\end{cases}
\end{align}
This gives the hard-wall Mittag-Leffler process $\{z_j\}_1^n$, conditioned on the forbidden region $\{|z|>\rho\}$. For brevity, we shall in the sequel refer to $\{z_j\}_1^n$ corresponding to the potential \eqref{qdee} as the \textit{restricted Mittag-Leffler process}.

The equilibrium  measure $\mu_h$ corresponding to the potential \eqref{qdee} can be easily computed using standard balayage techniques \cite{SaTo} (see also \cite[Section 4.1]{CFLV} or \cite{Seo2021} for details)
and is given by
\begin{align}\label{eq mes}
& \mu_{h}(d^{2}z) = \mu_{\mathrm{reg}}(d^{2}z) + \mu_{\mathrm{sing}}(d^{2}z),\nonumber
\\ & \mu_{\mathrm{reg}}(d^{2}z) := 2b^{2}r^{2b-1}dr\frac{d\theta}{2\pi}, \quad \mu_{\mathrm{sing}}(d^{2}z) := c_{\rho} \delta_{\rho}(r) dr \frac{d\theta}{2\pi},
\end{align}
where $z=re^{i\theta}$, $r>0$, $\theta \in (-\pi,\pi]$ and
\begin{align}\label{crho}c_{\rho} := \int_{\rho}^{b^{-\frac{1}{2b}}}  \hspace{-0.1cm} 2b^{2}r^{2b-1}dr = 1-b\rho^{2b}.\end{align}

Standard arguments \cite{Johansson,HM2013,A2021} show that with large probability, the empirical measures $\frac 1 n\sum \delta_{z_j}$ converge weakly to $\mu_h$ as $n\to\infty$.

Clearly, the restricted Mittag-Leffler process is an example of a rotation invariant ensemble, i.e., the joint probability density function \eqref{def of point process hard} remains unchanged if all $z_{j}$ are multiplied by the same unimodular constant $e^{i\beta}$, $\beta \in \mathbb{R}$.

\begin{figure}
\begin{center}
\begin{tikzpicture}[master]
\node at (0,0) {\includegraphics[width=4.2cm]{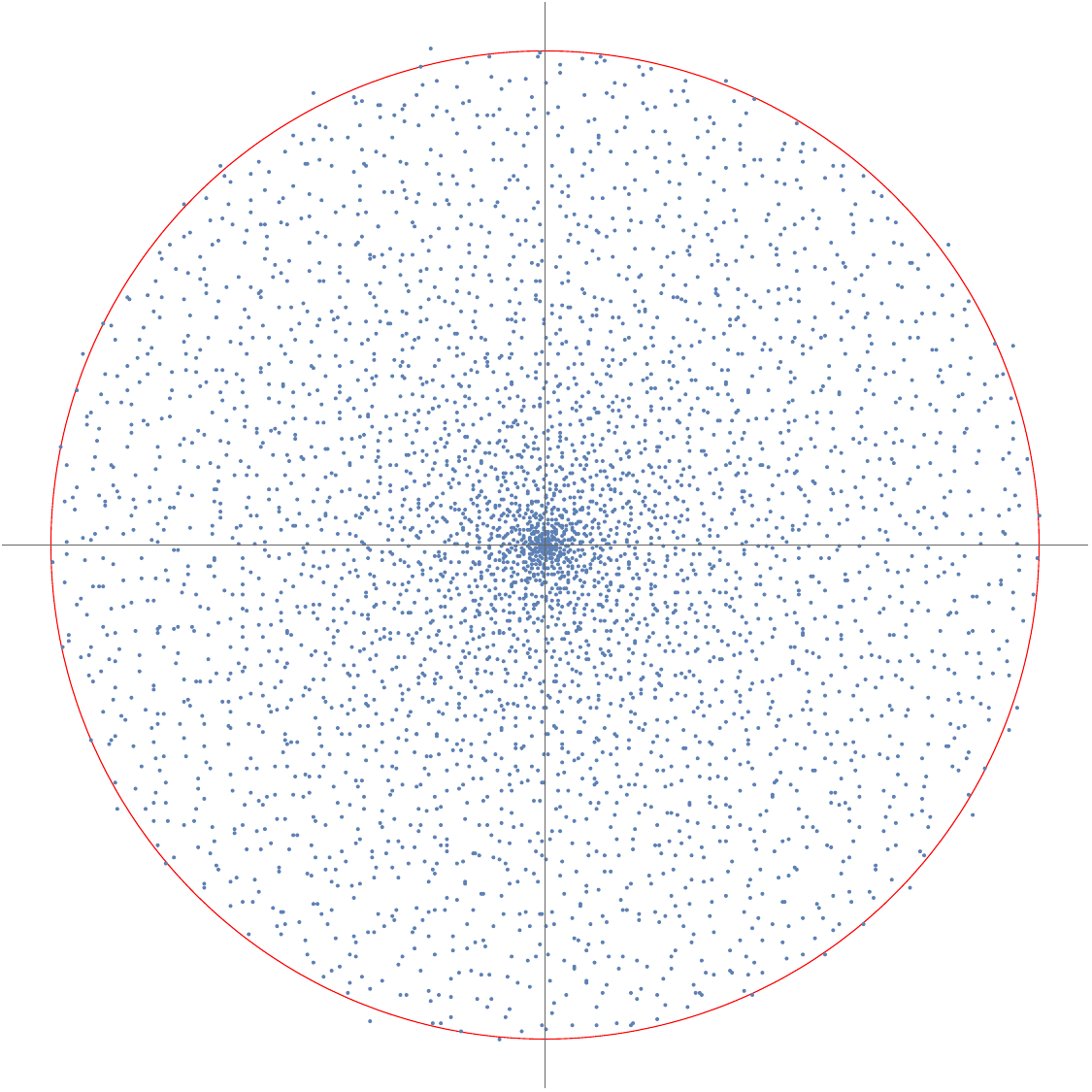}};
\node at (0,-2.3) {$b=\frac{1}{2}$};
\end{tikzpicture}
\begin{tikzpicture}[slave]
\node at (0,0) {\includegraphics[width=4.2cm]{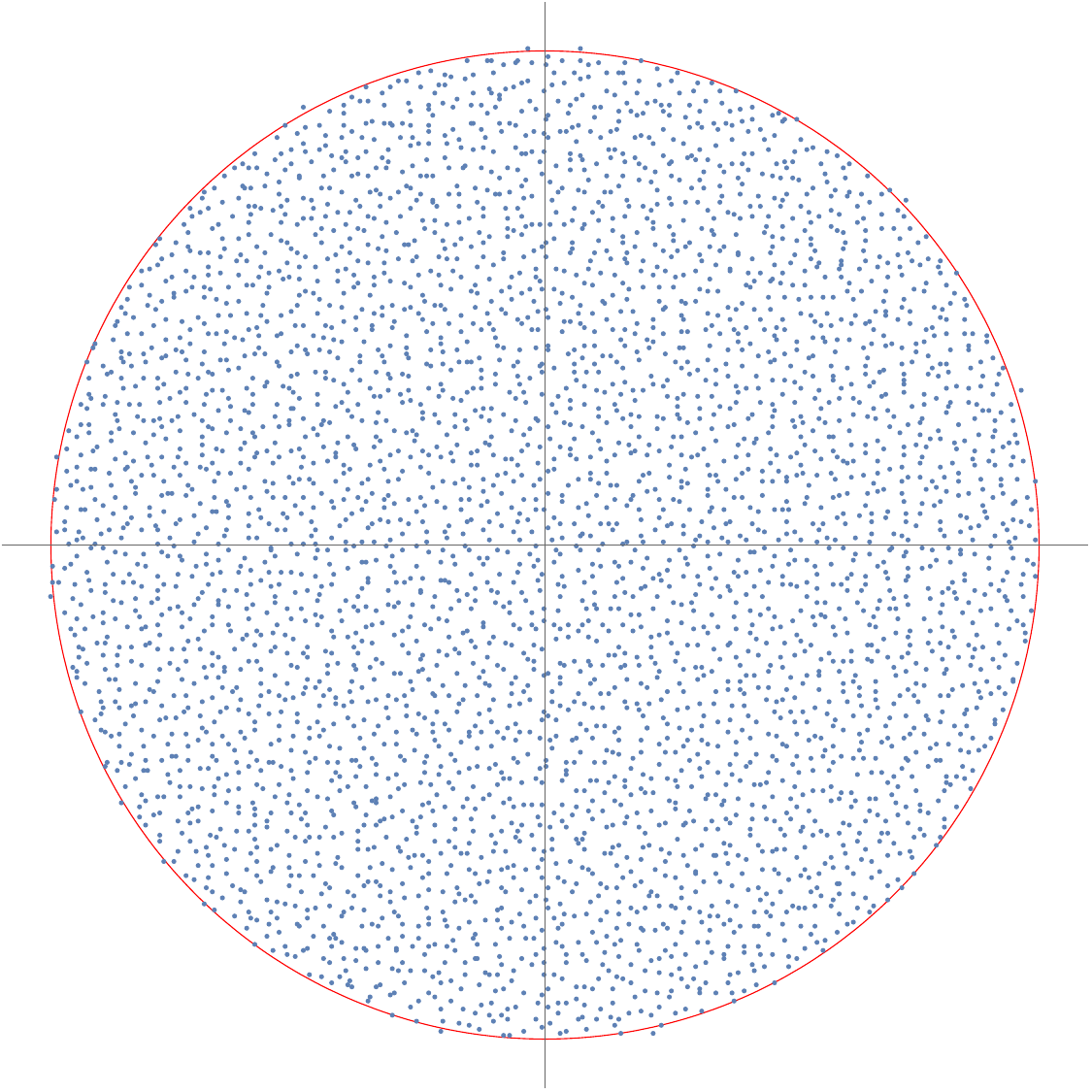}};
\node at (0,-2.3) {$b=1$};
\end{tikzpicture}
\begin{tikzpicture}[slave]
\node at (0,0) {\includegraphics[width=4.2cm]{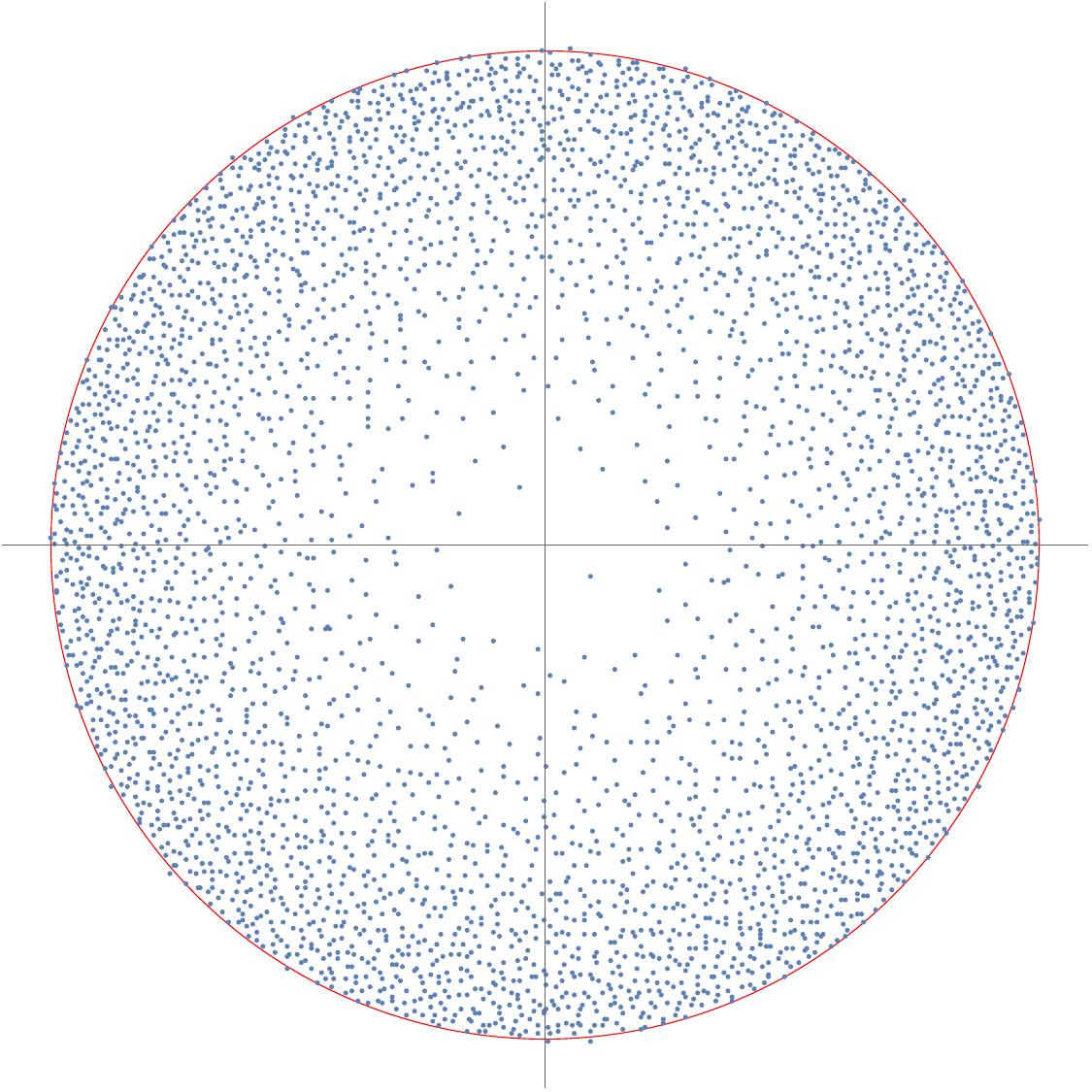}};
\node at (0,-2.3) {$b=2$};
\end{tikzpicture}
\end{center}
\begin{center}
\begin{tikzpicture}[master]
\node at (0,0) {\includegraphics[width=4.2cm]{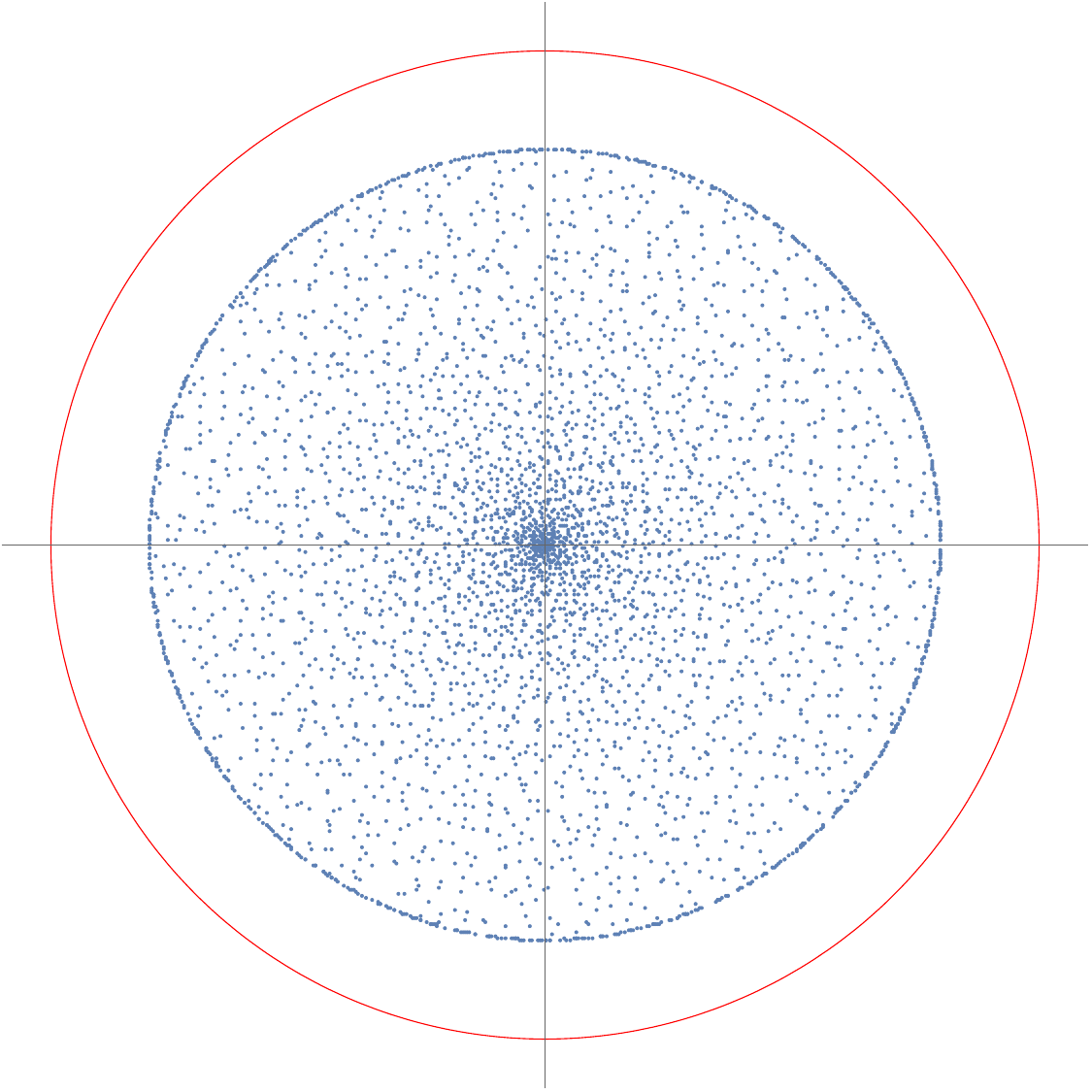}};
\node at (0,-2.3) {$b=\frac{1}{2}$};
\end{tikzpicture}
\begin{tikzpicture}[slave]
\node at (0,0) {\includegraphics[width=4.2cm]{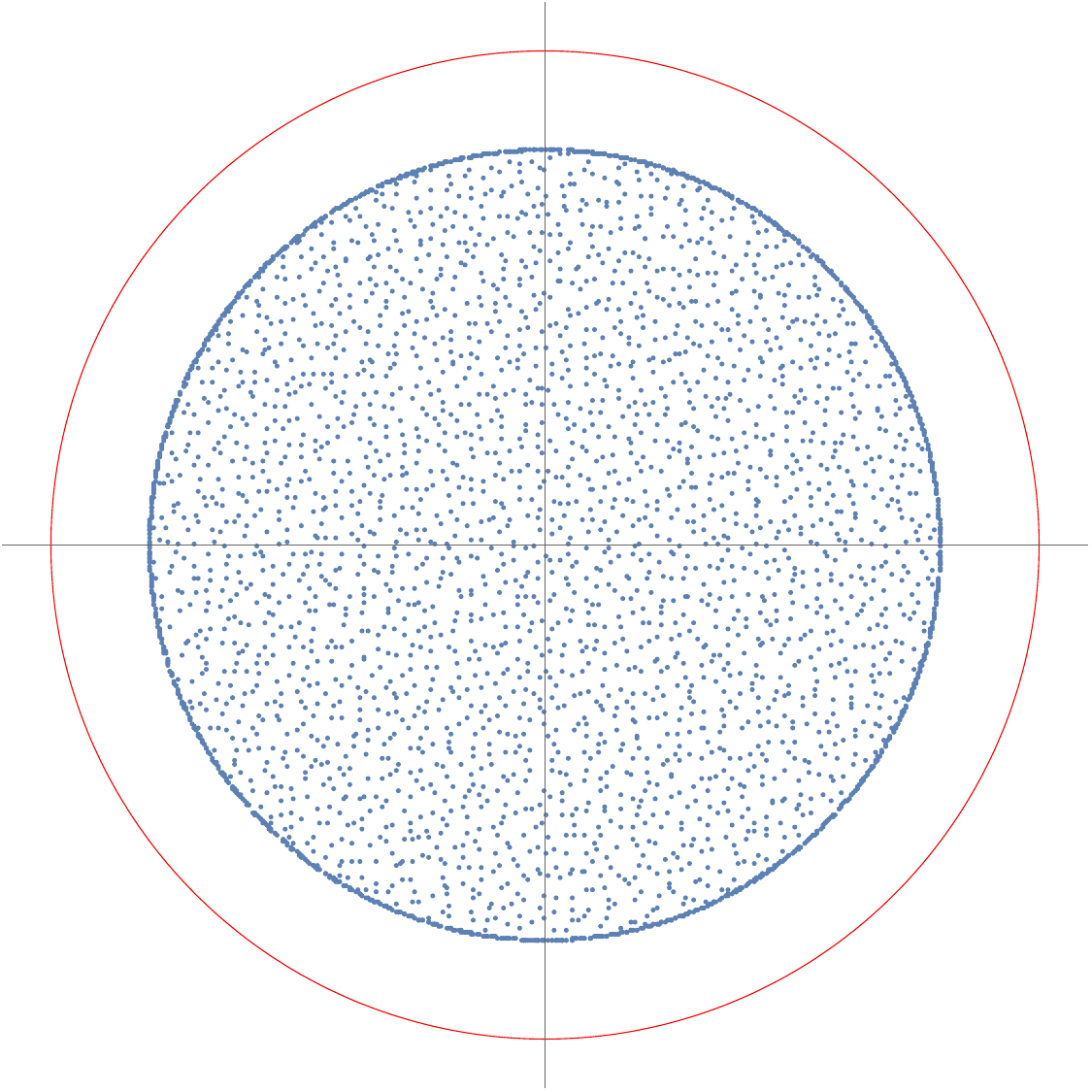}};
\node at (0,-2.3) {$b=1$};
\end{tikzpicture}
\begin{tikzpicture}[slave]
\node at (0,0) {\includegraphics[width=4.2cm]{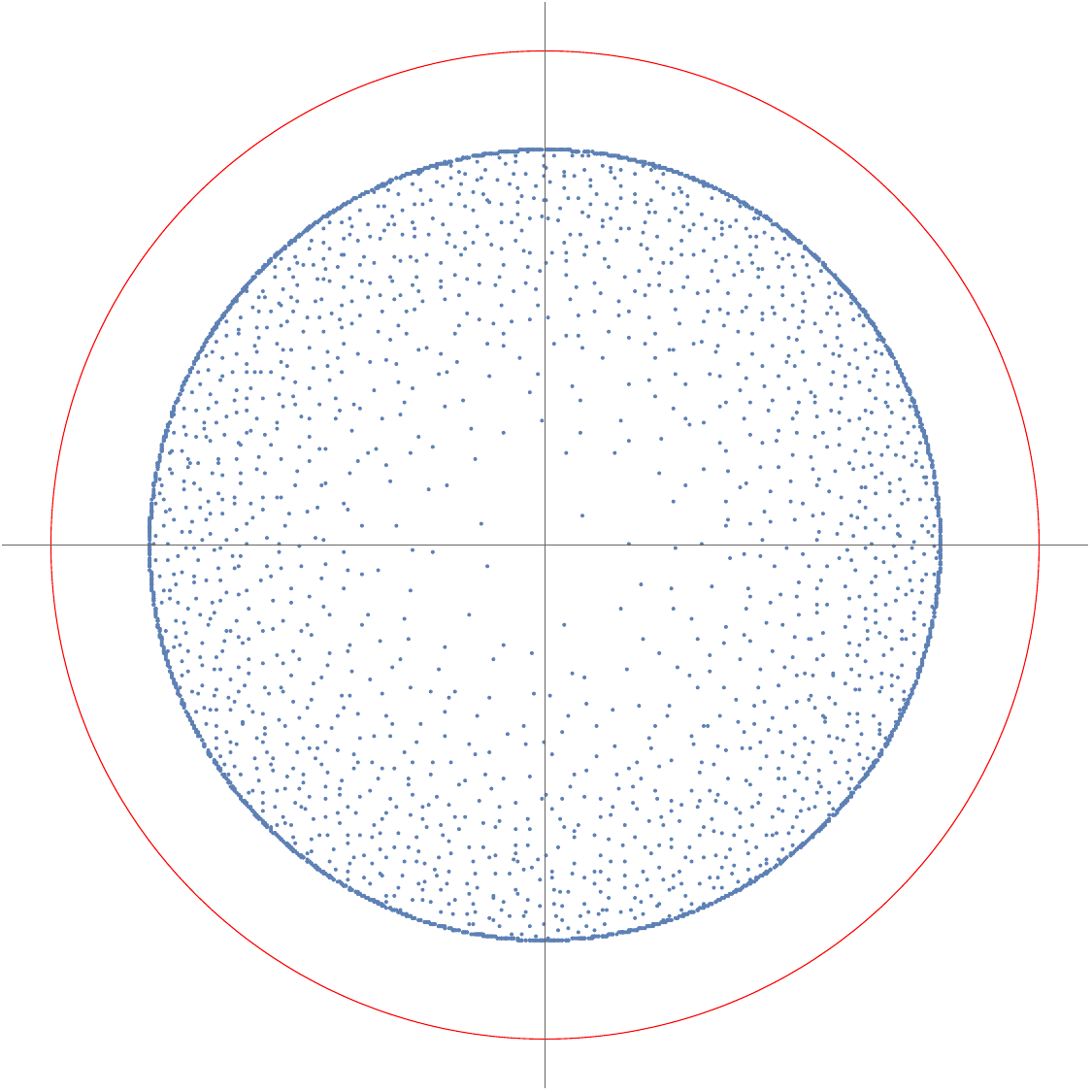}};
\node at (0,-2.3) {$b=2$};
\end{tikzpicture}
\end{center}
\vspace{-0.5cm}\caption{\label{fig: ML with hard wall}  Illustration of the point processes corresponding to \eqref{def of point process} (first row) and \eqref{def of point process hard} (second row) with $n=4096$, $\rho=\frac{4}{5}b^{-\frac{1}{2b}}$, $\alpha=0$ and the indicated values of $b$. In each plot, the red circle is $\{z \in \mathbb{C}: |z| = b^{-\frac{1}{2b}}\}$. A narrow interface about the
hard wall $|z|=\rho$, of width roughly $1/n$, accommodates the roughly $c_\rho n$ particles swept out from the forbidden region. The semi-hard regime of width roughly $1/\sqrt{n}$ is transitional between the hard edge and the bulk.}
\end{figure}
In this work we focus on the case $\rho < b^{-\frac{1}{2b}}$, which means that we are studying a hard wall in the bulk of the droplet $S_{0}$. The case of a soft/hard edge, i.e.,
$\rho=b^{-\frac{1}{2b}}$ could be included as well, but would require a somewhat different (and much simpler) analysis. We shall therefore omit this case.

Coulomb gas ensembles in the presence of a hard wall have previously been considered in the literature, but so far the focus has been on large gap probabilities (or partition functions) \cite{GHS1988, ForresterHoleProba, JLM1993, APS gap 2009, AK hole 2013, AIE gap 2014, AR Infinite Ginibre, GN2018, Charlier 2d gap} and on the local statistics \cite{ZS2000, NAKP2020, Seo2021}. We refer to \cite{AKM2019, Seo0, AKMW2020, BS2021,HM2013,LM} for studies of local droplets and local statistics near soft/hard edges.

In recent years, a lot of works dealing with the counting statistics of two dimensional point processes have appeared \cite{LeeRiser2016, CE2020, LMS2018, L et al 2019, ES2020, FenzlLambert, SDMS2020, Charlier 2d jumps, SDMS2021, AkemannSungsoo, ChLe2022}, see also \cite{Shirai} for an earlier work. A common feature of these works is that they all deal exclusively with either ``the bulk regime" or with ``the soft edge regime".

In this paper we study disk counting statistics of \eqref{def of point process hard} near the hard edge $\{|z|=\rho\}$. To be specific, let $\mathrm{N}(y):=\#\{z_{j}: |z_{j}| < y\}$ be the random variable that counts the number of points of \eqref{def of point process hard} in the disk of radius $y$ centered at $0$. Our main result is a precise asymptotic formula as $n \to + \infty$ for the multivariate moment generating function (MGF)
\begin{align}\label{moment generating function intro}
\mathbb{E}\bigg[ \prod_{j=1}^{m} e^{u_{j}\mathrm{N}(r_{j})} \bigg]
\end{align}
where $m \in \mathbb{N}_{>0}$ is arbitrary (but fixed), $u_{1},\dots,u_{m} \in \mathbb{R}$, and the radii $r_{1},\dots,r_{m}$ are merging at a critical speed. We consider several regimes:
\begin{align}
& \mbox{Hard edge: } & & 0 < r_{1} < \dots <r_{m}, \quad r_{\ell} = \rho \bigg( 1-\frac{t_{\ell}}{n} \bigg)^{\frac{1}{2b}}, & & t_{1}>\dots>t_{m}\geq 0, \label{def of rell hard} \\
& \mbox{Semi-hard edge: } & & 0 < r_{1} < \dots <r_{m}, \quad r_{\ell} = \rho \bigg( 1-\frac{\sqrt{2}\, \mathfrak{s}_{\ell}}{\rho^{b}\sqrt{n}} \bigg)^{\frac{1}{2b}}, & & \mathfrak{s}_{1}>\dots>\mathfrak{s}_{m}>0, \label{def of rell semi-hard} \\
& \mbox{Bulk: } & & 0 < r_{1} < \dots <r_{m}, \quad r_{\ell} = r \bigg( 1 + \frac{\sqrt{2}\, \mathfrak{s}_{\ell}}{r^{b}\sqrt{n}} \bigg)^{\frac{1}{2b}}, & & \mathfrak{s}_{1}<\dots<\mathfrak{s}_{m} \in \mathbb{R}, \; r < \rho. \label{def of rell bulk}
\end{align}
We emphasize that $\mathfrak{s}_{m}\neq 0$ in \eqref{def of rell semi-hard}. 

We shall prove that, as $n \to + \infty$, the joint MGF $\mathbb{E}\big[ \prod_{j=1}^{m} e^{u_{j}\mathrm{N}(r_{j})} \big]$ enjoys asymptotic expansions of the form
\begin{align}
& \exp \bigg(C_{1}n + C_{2}\ln n + C_{3} + \frac{C_{4}}{\sqrt{n}} + \bigO(n^{-\frac{3}{5}})\bigg), & & \mbox{for the hard edge}, \label{shape of asymp in hard edge} \\
& \exp \bigg(C_{1}n + C_{2}\sqrt{n} \hspace{0.14cm} + C_{3} + \frac{C_{4}}{\sqrt{n}} + \bigO\bigg(\frac{(\ln n)^{4}}{n}\bigg)\bigg), & & \mbox{for the semi-hard edge}, \label{shape of asymp in semi-hard edge} \\
& \exp \bigg(C_{1}n + C_{2}\sqrt{n} \hspace{0.14cm} + C_{3} + \frac{C_{4}}{\sqrt{n}} + \bigO\bigg(\frac{(\ln n)^{2}}{n}\bigg)\bigg), & & \mbox{for the bulk}. \label{shape of asymp in bulk}
\end{align}
For each of these three regimes, we determine $C_{1},\dots,C_{4}$ explicitly.

As can be seen from \eqref{shape of asymp in hard edge}--\eqref{shape of asymp in bulk}, the counting statistics in the hard edge regime are drastically different from the counting statistics in the bulk and semi-hard edge regimes (and also very different from the counting statistics in the soft edge regime \cite{Charlier 2d jumps, ChLe2022}). Indeed, at the hard edge the subleading term is proportional to $\ln n$, while in all other regimes it is proportional to $\sqrt{n}$. Furthermore, in the hard edge regime, the leading coefficient $C_{1}$ will be shown to depend on the parameters $u_{1},\ldots,u_{m}$ in a highly non-trivial non-linear way.

As we show below, the above asymptotic expansions have
several interesting consequences; for example $\mbox{Var}[\mathrm{N}(r_{j})] \asymp n$ in the hard edge regime, while $\mbox{Var}[\mathrm{N}(r_{j})] \asymp \sqrt{n}$ in the three other regimes (actually, a similar statement also holds for the higher order cumulants, as can be seen by comparing Corollary \ref{coro:correlation hard} with Corollary \ref{coro:correlation semihard} and \cite[Corollary 1.5]{ChLe2022}). This indicates that the counting statistics near a hard edge are considerably wilder than near a soft edge, in the bulk or near a semi-hard edge. From a technical point of view, we also found the hard edge regime to be significantly harder to analyze than the three other regimes. For example, our control of the error term in \eqref{shape of asymp in hard edge} is less precise than in \eqref{shape of asymp in semi-hard edge} and \eqref{shape of asymp in bulk}.

In contrast to earlier works on smooth and non-smooth linear statistics on the soft edge and bulk regimes, the leading coefficient $C_{1}$ in the hard edge regime is \textit{not} given by the integral of the test function (in our case $\sum_{j=1}^{m}u_{j} \chi_{(0,r_{j})}(z)$) against the equilibrium measure $\mu_{h}$, and in fact it depends in a non-linear way on the parameters $u_j$. In a sense this behavior becomes less surprising if we recall that we are not considering fixed test functions, but rather increasing sequences corresponding to characteristic functions of expanding discs, and it is known due to Seo \cite{Seo2021} that the 1-point function varies rather dramatically in the hard edge regime. On the other hand, the fact that the relationship becomes non-linear might be less clear on this intuitive level.
See also Remark \ref{remark:C1 hard edge} below for more about this.


The transition from the hard edge regime to the bulk regime is very subtle. The semi-hard edge regime lies in between, i.e., it is genuinely different from the hard edge and the bulk regimes. To the best of our knowledge, it seems that this regime has been unnoticed (or at least unexplored) in the literature so far.\footnote{In a different but somewhat related context, namely in the study of the statistics of the largest modulus of the complex Ginibre ensemble, a new intermediate regime was also recently discovered in \cite{LGMS2018}.} Our results for this regime can be seen as a first step towards understanding the hard-edge-to-bulk transition. However, the fact that the subleading terms in the hard edge and semi-hard edge regimes are of different orders indicates that there is still (at least) one intermediate regime where a critical transition takes place. We will return to this issue in a follow-up work.

As corollaries of our various results on the generating function \eqref{moment generating function intro}, we also provide central limit theorems for the joint fluctuations of $\mathrm{N}(r_{1}),\dots,\mathrm{N}(r_{m})$, and precise asymptotic formulas for all cumulants of these random variables (both at the hard edge and at the semi-hard edge). Our results for the hard edge and semi-hard edge regimes seem to be new, even for $m=1$. Our results about the bulk regime are less novel. Indeed, in this regime the asymptotics of the MGF have been investigated in various settings \cite{CE2020, L et al 2019, FenzlLambert, Charlier 2d jumps, ChLe2022}: see \cite[Proposition 8.1]{CE2020} for second order asymptotics of the one-point MGF of counting statistics of general domains in Ginibre-type ensembles; see \cite{L et al 2019} for second order asymptotics of the one-point MGF of the disk counting statistics of rotation-invariant ensembles with a general potential; see \cite{FenzlLambert} for third order asymptotics for the one-point MGF of disk counting statistics of Ginibre-type ensembles; and see \cite{Charlier 2d jumps, ChLe2022} for fourth order asymptotics for the $m$-point MGF of disk counting statistics in the Mittag-Leffler ensemble \eqref{def of point process}. Both the bulk and the soft edge regimes were investigated in \cite{Charlier 2d jumps, ChLe2022}; however in \cite{Charlier 2d jumps} the radii of the disks were taken fixed, while in \cite{ChLe2022} all radii were assumed to merge at the critical speed $\sim \frac{1}{\sqrt{n}}$ (in this critical regime one observes non-trivial correlations in the disk counting statistics). As it turns out, the bulk statistics of \eqref{def of point process} and \eqref{def of point process hard} are identical up to exponentially small errors (in other words, the points in the bulk almost do not feel the hard wall). Our formulas for the bulk regime \eqref{def of rell bulk} are in fact \textit{identical} to the corresponding formulas in \cite{ChLe2022} (the proof is also almost identical, we only have to handle some additional exponentially small error terms). We have nevertheless decided to include a very short section in this paper on the bulk regime for completeness. We also point out that for $C^2$-smooth test functions $f$ on the plane, the asymptotic normality of fluctuations was worked out quite generally in \cite{AM}, for potentials having a connected droplet. In this case the asymptotic variance of fluctuations is
given by a Dirichlet norm $\frac 1 {4\pi}\int|\nabla f^S(z)|^2\, d^2 z$, where $f^S$ equals $f$ in $S$ and is the bounded harmonic extension of $f|_S$ outside of $S$.

The presentation of our results is organized as follows: Subsection \ref{subsection: hard edge regime} treats the hard edge regime, Subsection \ref{subsection: semi-hard edge regime} the semi-hard edge regime, and Subsection \ref{subsection: bulk regime} the bulk regime.

\subsection{Results for the hard edge regime}\label{subsection: hard edge regime}
Let $r_{1},\dots,r_{m}$ be as in \eqref{def of rell hard}, let $\vec{t}:=(t_{1},\dots,t_{m})$ be such that $t_{1}>\dots>t_{m}\geq 0$, let $\vec{u}:=(u_{1},\dots,u_{m}) \in \mathbb{R}^{m}$, and define
\begin{align}
& f(x;\vec{t},\vec{u}) = - \bigg(\frac{b\rho^{2b}}{x-b\rho^{2b}} + \frac{\alpha}{b} \bigg)\frac{\mathsf{T}_{1}(x;\vec{t},\vec{u})}{1 + \mathsf{T}_{0}(x;\vec{t},\vec{u})}-\frac{x}{2b} \frac{\mathsf{T}_{2}(x;\vec{t},\vec{u})}{1 + \mathsf{T}_{0}(x;\vec{t},\vec{u})}, \label{def of f hard}\\
& \mathsf{T}_{j}(x;\vec{t},\vec{u}) = \sum_{\ell=1}^{m} \omega_{\ell}t_{\ell}^{j}e^{-\frac{t_{\ell}}{b}(x-b\rho^{2b})}, \qquad j \geq 0, \label{def of T hard}
	\\
& \Omega(\vec{u}) = 1 + \mathsf{T}_{0}(b \rho^{2b};\vec{t},\vec{u}) = e^{u_{1}+\dots+u_{m}}, \label{def of Omega intro}
\end{align}
where
\begin{align}\label{def of Omega j intro}
\omega_{\ell} = \omega_{\ell}(\vec{u}) = \begin{cases}
e^{u_{\ell}+\dots+u_{m}}-e^{u_{\ell+1}+\dots+u_{m}}, & \mbox{if } \ell < m, \\
e^{u_{m}}-1, & \mbox{if } \ell=m, \\
1, & \mbox{if } \ell=m+1.
\end{cases}
\end{align}
Recall that the complementary error function is defined by
\begin{align}\label{def of erfc}
\mathrm{erfc} (t) = \frac{2}{\sqrt{\pi}}\int_{t}^{\infty} e^{-x^{2}}dx.
\end{align}
Throughout the paper $\ln(\cdot)$ denotes the principal branch of the logarithm and $D_\delta(z_0) = \{z \in \mathbb{C} : |z - z_0| < \delta\}$ denotes an open disk of radius $\delta$ centered at $z_0 \in \mathbb{C}$.

\begin{theorem}\label{thm:main thm hard}(Merging radii at the hard edge)

\noindent Let $m \in \mathbb{N}_{>0}$,  $b>0$, $\rho \in (0,b^{-\frac{1}{2b}})$, $t_{1}>\dots>t_{m} \geq 0$, and $\alpha > -1$ be fixed parameters, and for $n \in \mathbb{N}_{>0}$, define
\begin{align}\label{rellhardedge}
r_{\ell} = \rho \bigg( 1-\frac{t_{\ell}}{n} \bigg)^{\frac{1}{2b}}, \qquad \ell=1,\dots,m.
\end{align}
For any fixed $x_{1},\dots,x_{m} \in \mathbb{R}$, there exists $\delta > 0$ such that
\begin{align}\label{asymp in main thm hard}
\mathbb{E}\bigg[ \prod_{j=1}^{m} e^{u_{j}\mathrm{N}(r_{j})} \bigg] = \exp \bigg( C_{1} n + C_{2} \ln n + C_{3} +  \frac{C_{4}}{\sqrt{n}} + \bigO\big(n^{-\frac{3}{5}}\big)\bigg), \qquad \mbox{as } n \to + \infty
\end{align}
uniformly for $u_{1} \in D_\delta(x_1),\dots,u_{m} \in D_\delta(x_m)$, where $\{C_{j}=C_{j}(\vec{u})\}_{j=1}^{4}$ are given by
\begin{align*}
C_{1} = &\; b \rho^{2b} \sum_{j=1}^{m}u_{j} + \int_{b\rho^{2b}}^{1} \ln(1+\mathsf{T}_{0}(x;\vec{t},\vec{u}))dx,
	\\
C_{2} = & - \frac{b\rho^{2b}}{2} \frac{\mathsf{T}_{1}(b\rho^{2b};\vec{t},\vec{u})}{\Omega(\vec{u})} = - \frac{b\rho^{2b}}{2}\frac{\sum_{\ell=1}^{m}t_{\ell}\omega_{\ell}}{e^{u_{1}+\dots+u_{m}}},
	\\
C_{3} = & - \frac{1}{2} \sum_{j=1}^{m}u_{j} + \frac{1}{2} \ln\big(1+\mathsf{T}_{0}(1;\vec{t},\vec{u})\big) + \int_{b\rho^{2b}}^{1} \bigg\{ f(x;\vec{t},\vec{u}) + \frac{b \rho^{2b} \mathsf{T}_{1}(b\rho^{2b};\vec{t},\vec{u})}{\Omega(\vec{u}) (x-b\rho^{2b})} \bigg\}dx
	\\
& + b\rho^{2b} \frac{\mathsf{T}_{1}(b\rho^{2b};\vec{t},\vec{u})}{\Omega(\vec{u})} \ln \bigg( \frac{b\rho^{b}}{\sqrt{2\pi}(1-b\rho^{2b})} \bigg),
	\\
C_{4} = &\; \sqrt{2} \,\mathcal{I} \,b\rho^{b}\bigg( \rho^{2b} \frac{ \mathsf{T}_{2}(b\rho^{2b};\vec{t},\vec{u}) }{ \Omega(\vec{u}) } - \frac{ \mathsf{T}_{1}(b\rho^{2b};\vec{t},\vec{u}) }{ \Omega(\vec{u}) } - \rho^{2b} \frac{ \mathsf{T}_{1}(b\rho^{2b};\vec{t},\vec{u})^{2} }{ \Omega(\vec{u})^{2} } \bigg),
\end{align*}
and the real number $\mathcal{I} \in \mathbb{R}$ is given by
\begin{align}
& \mathcal{I} = \int_{-\infty}^{+\infty} \bigg\{ \frac{y\, e^{-y^{2}}}{\sqrt{\pi}\, \mathrm{erfc}(y)} - \chi_{(0,+\infty)}(y) \bigg[ y^{2}+\frac{1}{2} \bigg] \bigg\}dy \approx -0.81367. \label{def of I}
\end{align}
In particular, since $\mathbb{E}\big[ \prod_{j=1}^{m} e^{u_{j}\mathrm{N}(r_{j})} \big]$ depends analytically on $u_{1},\dots,u_{m} \in \mathbb{C}$ and is strictly positive for $u_{1},\dots,u_{m} \in \mathbb{R}$, the asymptotic formula \eqref{asymp in main thm hard} together with Cauchy's formula shows that
\begin{align}\label{der of main result hard}
\partial_{u_{1}}^{k_{1}}\dots \partial_{u_{m}}^{k_{m}} \bigg\{ \ln \mathbb{E}\bigg[ \prod_{j=1}^{m} e^{u_{j}\mathrm{N}(r_{j})} \bigg] - \bigg( C_{1} n + C_{2} \ln n + C_{3} +  \frac{C_{4}}{\sqrt{n}} \bigg) \bigg\} = \bigO\big(n^{-\frac{3}{5}}\big), \quad \mbox{as } n \to + \infty,
\end{align}
for any $k_{1},\dots,k_{m}\in \mathbb{N}$, and $u_{1},\dots,u_{m}\in \mathbb{R}$.
\end{theorem}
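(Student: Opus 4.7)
The plan is to exploit the rotation invariance of the density \eqref{def of point process hard}. Because the weight is radial, the orthogonal polynomials associated to $Q$ are monomials $z^k$, with squared norms
\[
h_k=\frac{\pi}{b}\,n^{-a_k}\gamma(a_k,n\rho^{2b}),\qquad a_k:=\frac{k+\alpha+1}{b},
\]
where $\gamma$ is the lower incomplete gamma function. Using the representation $\prod_{j=1}^m e^{u_j\mathrm N(r_j)}=1+\sum_{\ell=1}^m\omega_\ell\chi_{|z|<r_\ell}$, the insertion modifies each norm to $\tilde h_k=h_k+\frac{\pi}{b}n^{-a_k}\sum_{\ell=1}^{m}\omega_\ell\gamma(a_k,s_\ell)$ with $s_\ell:=n\rho^{2b}(1-t_\ell/n)$, and the classical identity for rotation-invariant ensembles gives $\mathbb E\!\big[\prod_j e^{u_j\mathrm N(r_j)}\big]=\prod_{k=0}^{n-1}\tilde h_k/h_k$. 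Taking the logarithm, the problem reduces to the asymptotic analysis of the single sum
\[
S_n(\vec u)=\sum_{k=0}^{n-1}\ln\!\bigg(1+\sum_{\ell=1}^m\omega_\ell\,\frac{\gamma(a_k,s_\ell)}{\gamma(a_k,n\rho^{2b})}\bigg).
\]

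I would then split the sum according to how $a_k$ compares with $n\rho^{2b}$. Choosing an intermediate cutoff $M=M(n)\to\infty$ slowly, I distinguish an \emph{inner bulk} $a_k\le n\rho^{2b}-M\sqrt n$, an \emph{outer bulk} $a_k\ge n\rho^{2b}+M\sqrt n$, and a \emph{transition range} $|a_k-n\rho^{2b}|\le M\sqrt n$. On the inner bulk, $\gamma(a_k,s_\ell)$ and $\gamma(a_k,n\rho^{2b})$ both equal $\Gamma(a_k)$ up to errors of order $e^{-cM^2}$, so each summand equals $u_1+\cdots+u_m$ modulo exponentially small errors; this yields the leading piece $bn\rho^{2b}(u_1+\cdots+u_m)$ of $C_1 n$. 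On the outer bulk, repeated integration by parts in $\gamma(a_k,x)=\int_0^x u^{a_k-1}e^{-u}du$ produces, uniformly for $x:=k/n$ in that range,
\[
\frac{\gamma(a_k,s_\ell)}{\gamma(a_k,n\rho^{2b})}=e^{-\tfrac{t_\ell}{b}(x-b\rho^{2b})}\bigg(1-\frac{b\,t_\ell\rho^{2b}}{n(x-b\rho^{2b})}+\bigO\!\big(n^{-2}(x-b\rho^{2b})^{-2}\big)\bigg),
\]
and Euler-Maclaurin converts the corresponding sum into $n\int_{b\rho^{2b}}^{1}\ln(1+\mathsf T_0(x;\vec t,\vec u))\,dx$ (the outer-bulk contribution to $C_1 n$) plus a $1/n$-integral whose integrand has a simple pole at $x=b\rho^{2b}$ of residue proportional to $\mathsf T_1(b\rho^{2b};\vec t,\vec u)/\Omega(\vec u)$; truncation at $x=b\rho^{2b}+bM/\sqrt n$ therefore produces a $\frac12\ln n-\ln M$ singularity whose $\ln n$ part is exactly the stated $C_2\ln n$, and whose finite renormalised part matches the integral defining $C_3$.

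The crux is the transition range. There I would invoke Temme's uniform expansion of the regularised incomplete gamma function: for $a$ large and $x=a+\sqrt{2a}\,\eta$,
\[
\frac{\gamma(a,x)}{\Gamma(a)}=\tfrac12\operatorname{erfc}(-\eta)-\frac{e^{-\eta^2}}{\sqrt{2\pi a}}\,R(\eta)+\bigO\!\big(a^{-1}\big),
\]
with $R$ smooth and analytic in $\eta$. Parametrising $a_k=n\rho^{2b}+\sqrt n\,v$ and setting $y=v/(\rho^b\sqrt 2)$ gives $\eta_0=-y+\bigO(n^{-1/2})$ and $\eta_\ell=-y-t_\ell\rho^b/\sqrt{2n}+\bigO(n^{-1})$, so each ratio $\gamma(a_k,s_\ell)/\gamma(a_k,n\rho^{2b})$ admits an explicit expansion in $1/\sqrt n$ expressed through ratios of $\operatorname{erfc}$'s. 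Summing over the transition via Euler-Maclaurin in $v\in[-M,M]$, one obtains (i) a leading piece $2bM\sqrt n\sum u_j$ that cancels the corresponding boundary shifts from the inner and outer bulks, (ii) an $M^2$-divergence proportional to $\mathsf T_1(b\rho^{2b};\vec t,\vec u)/\Omega$ that cancels against the Taylor remainder of the outer-bulk integrand at $x=b\rho^{2b}$, (iii) the $\ln M$ divergence that cancels the $\ln M$ from the outer-bulk $1/n$ term, and finally (iv) the $1/\sqrt n$ contribution. Step (iv) is where the constant $\mathcal I$ enters: the Temme remainder $R$ produces an integrand proportional to $y\,e^{-y^2}/(\sqrt\pi\operatorname{erfc}(y))$, whose divergent asymptotic $\chi_{(0,\infty)}(y)(y^2+\tfrac12)$ at $y\to+\infty$ exactly renormalises the corresponding outer-bulk $1/\sqrt n$ Euler-Maclaurin remainder, yielding the definition \eqref{def of I} of $\mathcal I$ and hence the coefficient $C_4$.

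I expect this multi-level matching to be the main obstacle: the logarithm in $S_n(\vec u)$ must be expanded to sufficient precision uniformly across the transition range, tracking the Temme remainder and the Euler-Maclaurin corrections to two subleading orders and verifying the cancellation of the $M$-dependence to the corresponding order. Balancing the Temme error $\bigO(a^{-1})\cdot M\sqrt n$, the Euler-Maclaurin error $\bigO(1/(M\sqrt n))$, and the Taylor error $\bigO(M^3/\sqrt n)$ forces an intermediate cutoff such as $M\sim n^{1/10}$, which is what produces the suboptimal bound $\bigO(n^{-3/5})$ (contrast with the semi-hard edge setting, where no such matching is needed and the cleaner error $\bigO((\log n)^4/n)$ prevails). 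Finally, the uniformity in $u_j\in D_\delta(x_j)$ is automatic, since all the above estimates are analytic in $\vec u$ on a small polydisk around real $\vec u$: the denominators $1+\mathsf T_0(x;\vec t,\vec u)$ and $1+\sum_\ell\omega_\ell\,\gamma(a_k,s_\ell)/\gamma(a_k,n\rho^{2b})$ stay bounded away from zero for $\delta$ small. The derivative statement \eqref{der of main result hard} then follows at once from Cauchy's integral formula applied to the difference in \eqref{asymp in main thm hard}.
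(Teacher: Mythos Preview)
Your proposal is correct and follows essentially the same strategy as the paper: the exact product formula via rotation invariance, the three-way split into inner bulk/transition/outer bulk with Temme's uniform asymptotics in the transition layer, the matching of $\ln M$ and polynomial-in-$M$ divergences between the outer bulk and the transition, and the choice $M=n^{1/10}$ leading to the $\bigO(n^{-3/5})$ error. The only caveat is that your heuristic error balance (Temme $\sim M/\sqrt{n}$, Euler--Maclaurin $\sim 1/(M\sqrt{n})$, Taylor $\sim M^3/\sqrt{n}$) does not actually single out $M=n^{1/10}$; the paper's genuine competing errors are $\sqrt{n}/M^{11}$, $1/M^6$, $1/(M\sqrt{n})$, $M^4/n$, and $M^{14}/n^2$ (coming from carrying the Temme expansion to several orders and from the $(x-b\rho^{2b})^{-k}$ singularities in the outer-bulk integrand), and it is \emph{their} balance that forces $M=n^{1/10}$.
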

\begin{remark}\label{remark:C1 hard edge}
The leading coefficient in the asymptotics of moment generating functions of linear statistics with respect to a fixed, bounded continuous test function $g$ is of course given by the integral of $g$ against the relevant equilibrium measure.
However, in the hard edge regime of Theorem \ref{thm:main thm hard}, we rather use a sequence $g=g_n$ of test-functions, given in terms of characteristic functions of expanding discs  of radii
\eqref{rellhardedge} by
$g_n(z) = \sum_{j=1}^{m}u_{j} \chi_{(0,r_{j})}(z)$.

A direct computation using \eqref{eq mes} shows that, as $n \to + \infty$,
\begin{align*}
\int g_n(x) d\mu_{h}(x) = \begin{cases}
\sum_{j=1}^{m}u_{j} \int_{0}^{r_{j}} 2b^{2}r^{2b-1}dr = b\rho^{2b} \sum_{j=1}^{m}u_{j} +o(1), & \mbox{if } t_{m}>0, \\
\sum_{j=1}^{m}u_{j} \int_{0}^{r_{j}} 2b^{2}r^{2b-1}dr + u_{m} c_{\rho} = b\rho^{2b} \sum_{j=1}^{m}u_{j} + u_{m} c_{\rho} +o(1), & \mbox{if } t_{m}=0,
\end{cases}
\end{align*}
where $c_\rho$ is given by \eqref{crho}.

Since $b\rho^{2b} \sum_{j=1}^{m}u_{j} \neq C_{1} \neq b\rho^{2b} \sum_{j=1}^{m}u_{j} + u_{m} c_{\rho}$, we see that in the hard edge regime, even the leading coefficient $C_{1}$ cannot straightforwardly be obtained from the equilibrium measure, which might be surprising at first sight. 
What is even more surprising is that $C_{1}$ is not even linear in $u_{1},\ldots,u_{m}$ (this contrasts with all previously studied regimes, and also with the semi-hard edge regime).
\end{remark}

 For $\vec{j} \in (\mathbb{N}^{m})_{>0} := \{\vec{j}=(j_{1},\dots,j_{m}) \in \mathbb{N}: j_{1}+\dots+j_{m}\geq 1\}$, the joint cumulant $\kappa_{\vec{j}}=\kappa_{\vec{j}}(r_{1},\dots,r_{m};n,b,\alpha)$ of $\mathrm{N}(r_{1}), \dots, \mathrm{N}(r_{m})$  is defined by
\begin{align}\label{joint cumulant}
\kappa_{\vec{j}}=\kappa_{j_{1},\dots,j_{m}}:=\partial_{\vec{u}}^{\vec{j}} \ln \mathbb{E}[e^{u_{1}\mathrm{N}(r_{1})+\dots + u_{m}\mathrm{N}(r_{m})}] \Big|_{\vec{u}=\vec{0}},
\end{align}
where $\partial_{\vec{u}}^{\vec{j}}:=\partial_{u_{1}}^{j_{1}}\dots \partial_{u_{m}}^{j_{m}}$. In particular,
\begin{align*}
\mathbb{E}[\mathrm{N}(r)] = \kappa_{1}(r), \qquad \mbox{Var}[\mathrm{N}(r)] = \kappa_{2}(r) =\kappa_{(1,1)}(r,r), \qquad \mbox{Cov}[\mathrm{N}(r_{1}),\mathrm{N}(r_{2})] = \kappa_{(1,1)}(r_{1},r_{2}).
\end{align*}
Recall from \eqref{eq mes}--\eqref{crho} that $c_{\rho} = 1-b\rho^{2b} = \int\mu_{\mathrm{sing}}(d^{2}z)$, i.e. $c_{\rho}$ is the density of particles accumulating near the hard-edge as $n\to +\infty$. It turns out that the asymptotics of $\mathbb{E}[\mathrm{N}(r_{\ell})]$ and $\mathrm{Cov}(\mathrm{N}(r_{\ell}),\mathrm{N}(r_{k}))$, which are obtained in Corollary \ref{coro:correlation hard} below, are more elegantly described in terms of $c_{\rho}$, as well as the new parameter
\begin{align}\label{def of sell hard edge}
& s_{\ell} := \frac{t_{\ell}}{b}(1-b\rho^{2b}) =  \frac{c_{\rho}n}{b} \bigg( 1- \bigg( \frac{r_{\ell}}{\rho} \bigg)^{2b} \bigg) = 2 \cdot \frac{c_{\rho}n}{2\pi \rho} \cdot 2\pi (\rho-r_{\ell}) \big( 1+\bigO(n^{-1}) \big).
\end{align}
\begin{corollary}[Hard edge]\label{coro:correlation hard}
Let $m \in \mathbb{N}_{>0}$, $b>0$, $\rho \in (0,b^{-\frac{1}{2b}})$, $\vec{j} \in (\mathbb{N}^{m})_{>0}$, $\alpha > -1$,  and $t_{1}>\dots>t_{m} > 0$ be fixed. Define $s_{1},\ldots,s_{m}$ as in \eqref{def of sell hard edge}. For $n \in \mathbb{N}_{>0}$, define $\{r_\ell\}_{\ell =1}^m$ by \eqref{rellhardedge}.

(a) The joint cumulant $\kappa_{\vec{j}}$ satisfies
\begin{align}\label{asymp cumulant hard edge}
\kappa_{\vec{j}} = \partial_{\vec{u}}^{\vec{j}}C_{1}\big|_{\vec{u}=\vec{0}} \; n + \partial_{\vec{u}}^{\vec{j}}C_{2}\big|_{\vec{u}=\vec{0}}  \;\ln{n} + \partial_{\vec{u}}^{\vec{j}}C_{3}\big|_{\vec{u}=\vec{0}} +  \frac{\partial_{\vec{u}}^{\vec{j}}C_{4}\big|_{\vec{u}=\vec{0}}}{\sqrt{n}} + \bigO\big(n^{-\frac{3}{5}}\big), \qquad  n \to +\infty,
\end{align}
where $C_{1},\dots,C_{4}$ are as in Theorem \ref{thm:main thm hard}. In particular, for any $1 \leq \ell < k \leq m$,
\begin{align*}
& \mathbb{E}[\mathrm{N}(r_{\ell})] = b_1(s_\ell) n + c_1(s_\ell) \ln{n} + d_1(s_\ell) + e_1(s_\ell) n^{-\frac{1}{2}} + \bigO\big(n^{-\frac{3}{5}}\big),
 	\\
& \mathrm{Var}[\mathrm{N}(r_{\ell})] = b_{(1,1)}(s_{\ell},s_{\ell})n + c_{(1,1)}(s_{\ell},s_{\ell})\ln{n} + d_{(1,1)}(s_{\ell},s_{\ell}) + e_{(1,1)}(s_{\ell},s_{\ell})n^{-\frac{1}{2}} + \bigO\big(n^{-\frac{3}{5}}\big),
	 \\
& \mathrm{Cov}(\mathrm{N}(r_{\ell}),\mathrm{N}(r_{k})) = b_{(1,1)}(s_{\ell},s_{k})n + c_{(1,1)}(s_{\ell},s_{k})\ln{n} + d_{(1,1)}(s_{\ell},s_{k}) + e_{(1,1)}(s_{\ell},s_{k})n^{-\frac{1}{2}} + \bigO\big(n^{-\frac{3}{5}}\big) \nonumber
\end{align*}
as $n \to + \infty$, where
\begin{align}\nonumber
 b_1(s_\ell) =&\; 1-c_{\rho} + c_{\rho} \frac{1 - e^{-s_{\ell}}}{s_{\ell}},
\qquad c_1(s_\ell) = - \frac{1-c_{\rho}}{c_{\rho}} \frac{b s_{\ell}}{2},
	\\ \nonumber
d_1(s_\ell) = & -\frac{1 - e^{-s_\ell } }{2} +  \frac{1-c_{\rho}}{c_{\rho}} \frac{b s_{\ell}}{2} \ln\bigg(\frac{b(1-c_{\rho})}{2\pi c_{\rho}^{2}}\bigg)
	\\ \nonumber
&- s_{\ell} \int_{0}^{1}  \frac{e^{-s_{\ell} y}\big( y c_{\rho}(bs_{\ell}y+2\alpha) +(1-c_{\rho})b (2+s_{\ell} y) \big)-2(1-c_{\rho})b }{2c_{\rho}y} dy,	
	\\ \nonumber
 e_1(s_\ell) =&\; \sqrt{2} \, \mathcal{I} b \rho^{-b} \frac{1-c_{\rho}}{c_{\rho}} s_{\ell}\bigg( \frac{1-c_{\rho}}{c_{\rho}}s_{\ell}-1 \bigg),
\end{align}
and, for $l \leq k$,
\begin{align}\label{def of b11 hard edge}
b_{(1,1)}(s_{\ell},s_{k}) = &\; c_{\rho}\frac{1 - e^{-s_{\ell}} }{s_{\ell}} - c_{\rho}\frac{1 - e^{-s_{\ell} - s_k} }{s_{\ell} + s_k},
\qquad c_{(1,1)}(s_{\ell},s_{k}) = \frac{1-c_{\rho}}{c_{\rho}} \frac{b s_k}{2},
	 \\\nonumber
 d_{(1,1)}(s_{\ell},s_{k}) = &\; \frac{e^{-s_\ell}(1 - e^{-s_k})}{2}
 - \frac{1-c_{\rho}}{c_{\rho}} \frac{b s_{k}}{2} \ln\bigg(\frac{b(1-c_{\rho})}{2\pi c_{\rho}^{2}}\bigg)
	\\\nonumber
& - \int_{0}^1
\frac{1}{y}\bigg\{b s_{k} \frac{1-c_{\rho}}{c_{\rho}}+s_\ell e^{-s_\ell y} \bigg(b \frac{1-c_{\rho}}{c_{\rho}} + \alpha y + \frac{b s_{\ell}}{2}y \bigg( y+\frac{1-c_{\rho}}{c_{\rho}} \bigg) \bigg)
	\\\nonumber
& - e^{-(s_\ell+s_k)y} \bigg( \bigg( \frac{1-c_{\rho}}{c_{\rho}}b+\alpha y \bigg)(s_{\ell}+s_{k}) + \frac{b y}{2}\bigg( y + \frac{1-c_{\rho}}{c_{\rho}} \bigg) (s_{\ell}^{2}+s_{k}^{2}) \bigg) \bigg\} dy,
	\\\nonumber
e_{(1,1)}(s_{\ell},s_{k}) = &\; \sqrt{2} \, \mathcal{I} b \rho^{-b} \frac{1-c_{\rho}}{c_{\rho}} s_{k} \bigg( 1-\frac{1-c_{\rho}}{c_{\rho}}(2 s_{\ell} + s_{k}) \bigg).
\end{align}

(b) As $n \to + \infty$, the random variable $(\mathcal{N}_{1},\dots,\mathcal{N}_{m})$, where
\begin{align}
& \mathcal{N}_{\ell} := \frac{\mathrm{N}(r_{\ell})-b_1(s_\ell) n}{\sqrt{b_{(1,1)}(s_\ell,s_\ell) n}}, \qquad \ell=1,\dots,m,
\label{Nj hard edge}
\end{align}
convergences in distribution to a multivariate normal random variable of mean $(0,\dots,0)$ whose covariance matrix $\Sigma$ is defined by
\begin{align*}
\Sigma_{\ell,k} =  \Sigma_{k, \ell} = \frac{b_{(1,1)}(s_{\ell},s_{k})}{\sqrt{b_{(1,1)}(s_{\ell},s_{\ell})b_{(1,1)}(s_{k},s_{k})}}, \qquad 1 \leq \ell \leq k \leq m,
\end{align*}
where $b_{(1,1)}$ is given by \eqref{def of b11 hard edge}.
\end{corollary}

\begin{remark}\label{remark:N=n with prob 1}
Corollary \ref{coro:correlation hard} is stated for $t_{1}>\dots>t_{m} > 0$. It is important for Corollary \ref{coro:correlation hard} (b) that $t_{m}>0$; note however that Corollary \ref{coro:correlation hard} (a) in fact also holds for $t_{1}>\dots>t_{m} \geq 0$. In the case when $t_{m}=0=s_{m}$, one finds $b_{1}(s_{m})=n$ and $c_{1}(s_{m})=d_{1}(s_{m})=e_{1}(s_{m})=0$, which is consistent with the fact that $\mathrm{N}(r_{m})=n$ with probability $1$.

The central limit theorem of Corollary \ref{coro:correlation hard} (b), even though it only uses $b_{1}(s)$ and $b_{(1,1)}(s,s)$, is a non-trivial result because to determine just the leading
term $C_{1}$ in Theorem \ref{thm:main thm hard} one already needs quite subtle asymptotics of the incomplete gamma function.
\end{remark}

\begin{proof}[Proof of Corollary \ref{coro:correlation hard}]
Assertion (a) follows from (\ref{der of main result hard}) and the expressions for the $C_j$ given in Theorem \ref{thm:main thm hard}.
By L\'evy's continuity theorem, assertion (b) will follow if we can show that the characteristic function $\mathbb{E}[e^{i \sum_{\ell = 1}^m v_\ell \mathcal{N}_\ell}]$ converges pointwise to $e^{-\frac{1}{2}\sum_{\ell, k=1}^m v_\ell \Sigma_{\ell,k} v_k}$ for every $v_\ell \in \mathbb{R}^m$ as $n \to +\infty$. Letting $u_\ell = \frac{i v_\ell }{\sqrt{b_{(1,1)}(s_\ell,s_\ell) n}}$, (\ref{Nj hard edge}) and (\ref{asymp in main thm hard}) show that
\begin{align*}
\mathbb{E}[e^{i \sum_{\ell = 1}^m v_\ell \mathcal{N}_\ell}]
& = \mathbb{E}[e^{\sum_{\ell = 1}^m u_\ell \mathrm{N}(r_{\ell})}]
e^{- \sum_{\ell = 1}^m u_\ell b_1(s_\ell) n}
	\\
& = e^{C_{1}(\vec{u}) n + C_{2}(\vec{u}) \ln n + C_{3}(\vec{u}) + \bigO(n^{-\frac{1}{2}})}
e^{- \sum_{\ell = 1}^m u_\ell \partial_{u_\ell} C_1|_{\vec{u}=\vec{0}} n }
\end{align*}
as $n \to +\infty$ for any fixed $v_\ell \in \mathbb{R}^m$. Since $C_j|_{\vec{u}=\vec{0}} = 0$ for $j = 1,2,3$ and $u_\ell = \bigO(n^{-1/2})$, we obtain
\begin{align*}
\mathbb{E}[&e^{i \sum_{\ell = 1}^m v_\ell \mathcal{N}_\ell}]
 =  e^{\frac{1}{2}\sum_{\ell,k= 1}^m u_\ell u_k \partial_{u_\ell}\partial_{u_k} C_1|_{\vec{u}=\vec{0}} n
+ \bigO(|\vec{u}|^3 n + |\vec{u}| \ln{n} + |\vec{u}| + n^{-1/2})}
	\\
& =  e^{\frac{1}{2}\sum_{\ell,k = 1}^m \frac{iv_\ell}{\sqrt{b_{(1,1)}(s_\ell,s_\ell)}} \frac{iv_k}{\sqrt{b_{(1,1)}(s_k,s_k)}} b_{(1,1)}(s_{\min(\ell,k)}, s_{\max(\ell,k)}) + \bigO( \frac{\ln n}{\sqrt{n}})}
\to  e^{-\frac{1}{2}\sum_{\ell, k=1}^m v_\ell \Sigma_{\ell,k} v_k}
\end{align*}
as $n \to +\infty$, which proves (b).
\end{proof}

Let us analyze the leading coefficient $b_{(1,1)}(s,s)$ of $\mathrm{Var}[\mathrm{N}(r)]$, where $r:=\rho \big( 1-\frac{t}{n} \big)^{\frac{1}{2b}}$ and $s:=\frac{t}{b}c_{\rho}$. By \eqref{def of b11 hard edge},
\begin{align}\label{leading coeff var}
b_{(1,1)}(s,s) = c_{\rho}\frac{1 - e^{-s } }{s} - c_{\rho}\frac{1 - e^{-2s} }{2s}.
\end{align}
Note that $b_{(1,1)}(0,0):=\lim_{s \to 0_{+}}b_{(1,1)}(s,s)=0$, which, as mentioned in Remark \ref{remark:N=n with prob 1}, is consistent with the fact that $\mathrm{N}(\rho)=n$ with probability $1$. On the other hand, $b_{(1,1)}(s,s) = \frac{c_{\rho}}{2s}+\bigO(e^{-s})$ as $s \to + \infty$. It is therefore interesting to investigate where the maximum of $b_{(1,1)}(s,s)$ is achieved. It is possible to compute the unique maximum of $s\mapsto b_{(1,1)}(s,s)$ explicitly in terms of the Lambert function $W_{-1}(x)$, which for $-\frac{1}{e} \leq x < 0$ is defined as the unique solution to
\begin{align*}
& W_{-1}(x)e^{W_{-1}(x)} = x, \qquad W_{-1}(x) \leq -1.
\end{align*}
Indeed, taking the derivative of \eqref{leading coeff var} yields
\begin{align*}
\frac{d}{ds} b_{(1,1)}(s,s) = -\frac{c_{\rho}}{2s^{2}} \big( 1-e^{-s} \big) \big( 1-(1+2s)e^{-s} \big), \qquad s>0,
\end{align*}
and a direct inspection shows that $\frac{d}{ds} b_{(1,1)}(s,s)=0$ if and only if $s=s_{\star}$, where
\begin{align*}
s_{\star} = -\big( W_{-1}(\tfrac{-1}{2\sqrt{e}}) + \tfrac{1}{2} \big)  \approx 1.2564 .
\end{align*}
Furthermore,
\begin{align*}
b_{(1,1)}(s_{\star},s_{\star}) = \frac{-2 \, W_{-1}(\tfrac{-1}{2\sqrt{e}})-1}{4 \, W_{-1}(\tfrac{-1}{2\sqrt{e}})^{2}} c_{\rho} \approx 0.20363 c_{\rho}.
\end{align*}
As $\rho$ decreases, the hard wall gets stronger (in the sense that the mass $c_{\rho}$ of $\mu_{\mathrm{sing}}$ increases), and we observe that $b_{(1,1)}(s_{\star},s_{\star})$ increases. The graphs of $b_{1}(s)$ and $b_{(1,1)}(s,s)$ are displayed in Figure \ref{fig: b11} for certain values of $\rho$ and $b$.
\begin{figure}[h]
\begin{center}
\begin{tikzpicture}
\node at (0,0) {\includegraphics[width=6.5cm]{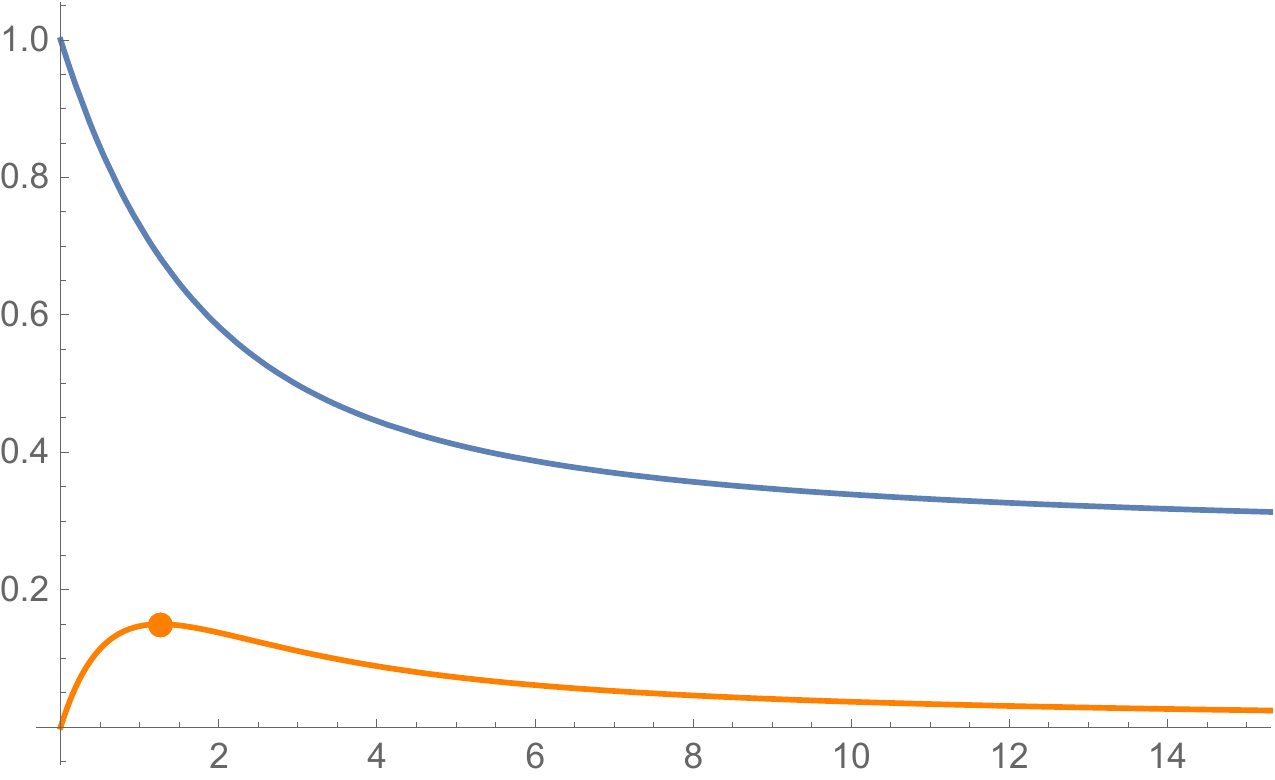}};
\end{tikzpicture}
\end{center}
\vspace{-.5cm}
\caption{\label{fig: b11} The coefficients $s \mapsto b_{1}(s)$ (blue) and $s \mapsto b_{(1,1)}(s,s)$ (orange) for $\rho=0.6b^{-\frac{1}{2b}}$ and $b=\frac{13}{10}$. The orange dot has coordinates $(s_{\star},b_{(1,1)}(s_{\star},s_{\star}))$. }
\end{figure}

\subsection{Results for the semi-hard edge}\label{subsection: semi-hard edge regime}

\begin{theorem}\label{thm:main thm semi-hard}(Merging radii at the semi-hard edge)

\noindent Let $m \in \mathbb{N}_{>0}$, $b>0$, $\rho \in (0,b^{-\frac{1}{2b}})$, $\mathfrak{s}_{1}>\dots>\mathfrak{s}_{m}>0$, and $\alpha > -1$ be fixed parameters, and for $n \in \mathbb{N}_{>0}$, define
\begin{align}\label{rellsemihardedge}
r_{\ell} = \rho \bigg( 1-\frac{\sqrt{2}\, \mathfrak{s}_{\ell}}{\rho^{b}\sqrt{n}} \bigg)^{\frac{1}{2b}}, \qquad \ell=1,\dots,m.
\end{align}
For any fixed $x_{1},\dots,x_{m} \in \mathbb{R}$, there exists $\delta > 0$ such that
\begin{align}\label{asymp in main thm semi-hard}
\mathbb{E}\bigg[ \prod_{j=1}^{m} e^{u_{j}\mathrm{N}(r_{j})} \bigg] = \exp \bigg( C_{1} n + C_{2} \sqrt{n} + C_{3} +  \frac{C_{4}}{\sqrt{n}} + \bigO\bigg(\frac{(\ln n)^{4}}{n}\bigg)\bigg), \qquad \mbox{as } n \to + \infty
\end{align}
uniformly for $u_{1} \in D_\delta(x_1),\dots,u_{m} \in D_\delta(x_m)$, where
\begin{align*}
& C_{1} = b \rho^{2b} \sum_{j=1}^{m}u_{j}, \\
& C_{2} = \sqrt{2} \, b \rho^{b} \int_{-\infty}^{+\infty} \Big( h_{0}(y)-\chi_{(-\infty,0)}(y) \sum_{j=1}^{m}u_{j} \Big)dy, \\
& C_{3} = -\bigg(\frac{1}{2}+\alpha \bigg)\sum_{j=1}^{m}u_{j} + b \int_{-\infty}^{+\infty} \Big( 4y \big( h_{0}(y) - \chi_{(-\infty,0)}(y) \sum_{j=1}^{m}u_{j} \big) + \sqrt{2} \, h_{1}(y) \Big)dy, \\
& C_{4} = b\rho^{-b} \int_{-\infty}^{+\infty} \bigg[ 6\sqrt{2} y^{2} \bigg( h_{0}(y)-\chi_{(-\infty,0)}(y) \sum_{j=1}^{m}u_{j} \bigg) + 4yh_{1}(y)+\sqrt{2} \, h_{2}(y) \bigg]dy,
\end{align*}
where
\begin{align*}
& h_{0}(y) = \ln (g_{0}(y)), \qquad h_{1}(y) = \frac{g_{1}(y)}{g_{0}(y)}, \qquad h_{2}(y) = \frac{g_{2}(y)}{g_{0}(y)} - \frac{1}{2} \bigg( \frac{g_{1}(y)}{g_{0}(y)} \bigg)^{2},
\end{align*}
and
\begin{align*}
g_{0}(y) = &\; 1+\sum_{\ell=1}^{m} \omega_{\ell} \frac{\mathrm{erfc}(y+\mathfrak{s}_{\ell})}{\mathrm{erfc}(y)}, \\
 g_{1}(y) = &\; \sum_{\ell=1}^{m} \frac{\sqrt{2}}{3\sqrt{\pi}} \omega_{\ell} \bigg\{ (5y^{2}-1)\frac{e^{-y^{2}}}{\mathrm{erfc}(y)} \frac{\mathrm{erfc}(y+\mathfrak{s}_{\ell})}{\mathrm{erfc}(y)} - \big( 5y^{2}+\mathfrak{s}_{\ell} y + 2\mathfrak{s}_{\ell}^{2}-1 \big) \frac{e^{-(y+\mathfrak{s}_{\ell})^{2}}}{\mathrm{erfc}(y)} \bigg\}, \\
g_{2}(y) = &\; \sum_{\ell=1}^{m} \omega_{\ell}\bigg\{ \frac{1}{18\sqrt{\pi}} \bigg[ 50 y^{5} + 70 y^{4} \mathfrak{s}_{\ell} + y^{3}(62\mathfrak{s}_{\ell}^{2}-73) + y^{2}\mathfrak{s}_{\ell}(50\mathfrak{s}_{\ell}^{2}-33) \\
& -y (3+18\mathfrak{s}_{\ell}^{2}-16\mathfrak{s}_{\ell}^{4}) + \mathfrak{s}_{\ell}(3-22\mathfrak{s}_{\ell}^{2}+8\mathfrak{s}_{\ell}^{4}) \bigg] \frac{e^{-(y+\mathfrak{s}_{\ell})^{2}}}{\mathrm{erfc}(y)} \\
& + \frac{2(1-5y^{2})(5y^{2}+y \mathfrak{s}_{\ell} -1 + 2\mathfrak{s}_{\ell}^{2})}{9\pi} \frac{e^{-y^{2}}}{\mathrm{erfc}(y)} \frac{e^{-(y+\mathfrak{s}_{\ell})^{2}}}{\mathrm{erfc}(y)} \\
& + \frac{y(3+73y^{2}-50y^{4})}{18 \sqrt{\pi}} \frac{e^{-y^{2}}}{\mathrm{erfc}(y)} \frac{\mathrm{erfc}(y+\mathfrak{s}_{\ell})}{\mathrm{erfc}(y)} + \frac{2(1-5y^{2})^{2}}{9\pi} \bigg( \frac{e^{-y^{2}}}{\mathrm{erfc}(y)} \bigg)^{2} \frac{\mathrm{erfc}(y+\mathfrak{s}_{\ell})}{\mathrm{erfc}(y)} \bigg\}.
\end{align*}
In particular, since $\mathbb{E}\big[ \prod_{j=1}^{m} e^{u_{j}\mathrm{N}(r_{j})} \big]$ depends analytically on $u_{1},\dots,u_{m} \in \mathbb{C}$ and is strictly positive for $u_{1},\dots,u_{m} \in \mathbb{R}$, the asymptotic formula \eqref{asymp in main thm} together with Cauchy's formula shows that
\begin{align}\label{der of main result semi-hard}
\partial_{u_{1}}^{k_{1}}\dots \partial_{u_{m}}^{k_{m}} \bigg\{ \ln \mathbb{E}\bigg[ \prod_{j=1}^{m} e^{u_{j}\mathrm{N}(r_{j})} \bigg] - \bigg( C_{1} n + C_{2} \sqrt{n} + C_{3} +  \frac{C_{4}}{\sqrt{n}} \bigg) \bigg\} = \bigO\bigg(\frac{(\ln n)^{4}}{n}\bigg)\bigg)
\end{align}
as $n \to + \infty$, for any $k_{1},\dots,k_{m}\in \mathbb{N}$ and $u_{1},\dots,u_{m}\in \mathbb{R}$.
\end{theorem}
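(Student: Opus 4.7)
The plan is to reduce the joint MGF to a one-dimensional product via rotation invariance, and then carry out a Laplace/Euler--Maclaurin analysis of the factors near the hard wall.

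\textit{Step 1 (reduction to a product).} Since $Q$ is radial and $\mathrm{N}(r_j)$ depends only on the moduli, Andreief's identity (equivalently, the fact that the orthogonal polynomials for the rotationally invariant weight $r^{2\alpha+1}e^{-nr^{2b}}\chi_{[0,\rho]}(r)$ are monomials) shows that
\begin{equation*}
\mathbb{E}\Big[\prod_{j=1}^{m}e^{u_j \mathrm{N}(r_j)}\Big]=\prod_{k=0}^{n-1}\frac{\int_{0}^{\rho^{2b}}x^{(k+\alpha)/b}e^{-nx}W(x)\,dx}{\int_{0}^{\rho^{2b}}x^{(k+\alpha)/b}e^{-nx}\,dx},\qquad W(x):=1+\sum_{\ell=1}^{m}\omega_\ell\,\chi_{(0,T_\ell)}(x),
\end{equation*}
where $T_\ell:=r_\ell^{2b}=\rho^{2b}-\sqrt{2}\rho^{b}\mathfrak{s}_\ell/\sqrt{n}$ and $\omega_\ell$ is as in \eqref{def of Omega j intro}. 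Note that $W$ is piecewise constant and takes the value $e^{u_{j+1}+\cdots+u_m}$ on $(T_{j},T_{j+1})$, with $T_0:=0$ and $T_{m+1}:=\rho^{2b}$.

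\textit{Step 2 (Laplace analysis of a single factor).} The phase $\phi_k(x)=\tfrac{k+\alpha}{nb}\ln x-x$ has saddle $x_k^{\ast}=(k+\alpha)/(nb)$ and, when $x_k^{\ast}$ is close to $\rho^{2b}$, the Gaussian width is of order $\rho^{b}/\sqrt{n}$, which matches the $1/\sqrt{n}$ scale of the spacings $T_\ell-T_{\ell+1}$. I would change variables $x=\rho^{2b}-\sqrt{2}\rho^{b}y/\sqrt{n}$ and parametrize the index by
$$y_k:=(k+\alpha-nb\rho^{2b})/(b\sqrt{2n}\rho^{b}).$$
Expanding $n\phi_k(x)$ in powers of $1/\sqrt{n}$ and completing the square produces Gaussians $e^{-(y+by_k)^{2}}$ times polynomial corrections; integrating against $W$ over $y\in(0,+\infty)$ yields the $\mathrm{erfc}$ contributions coming from the truncation at $x=\rho^{2b}$ and from the jumps of $W$ at $y=\mathfrak{s}_\ell$. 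The denominator produces $\tfrac{\sqrt{\pi}}{2}\mathrm{erfc}(y_k)$ and the numerator produces $\tfrac{\sqrt{\pi}}{2}\mathrm{erfc}(y_k)\,g_0(y_k)$ to leading order, so at zeroth order $\ln(\text{num}/\text{denom})=h_0(y_k)$. Pushing the expansion two further orders in $1/\sqrt{n}$ gives $h_1,h_2$ as stated in the theorem, and more precisely, uniformly for $|y_k|\le M\ln n$,
\begin{equation*}
\ln\frac{\text{num}_k}{\text{denom}_k}=h_0(y_k)+\frac{h_1(y_k)}{\sqrt{n}}+\frac{h_2(y_k)}{n}+O\!\Big(\tfrac{(\ln n)^{p}}{n^{3/2}}\Big).
\end{equation*}
Outside the critical window, the ratio equals $e^{u_1+\cdots+u_m}$ (for $y_k\ll-\ln n$, the hard wall and all $T_\ell$ lie outside the Gaussian's effective support) or $1$ (for $y_k\gg\ln n$, past the wall), in each case up to super-polynomially small errors.

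\textit{Step 3 (summing over $k$).} Splitting the sum into the three $k$-ranges of Step 2, the ``deep bulk" contributes exactly $C_1 n=b\rho^{2b}(u_1+\cdots+u_m)\,n$ (it has $\approx nb\rho^{2b}$ indices, each with log-ratio $u_1+\cdots+u_m$), the ``past-wall" region is negligible, and the critical window is treated as a Riemann sum in $y_k$ with spacing $\Delta y=1/(b\sqrt{2n}\rho^{b})$. Subtracting the tail value $\chi_{(-\infty,0)}(y)(u_1+\cdots+u_m)$ (which is reassigned to the deep-bulk sum) renders the integral of $h_0$ convergent, and
\begin{equation*}
\sum_{k\in\text{crit}}h_0(y_k)\;\approx\;\tfrac{1}{\Delta y}\int_{-\infty}^{+\infty}\Big[h_0(y)-\chi_{(-\infty,0)}(y)\textstyle\sum_j u_j\Big]dy=\tfrac{C_2\sqrt{n}}{\sqrt{2}b\rho^{b}}\cdot\sqrt{2}b\rho^{b}=C_2\sqrt{n}.
\end{equation*}
The analogous Riemann sums of $h_1/\sqrt{n}$ and $h_2/n$ supply $C_4/\sqrt{n}$ and (the main part of) $C_3$. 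The $-(\tfrac12+\alpha)\sum_j u_j$ piece in $C_3$ arises from two further sources: (i) the $\alpha$-dependent offset in the definition of $y_k$, which, when Taylor-expanded in $h_0$ and integrated across the jump of the subtracted tail at $y=0$, contributes $-\alpha\sum_j u_j$; and (ii) the Euler--Maclaurin correction accounting for the half-jump at $y=0$ of $h_0-\chi_{(-\infty,0)}\sum_j u_j$, contributing $-\tfrac12\sum_j u_j$. Analyticity in $u$ and compactness of $\overline{D_\delta(x_j)}$ give the uniformity in $u_1,\dots,u_m$.

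\textit{Main obstacle.} The central difficulty is the $O((\ln n)^{4}/n)$ error. This demands a Laplace expansion of each ratio that is uniform across the full critical window, including near its endpoints where $\mathrm{erfc}(y_k)\to 0$ and the ratios $\mathrm{erfc}(y_k+\mathfrak{s}_\ell)/\mathrm{erfc}(y_k)$ and $e^{-y_k^{2}}/\mathrm{erfc}(y_k)$ degenerate; sharp tail control of these quantities prevents the $(\ln n)^{p}$ factors from accumulating. A secondary difficulty is the bookkeeping of the polynomial prefactors in $g_1,g_2$, which combine contributions from the higher-order Taylor coefficients of $\phi_k$, the Jacobian of the change of variables, and the $\alpha$-shift in $y_k$; a careful comparison against the explicit formulas for $g_1,g_2$ is needed at the end to confirm all cross-terms cancel correctly.
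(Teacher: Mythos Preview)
Your outline is correct and follows essentially the same route as the paper: reduce to the product formula \eqref{main exact formula}, split the $j$-sum into a bulk region (your ``deep bulk'' is the paper's $S_0+S_1+S_2^{(3)}$, contributing $(j_- + \text{corrections})\ln\Omega$), a negligible past-wall region (the paper's $S_2^{(1)}+S_3$), and a critical window of width $O(\ln n)$ in the rescaled variable (the paper's $S_2^{(2)}$), then expand each factor in the window to order $n^{-1}$ and Riemann-sum.

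The one substantive difference is tooling. You propose to carry out a fresh Laplace expansion of $\int_0^{\rho^{2b}} x^{(k+\alpha)/b}e^{-nx}\,dx$ uniformly across the window. The paper instead recognizes each factor as a lower incomplete gamma $\gamma(a_j,a_j\lambda_j)$ and invokes Temme's uniform asymptotics (Lemma~\ref{lemma: uniform}), which already packages exactly the uniformity you flag as the ``main obstacle'': the formula $\gamma(a,z)/\Gamma(a)=\tfrac12\mathrm{erfc}(-\eta\sqrt{a/2})-R_a(\eta)$ with $R_a$ admitting a full uniform expansion in $a^{-1}$ is valid across the entire transition, so the degeneration of $\mathrm{erfc}(y_k)$ at the window edges is automatically absorbed. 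Your direct Laplace route would have to reprove this, which is doable but is essentially Temme's calculation. A minor point: your heuristic for the $-(\tfrac12+\alpha)\sum_j u_j$ piece of $C_3$ is not quite how it emerges in the paper --- there it comes out of the bookkeeping when combining the constant terms of $S_2^{(3)}$ (which carry $-\alpha+\theta_-^{(n,M)}$) with the Euler--Maclaurin boundary terms $(\tfrac12-\theta_\pm^{(n,M)})h_0(\pm M\rho^b/\sqrt{2})$ from Lemma~\ref{lemma:Riemann sum}, rather than from a ``half-jump at $y=0$''.
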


The proof of the following corollary is similar to that of Corollary \ref{coro:correlation hard} and is omitted.

\begin{corollary}[Semi-hard edge]\label{coro:correlation semihard}
Let $m \in \mathbb{N}_{>0}$, $b>0$, $\rho \in (0,b^{-\frac{1}{2b}})$, $\vec{j} \in (\mathbb{N}^{m})_{>0}$, $\alpha > -1$,  and $\mathfrak{s}_{1}>\dots>\mathfrak{s}_{m}>0$ be fixed. For $n \in \mathbb{N}_{>0}$, define $\{r_\ell\}_{\ell =1}^m$ by \eqref{rellsemihardedge}.

(a) The joint cumulant $\kappa_{\vec{j}}$ satisfies
\begin{align}\label{asymp cumulant semihard edge}
\kappa_{\vec{j}} = \begin{cases}
\ds \partial_{\vec{u}}^{\vec{j}}C_{1}\big|_{\vec{u}=\vec{0}} n + \partial_{\vec{u}}^{\vec{j}}C_{2}\big|_{\vec{u}=\vec{0}} \sqrt{n} + \partial_{\vec{u}}^{\vec{j}}C_{3}\big|_{\vec{u}=\vec{0}} + \partial_{\vec{u}}^{\vec{j}}C_{4}\big|_{\vec{u}=\vec{0}}\frac{1}{\sqrt{n}} + \bigO\bigg(\frac{(\ln n)^{4}}{n}\bigg), & \mbox{if } \vec{j}=1, \\
\ds \hspace{2.07cm} \partial_{\vec{u}}^{\vec{j}}C_{2}\big|_{\vec{u}=\vec{0}} \sqrt{n} + \partial_{\vec{u}}^{\vec{j}}C_{3}\big|_{\vec{u}=\vec{0}} +  \partial_{\vec{u}}^{\vec{j}}C_{4}\big|_{\vec{u}=\vec{0}}\frac{1}{\sqrt{n}} + \bigO\bigg(\frac{(\ln n)^{4}}{n}\bigg), & \mbox{otherwise},
\end{cases}
\end{align}
as $n \to +\infty$, where $C_{1},\dots,C_{4}$ are as in Theorem \ref{thm:main thm semi-hard}. In particular, for any $1 \leq \ell < k \leq m$,
\begin{align*}
& \mathbb{E}[\mathrm{N}(r_{\ell})] = b_1(\mathfrak{s}_\ell) n + c_1(\mathfrak{s}_\ell) \sqrt{n} + d_1(\mathfrak{s}_\ell) + e_1(\mathfrak{s}_\ell) n^{-\frac{1}{2}} + \bigO\big((\ln n)^{4}n^{-1}\big),
 	\\
& \mathrm{Var}[\mathrm{N}(r_{\ell})] = c_{(1,1)}(\mathfrak{s}_{\ell},\mathfrak{s}_{\ell}) \sqrt{n} + d_{(1,1)}(\mathfrak{s}_{\ell},\mathfrak{s}_{\ell}) + e_{(1,1)}(\mathfrak{s}_{\ell},\mathfrak{s}_{\ell}) n^{-\frac{1}{2}} + \bigO\big((\ln n)^{4}n^{-1}\big),
	 \\
& \mathrm{Cov}(\mathrm{N}(r_{\ell}),\mathrm{N}(r_{k})) = c_{(1,1)}(\mathfrak{s}_{\ell},\mathfrak{s}_{k})\sqrt{n} + d_{(1,1)}(\mathfrak{s}_{\ell},\mathfrak{s}_{k}) + e_{(1,1)}(\mathfrak{s}_{\ell},\mathfrak{s}_{k})n^{-\frac{1}{2}} + \bigO\big((\ln n)^{4}n^{-1}\big)  \nonumber
\end{align*}
as $n \to + \infty$, where
\begin{align}\nonumber
b_1(\mathfrak{s}_\ell) = &\; b\rho^{2b},
\qquad c_1(\mathfrak{s}_\ell) = \sqrt{2} \, b \rho^{b} \int_{-\infty}^{+\infty} \Big(\frac{\mathrm{erfc}(y+\mathfrak{s}_{\ell})}{\mathrm{erfc}(y)} -\chi_{(-\infty,0)}(y) \Big)dy,
	\\ \nonumber
d_1(\mathfrak{s}_\ell) = &  -\bigg(\frac{1}{2}+\alpha \bigg) + 2b \int_{-\infty}^{+\infty} \bigg\{2 y \Big(\frac{\mathrm{erfc}(y+\mathfrak{s}_{\ell})}{\mathrm{erfc}(y)} -\chi_{(-\infty,0)}(y) \Big)
	\\ \nonumber
&  + \frac{5y^{2}-1}{3\sqrt{\pi}} \frac{e^{-y^{2}}}{\mathrm{erfc}(y)} \frac{\mathrm{erfc}(y+\mathfrak{s}_{\ell})}{\mathrm{erfc}(y)} + \frac{1-5y^{2}-y \mathfrak{s}_{\ell} - 2 \mathfrak{s}_{\ell}^{2}}{3\sqrt{\pi}} \frac{e^{-(y+\mathfrak{s}_{\ell})^{2}}}{\mathrm{erfc}(y)} \bigg\} dy,
	\\ \nonumber
 e_1(\mathfrak{s}_\ell) =&\; \frac{b \rho^{-b}}{9 \sqrt{2} \pi} \int_{-\infty}^\infty \frac{1}{\erfc(y)^3}\big\{ 108 \pi  y^2 \erfc(y)^2 \erfc(y+\mathfrak{s}_\ell)+\sqrt{\pi } \erfc(y)^2 e^{-(y+\mathfrak{s}_\ell)^2} \big(2 \mathfrak{s}_\ell^3 (25 y^2-11)
  	\\\nonumber
& +2 \mathfrak{s}_\ell^2 y (31 y^2-33)+\mathfrak{s}_\ell (70 y^4-57 y^2+3)+16 \mathfrak{s}_\ell^4 y+8 \mathfrak{s}_\ell^5+y (50 y^4-193 y^2+21)\big)
	\\\nonumber
& +\erfc(y) \big(- e^{-y^2} \sqrt{\pi } y (50 y^4-193 y^2+21) \erfc(y+\mathfrak{s}_\ell)
	\\\nonumber
& -4 e^{-(y+\mathfrak{s}_\ell)^2-y^2}(5 y^2-1) (\mathfrak{s}_\ell y+2 \mathfrak{s}_\ell^2+5 y^2-1)\big)+4 e^{-2 y^2} (1-5 y^2)^2 \erfc(y+\mathfrak{s}_\ell)
	\\\nonumber
& -108 \pi  \chi_{(-\infty,0)}(y) y^2
   \erfc(y)^3 \big\} dy,
\end{align}
and, for $l \leq k$,
\begin{align}\label{def of c11 semihard edge}
& c_{(1,1)}(\mathfrak{s}_{\ell},\mathfrak{s}_{k}) = \sqrt{2} b\rho^{b} \int_{-\infty}^\infty \frac{\erfc(y + \mathfrak{s}_\ell)(\erfc(y) - \erfc(y + \mathfrak{s}_k))}{\erfc(y)^2} dy,
	 \\\nonumber
& d_{(1,1)}(\mathfrak{s}_{\ell},\mathfrak{s}_{k}) = \frac{2b}{3\sqrt{\pi}} \int_{-\infty}^{+\infty} \frac{1}{\erfc(y)^3} \big\{ \erfc(y)^2 \big(6 \sqrt{\pi } y \erfc(y+\mathfrak{s}_\ell)-e^{-(y+\mathfrak{s}_\ell)^2}
   (\mathfrak{s}_\ell y+2 \mathfrak{s}_\ell^2+5 y^2-1)\big)
   	\\\nonumber
&   +\erfc(y) \big( e^{-(y+\mathfrak{s}_\ell)^2}\erfc(y+\mathfrak{s}_k)
   (\mathfrak{s}_\ell y+2 \mathfrak{s}_\ell^2+5 y^2-1) -6\sqrt{\pi}y \, \mathrm{erfc}(y+\mathfrak{s}_{\ell})\mathrm{erfc}(y+\mathfrak{s}_{k})
   	\\\nonumber
&    +(e^{-y^{2}}+e^{-(y+\mathfrak{s}_k)^2})\erfc(y+\mathfrak{s}_\ell) (5y^{2}-1) +e^{-(y+\mathfrak{s}_k)^2} \erfc(y+\mathfrak{s}_\ell) \mathfrak{s}_k (2 \mathfrak{s}_k+y) \big)
 	\\\nonumber
& +2 e^{-y^2} (1-5 y^2) \erfc(y+\mathfrak{s}_\ell)  \erfc(y+\mathfrak{s}_k)\big\} dy,
	\\\nonumber
& e_{(1,1)}(\mathfrak{s}_{\ell},\mathfrak{s}_{k}) = \frac{b \rho^{-b}}{9 \sqrt{2} \pi} \int_{-\infty}^{+\infty} \frac{e^{-(y+\mathfrak{s}_\ell)^2-(y+\mathfrak{s}_k)^2}}{\erfc(y)^4} \Big\{-\erfc(y)^2 \Big(\sqrt{\pi } \erfc(y+\mathfrak{s}_\ell)
      	\\\nonumber
& \times \big(108 \sqrt{\pi } y^2 \erfc(y+\mathfrak{s}_k) e^{2 (\mathfrak{s}_\ell+\mathfrak{s}_k) y+\mathfrak{s}_\ell^2+\mathfrak{s}_k^2+2 y^2}
  +(50 y^4-193 y^2+21) y e^{(y+\mathfrak{s}_\ell)^2} (e^{\mathfrak{s}_k (\mathfrak{s}_k+2 y)}+1)
      	\\\nonumber
&  +\mathfrak{s}_k e^{(y+\mathfrak{s}_\ell)^2} (62 \mathfrak{s}_k y^3+(50 \mathfrak{s}_k^2-57) y^2
   +2 \mathfrak{s}_k (8 \mathfrak{s}_k^2-33) y+8 \mathfrak{s}_k^4-22 \mathfrak{s}_k^2+70
   y^4+3)\big)
      	\\\nonumber
&  +\sqrt{\pi } e^{(y+\mathfrak{s}_k)^2} [2 \mathfrak{s}_\ell^3 (25 y^2-11)+2 \mathfrak{s}_\ell^2 y (31 y^2-33)+\mathfrak{s}_\ell (70 y^4-57 y^2+3)+16 \mathfrak{s}_\ell^4 y+8
   \mathfrak{s}_\ell^5
      	\\\nonumber
&  +y (50 y^4-193 y^2+21)] \erfc(y+\mathfrak{s}_k)
	\\\nonumber
& +4 (\mathfrak{s}_\ell y+2 \mathfrak{s}_\ell^2+5 y^2-1) ((5 y^2-1) e^{\mathfrak{s}_k (\mathfrak{s}_k+2 y)}+\mathfrak{s}_k (2 \mathfrak{s}_k+y)+5
   y^2-1)\Big)
      	\\\nonumber
&  +\sqrt{\pi } \erfc(y)^3 e^{(y+\mathfrak{s}_k)^2} \Big(108 \sqrt{\pi } y^2 e^{(y+\mathfrak{s}_\ell)^2} \erfc(y+\mathfrak{s}_\ell)+2 \mathfrak{s}_\ell^3 (25 y^2-11)
	\\\nonumber
& +2 \mathfrak{s}_\ell^2 y
   (31 y^2-33)+\mathfrak{s}_\ell (70 y^4-57 y^2+3)+16 \mathfrak{s}_\ell^4 y+8 \mathfrak{s}_\ell^5+y (50 y^4-193 y^2+21)\Big)
      	\\\nonumber
&  +2 \erfc(y) \Big(4 (5 y^2-1) e^{\mathfrak{s}_k (\mathfrak{s}_k+2 y)} (\mathfrak{s}_\ell y+2 \mathfrak{s}_\ell^2+5 y^2-1) \erfc(y+\mathfrak{s}_k)
    	\\\nonumber
&+e^{\mathfrak{s}_\ell (\mathfrak{s}_\ell+2 y)} \erfc(y+\mathfrak{s}_\ell) (\sqrt{\pi } y (50 y^4-193 y^2+21) e^{(y+\mathfrak{s}_k)^2} \erfc(y+\mathfrak{s}_k)
    	\\\nonumber
&+2 (1-5 y^2)^2 (e^{\mathfrak{s}_k (\mathfrak{s}_k+2 y)}+2)+4 \mathfrak{s}_k (5 y^2-1) (2 \mathfrak{s}_k+y))\Big)
	\\\nonumber
&-12 (1-5 y^2)^2 e^{2 (\mathfrak{s}_\ell+\mathfrak{s}_k) y+\mathfrak{s}_\ell^2+\mathfrak{s}_k^2} \erfc(y+\mathfrak{s}_\ell) \erfc(y+\mathfrak{s}_k) \Big\}dy.
\end{align}

(b) As $n \to + \infty$, the random variable $(\mathcal{N}_{1},\dots,\mathcal{N}_{m})$, where
\begin{align}
& \mathcal{N}_{\ell} := \frac{\mathrm{N}(r_{\ell})-(b_1(\mathfrak{s}_\ell) n+ c_1(\mathfrak{s}_\ell) \sqrt{n})}{\sqrt{c_{(1,1)}(\mathfrak{s}_\ell, \mathfrak{s}_\ell)} \, n^{1/4}}, \qquad \ell=1,\dots,m,
\label{Nj semihard edge}
\end{align}
convergences in distribution to a multivariate normal random variable of mean $(0,\dots,0)$ whose covariance matrix $\Sigma$ is defined by
\begin{align*}
\Sigma_{\ell,\ell} = 1, \qquad \Sigma_{\ell,k} =  \Sigma_{k, \ell} = \frac{c_{(1,1)}(\mathfrak{s}_{\ell},\mathfrak{s}_{k})}{\sqrt{c_{(1,1)}(\mathfrak{s}_\ell, \mathfrak{s}_\ell)c_{(1,1)}(\mathfrak{s}_k,\mathfrak{s}_k)}}, \qquad 1 \leq \ell < k \leq m,
\end{align*}
where $c_{(1,1)}$ is given by \eqref{def of c11 semihard edge}.
\end{corollary}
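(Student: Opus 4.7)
The plan is to follow verbatim the two-step scheme used for Corollary \ref{coro:correlation hard}: read off the cumulants by differentiating the MGF expansion \eqref{asymp in main thm semi-hard} at $\vec u=\vec 0$ (using the derivative estimate \eqref{der of main result semi-hard} to control the remainder), and then obtain the CLT from L\'evy's continuity theorem with the semi-hard scaling $n^{1/4}$ in place of $n^{1/2}$.

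For part (a), I would apply $\partial_{\vec u}^{\vec j}$ to the logarithm of \eqref{asymp in main thm semi-hard}, evaluate at $\vec u=\vec 0$, and invoke \eqref{der of main result semi-hard} to bound the derivative of the remainder by $\bigO((\ln n)^4/n)$. This immediately yields $\kappa_{\vec j} = \partial_{\vec u}^{\vec j}C_{1}|_{\vec 0}\,n + \partial_{\vec u}^{\vec j}C_{2}|_{\vec 0}\sqrt n + \partial_{\vec u}^{\vec j}C_{3}|_{\vec 0} + \partial_{\vec u}^{\vec j}C_{4}|_{\vec 0}\,n^{-1/2} + \bigO((\ln n)^{4}/n)$, which is exactly \eqref{asymp cumulant semihard edge}. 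The case split arises because $C_{1}(\vec u)=b\rho^{2b}\sum_{j}u_{j}$ is linear, so $\partial_{\vec u}^{\vec j}C_{1}|_{\vec 0}=0$ whenever $|\vec j|\geq 2$ and the $n$-term only survives for one-point cumulants. The explicit formulas for $b_{1},c_{1},d_{1},e_{1}$ and $c_{(1,1)},d_{(1,1)},e_{(1,1)}$ are then obtained by differentiating the integrands defining $C_{2},C_{3},C_{4}$ in $\vec u$ at $\vec u=\vec 0$. The bookkeeping is simplified dramatically by the identities $\omega_{\ell}|_{\vec u=\vec 0}=0$ for $1\leq\ell\leq m$ (so that $g_{0}|_{\vec 0}=1$ and $g_{1}|_{\vec 0}=g_{2}|_{\vec 0}=0$), $\partial_{u_{k}}\omega_{\ell}|_{\vec u=\vec 0}=\delta_{k,\ell}$, and $\partial_{u_{j}}\partial_{u_{k}}\omega_{\ell}|_{\vec u=\vec 0}=\chi_{\ell=\min(j,k)}$; for instance one checks directly that $\partial_{u_{\ell}}\partial_{u_{k}}C_{2}|_{\vec 0}=c_{(1,1)}(\mathfrak{s}_{\min(\ell,k)},\mathfrak{s}_{\max(\ell,k)})$, matching \eqref{def of c11 semihard edge}.

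For part (b), I would set $u_{\ell}=iv_{\ell}/\bigl(\sqrt{c_{(1,1)}(\mathfrak{s}_{\ell},\mathfrak{s}_{\ell})}\,n^{1/4}\bigr)$, so that $u_{\ell}=\bigO(n^{-1/4})$. Using the centering in the definition \eqref{Nj semihard edge} of $\mathcal N_{\ell}$ and then \eqref{asymp in main thm semi-hard}, the characteristic function reads
\begin{align*}
\mathbb{E}\left[e^{i\sum_{\ell=1}^{m}v_{\ell}\mathcal{N}_{\ell}}\right] = \exp\Big(C_{1}(\vec{u})n + C_{2}(\vec{u})\sqrt{n} + C_{3}(\vec{u}) + \bigO(n^{-1/2}) - \sum_{\ell=1}^{m}u_{\ell}\bigl(b_{1}(\mathfrak{s}_{\ell})n + c_{1}(\mathfrak{s}_{\ell})\sqrt{n}\bigr)\Big).
\end{align*}
The linearity of $C_{1}$ collapses the Taylor expansion of $C_{1}(\vec u)n$ to its linear part, which cancels $-\sum_{\ell}u_{\ell}b_{1}(\mathfrak{s}_{\ell})n$ exactly; similarly, the linear part of $C_{2}(\vec u)\sqrt n$ cancels $-\sum_{\ell}u_{\ell}c_{1}(\mathfrak{s}_{\ell})\sqrt n$, while its quadratic part produces a finite limit $-\tfrac 1 2\sum_{\ell,k}v_{\ell}\Sigma_{\ell,k}v_{k}$ once $\partial_{u_{\ell}}\partial_{u_{k}}C_{2}|_{\vec 0}$ is replaced by its value from part (a). All remaining contributions -- the cubic and higher Taylor remainder of $C_{2}\sqrt n$, the full $C_{3}(\vec u)$, the $C_{4}(\vec u)/\sqrt n$ term, and the $\bigO((\ln n)^{4}/n)$ error -- are $o(1)$ because $u_{\ell}=\bigO(n^{-1/4})$ and $C_{3}(\vec 0)=C_{4}(\vec 0)=0$. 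L\'evy's continuity theorem then delivers the claimed multivariate normal limit with covariance $\Sigma$.

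The principal obstacle I foresee is the explicit verification in part (a) that the $\vec u$-derivatives of the rather intricate integrands of $C_{2},C_{3},C_{4}$, expressed through $h_{0}=\ln g_{0}$, $h_{1}=g_{1}/g_{0}$, $h_{2}=g_{2}/g_{0}-\tfrac 1 2(g_{1}/g_{0})^{2}$, reproduce the long expressions in \eqref{def of c11 semihard edge} together with their $d_{(1,1)}$ and $e_{(1,1)}$ counterparts. While mechanical, the computation is lengthy because of the interaction between derivatives of $h_{0},h_{1},h_{2}$ and the exponential prefactors involving $\mathfrak{s}_{\ell},\mathfrak{s}_{k}$; the vanishing of $\omega_{\ell}$, $g_{1}$, and $g_{2}$ at $\vec u=\vec 0$ produces numerous cancellations that make the task tractable but still bookkeeping-heavy, which justifies omitting the details in the paper.
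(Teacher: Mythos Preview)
Your proposal is correct and follows precisely the approach the paper indicates: the paper omits the proof, stating only that it ``is similar to that of Corollary \ref{coro:correlation hard}'', and your two-step scheme (differentiate \eqref{asymp in main thm semi-hard} using \eqref{der of main result semi-hard} for part (a), then apply L\'evy's continuity theorem with the $n^{1/4}$ scaling for part (b)) is exactly that adaptation. Your identification of the key simplifying identities $\omega_\ell|_{\vec u=\vec 0}=0$, $\partial_{u_k}\omega_\ell|_{\vec u=\vec 0}=\delta_{k,\ell}$, and $\partial_{u_j}\partial_{u_k}\omega_\ell|_{\vec u=\vec 0}=\chi_{\ell=\min(j,k)}$ is correct and drives the computation, and your order-counting in part (b) (cubic remainder of $C_2\sqrt{n}$ is $\bigO(n^{-1/4})$, etc.) is accurate.
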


\subsection{Results for the bulk}\label{subsection: bulk regime}
It turns out that the points in the bulk only feel the hard wall via exponentially small corrections. Consequently, the formulas for the bulk regime presented in our next theorem are {\it identical} to the corresponding formulas for the case without a hard edge presented in \cite{ChLe2022}. Moreover, the proof is almost identical to the proof of the analogous theorem in \cite{ChLe2022} and is therefore omitted (the only difference between the proofs is that a number of exponentially small error terms stemming from the hard wall appear in the proof of Theorem \ref{thm:main thm bulk}).

\begin{theorem}\label{thm:main thm bulk}(Merging radii in the bulk)

\noindent Let $m \in \mathbb{N}_{>0}$, $b>0$, $r \in (0,b^{-\frac{1}{2b}})$, $\mathfrak{s}_{1}<\dots<\mathfrak{s}_{m}$, and $\alpha > -1$ be fixed parameters, and for $n \in \mathbb{N}_{>0}$, define
\begin{align}\label{r ell bulk in thm}
r_{\ell} = r \bigg( 1+\frac{\sqrt{2}\, \mathfrak{s}_{\ell}}{r^{b}\sqrt{n}} \bigg)^{\frac{1}{2b}}, \qquad \ell=1,\dots,m.
\end{align}
For any fixed $x_{1},\dots,x_{m} \in \mathbb{R}$, there exists $\delta > 0$ such that
\begin{align}\label{asymp in main thm}
\mathbb{E}\bigg[ \prod_{j=1}^{m} e^{u_{j}\mathrm{N}(r_{j})} \bigg] = \exp \bigg( C_{1} n + C_{2} \sqrt{n} + C_{3} +  \frac{C_{4}}{\sqrt{n}} + \bigO\bigg(\frac{(\ln n)^{2}}{n}\bigg)\bigg), \qquad \mbox{as } n \to + \infty
\end{align}
uniformly for $u_{1} \in D_\delta(x_1),\dots,u_{m} \in D_\delta(x_m)$, where
\begin{align*}
& C_{1} = b r^{2b} \sum_{j=1}^{m}u_{j}, \\
& C_{2} = \sqrt{2}\, b r^{b} \int_{0}^{+\infty} \Big( \ln \mathcal{H}_{1}(t; \vec{u},\vec{\mathfrak{s}}) + \ln \mathcal{H}_{2}(t; \vec{u},\vec{\mathfrak{s}}) \Big) dt, \\
& C_{3} = - \bigg( \frac{1}{2}+\alpha \bigg)\sum_{j=1}^{m}u_{j} + 4b \int_{0}^{+\infty} t\Big( \ln \mathcal{H}_{1}(t; \vec{u},\vec{\mathfrak{s}}) - \ln \mathcal{H}_{2}(t; \vec{u},\vec{\mathfrak{s}}) \Big) dt + \sqrt{2}\, b \int_{-\infty}^{+\infty} \mathcal{G}_{1}(t;\vec{u},\vec{\mathfrak{s}}) dt, \\
& C_{4} = \frac{6\sqrt{2}\, b}{r^{b}} \int_{0}^{+\infty} t^{2}\Big( \ln \mathcal{H}_{1}(t; \vec{u},\vec{\mathfrak{s}}) + \ln \mathcal{H}_{2}(t; \vec{u},\vec{\mathfrak{s}}) \Big) dt \\
& \hspace{0.8cm} + \frac{b}{r^{b}} \int_{-\infty}^{+\infty} \bigg( 4t \, \mathcal{G}_{1}(t; \vec{u},\vec{\mathfrak{s}}) - \frac{\mathcal{G}_{1}(t; \vec{u},\vec{\mathfrak{s}})^{2}}{\sqrt{2}} + \mathcal{G}_{2}(t; \vec{u},\vec{\mathfrak{s}}) \bigg)dt,
\end{align*}
where
\begin{align}
 \mathcal{H}_{1}(t; \vec{u},\vec{\mathfrak{s}}):= &\;1 + \sum_{\ell=1}^{m} \frac{e^{u_{\ell}}-1}{2}\exp\bigg[ \sum_{j=\ell+1}^{m}u_{j} \bigg] \mathrm{erfc}(t-\mathfrak{s}_{\ell}), \label{function H1} \\
\mathcal{H}_{2}(t; \vec{u},\vec{\mathfrak{s}}):=&\; 1 + \sum_{\ell=1}^{m} \frac{e^{-u_{\ell}}-1}{2}\exp\bigg[ -\sum_{j=1}^{\ell-1}u_{j} \bigg] \mathrm{erfc}(t+\mathfrak{s}_{\ell}), \label{function H2} \\
 \mathcal{G}_{1}(t; \vec{u},\vec{\mathfrak{s}}) := &\;\frac{1}{\mathcal{H}_{1}(t; \vec{u},\vec{\mathfrak{s}})} \sum_{\ell=1}^{m}(e^{u_{\ell}}-1)\exp\bigg[ \sum_{j=\ell+1}^{m}u_{j} \bigg] \frac{e^{-(t-\mathfrak{s}_{\ell})^{2}}}{\sqrt{2\pi}} \frac{1-2\mathfrak{s}_{\ell}^{2}+t\mathfrak{s}_{\ell}-5t^{2}}{3}, \label{function G1} \\
 \mathcal{G}_{2}(t; \vec{u},\vec{\mathfrak{s}}) := &\; \frac{1}{\mathcal{H}_{1}(t; \vec{u},\vec{\mathfrak{s}})} \sum_{\ell=1}^{m} (e^{u_{\ell}}-1)\exp\bigg[ \sum_{j=\ell+1}^{m}u_{j} \bigg] \frac{e^{-(t-\mathfrak{s}_{\ell})^{2}}}{18\sqrt{2\pi}} \bigg( 50t^{5}-70t^{4}\mathfrak{s}_{\ell}-t^{3}\big( 73-62\mathfrak{s}_{\ell}^{2}\big)  \nonumber \\
& +t^{2}\mathfrak{s}_{\ell}\big(33-50\mathfrak{s}_{\ell}^{2}\big) - t \big( 3+18\mathfrak{s}_{\ell}^{2}-16\mathfrak{s}_{\ell}^{4} \big) - \mathfrak{s}_{\ell} \big( 3-22\mathfrak{s}_{\ell}^{2}+8\mathfrak{s}_{\ell}^{4} \big) \bigg). \label{function G2}
\end{align}

In particular, since $\mathbb{E}\big[ \prod_{j=1}^{m} e^{u_{j}\mathrm{N}(r_{j})} \big]$ depends analytically on $u_{1},\dots,u_{m} \in \mathbb{C}$ and is strictly positive for $u_{1},\dots,u_{m} \in \mathbb{R}$, the asymptotic formula \eqref{asymp in main thm} together with Cauchy's formula shows that
\begin{align}\label{der of main result bulk}
\partial_{u_{1}}^{k_{1}}\dots \partial_{u_{m}}^{k_{m}} \bigg\{ \ln \mathbb{E}\bigg[ \prod_{j=1}^{m} e^{u_{j}\mathrm{N}(r_{j})} \bigg] - \bigg( C_{1} n + C_{2} \sqrt{n} + C_{3} +  \frac{C_{4}}{\sqrt{n}} \bigg) \bigg\} = \bigO\bigg(\frac{(\ln n)^{2}}{n}\bigg), \quad \mbox{as } n \to + \infty,
\end{align}
for any $k_{1},\dots,k_{m}\in \mathbb{N}$, and $u_{1},\dots,u_{m}\in \mathbb{R}$.
\end{theorem}

\begin{remark}
In the above expressions for $C_{2},C_{3},C_{4}$, the functions $\mathcal{H}_{1}$, $\mathcal{H}_{2}$ appear inside logarithms. It was proved in \cite[Lemma 1.1]{ChLe2022} that $\mathcal{H}_1(t; \vec{u},\vec{\mathfrak{s}})>0$ and $\mathcal{H}_2(t; \vec{u},\vec{\mathfrak{s}})>0$ for all $t \in \mathbb{R}$, $\vec{u}=(u_{1},\dots,u_{m}) \in \mathbb{R}^{m}$ and $\mathfrak{s}_{1}<\dots<\mathfrak{s}_{m}$. This ensures that $C_{2},C_{3},C_{4}$ are well-defined and real-valued for $\vec{u}=(u_{1},\dots,u_{m}) \in \mathbb{R}^{m}$, $\mathfrak{s}_{1}<\dots<\mathfrak{s}_{m}$.
\end{remark}

In a similar way as in Subsections \ref{subsection: hard edge regime} and \ref{subsection: semi-hard edge regime}, one could derive from Theorem \ref{thm:main thm bulk} asymptotic formulas for the joint cumulants of $\mathrm{N}(r_{1}),\dots,\mathrm{N}(r_{m})$ in the bulk regime. For example, with $r_{\ell}$ as in \eqref{r ell bulk in thm}, i.e. $r_{\ell} = r \big( 1+\frac{\sqrt{2}\, \mathfrak{s}_{\ell}}{r^{b}\sqrt{n}} \big)^{\frac{1}{2b}}$ with $\mathfrak{s}_{\ell}\in \R$, we have
\begin{align}\label{exp in the bulk}
\mathbb{E}[\mathrm{N}(r_{\ell})] = br^{2b}n + \sqrt{2}\, br^{b}\mathfrak{s}_{\ell}\sqrt{n} + \frac{b-1-2\alpha}{2} + \bigO\bigg( \frac{(\ln n)^{2}}{n} \bigg), \qquad \mbox{as } n \to + \infty.
\end{align}
We do not write down the formulas for the other cumulants as they are identical to the corresponding formulas in \cite[Corollary 1.5]{ChLe2022}.

\medskip
It is interesting to compare \eqref{exp in the bulk} with the corresponding formula for the semi-hard edge regime of Corollary \ref{coro:correlation semihard}. To ease the comparison, it is convenient to replace $\mathfrak{s}_{\ell}$ by $-\mathfrak{s}_{\ell}$ in \eqref{def of rell semi-hard}, i.e. here we take $r_{\ell} = \rho \big( 1+\frac{\sqrt{2}\, \mathfrak{s}_{\ell}}{\rho^{b}\sqrt{n}} \big)^{\frac{1}{2b}}$ with $\mathfrak{s}_{\ell}<0$. Then it follows from Corollary \ref{coro:correlation semihard} that
\begin{align}\label{semi hard edge expecta comp}
\mathbb{E}[\mathrm{N}(r_{\ell})] = b\rho^{2b} n + c_1(-\mathfrak{s}_\ell) \sqrt{n} + d_1(-\mathfrak{s}_\ell) + \bigO\big(n^{-\frac{1}{2}}\big), \qquad \mbox{as } n \to +\infty.
\end{align}
Furthermore, by a long but direct analysis, we obtain as $\mathfrak{s}_{\ell}\to - \infty$ that
\begin{align}\label{c1 d1 large s}
& c_1(-\mathfrak{s}_\ell) = \sqrt{2} b \rho^{b}\mathfrak{s}_{\ell} + \bigO(e^{-c\mathfrak{s}_{\ell}^{2}}), & & d_1(-\mathfrak{s}_\ell) = \frac{b-1-2\alpha}{2} + \bigO(e^{-c\mathfrak{s}_{\ell}^{2}}),
\end{align}
for a small but fixed $c>0$. Recall that the asymptotic formula \eqref{semi hard edge expecta comp} is proved for fixed $s_{\ell}<0$. However, if we formally replace $c_1(-\mathfrak{s}_\ell)$ by $\sqrt{2} b \rho^{b}\mathfrak{s}_{\ell}$ and $d_1(-\mathfrak{s}_\ell)$ by $\frac{b-1-2\alpha}{2}$ in \eqref{semi hard edge expecta comp}, then the terms of order $\sqrt{n}$ and $1$ in \eqref{exp in the bulk} and \eqref{c1 d1 large s} are identical. Thus the above computation suggests that (i) the asymptotic formula \eqref{semi hard edge expecta comp} probably holds as $n\to +\infty$ and simultaneously as $\mathfrak{s}_{\ell}\to -\infty$ at a sufficiently slow speed, and (ii) that the transition between the semi-hard edge regime and the bulk regime does not contain an intermediate regime.

\medskip \textbf{Outline of proof.} Relying on the determinantal structure of \eqref{def of point process hard}, we can rewrite $\mathbb{E}\big[ \prod_{\ell=1}^{m} e^{u_{\ell}\mathrm{N}(r_{\ell})} \big]$ as a ratio of two determinants using e.g. \cite[Lemma 2.1]{WebbWong} or \cite[Lemma 1.9]{Charlier 2d jumps} (see also \cite{ByunKangSeo}),
\begin{align}
\mathbb{E}\bigg[ \prod_{\ell=1}^{m} e^{u_{\ell}\mathrm{N}(r_{\ell})} \bigg] & = \frac{1}{n!\mathcal{Z}_{n}} \int_{\mathbb{C}}\dots \int_{\mathbb{C}} \prod_{1 \leq j < k \leq n} |z_{k} -z_{j}|^{2} \prod_{j=1}^{n}w(z_{j}) d^{2}z_{j} \nonumber \\
& = \frac{1}{\mathcal{Z}_{n}} \det \left( \int_{\mathbb{C}} z^{j} \overline{z}^{k} w(z) d^{2}z \right)_{j,k=0}^{n-1} = \frac{1}{\mathcal{Z}_{n}}(2\pi)^{n}\prod_{j=0}^{n-1}\int_{0}^{\rho}u^{2j+1}w(u)du, \label{simplified determinant}
\end{align}
where
\begin{align}\label{def of w and omega}
w(z):=|z|^{2\alpha} e^{-n |z|^{2b}} \omega(|z|), \qquad \omega(x) := \prod_{\ell=1}^{m} \begin{cases}
e^{u_{\ell}}, & \mbox{if } x < r_{\ell}, \\
1, & \mbox{if } x \geq r_{\ell}.
\end{cases}
\end{align}
For $x < \rho$, let us write
\begin{align}\label{def of omegaell}
\omega (x) = \sum_{\ell=1}^{m+1}\omega_{\ell} \mathbf{1}_{[0,r_{\ell})}(x), \qquad \omega_{\ell} := \begin{cases}
e^{u_{\ell}+\dots+u_{m}}-e^{u_{\ell+1}+\dots+u_{m}}, & \mbox{if } \ell < m, \\
e^{u_{m}}-1, & \mbox{if } \ell=m, \\
1, & \mbox{if } \ell=m+1,
\end{cases}
\end{align}
where $r_{m+1}:=\rho$. Note also that $\Omega := e^{u_{1}+\dots+u_{m}} = \sum_{j=1}^{m+1}\omega_{j}$. By \eqref{def of w and omega}--\eqref{def of omegaell},
\begin{align*}
\int_{0}^{\rho}u^{2j+1}w(u)du & = \int_{0}^{\rho} u^{2j+1}u^{2\alpha}e^{-nu^{2b}}du + \sum_{\ell=1}^{m} \omega_{\ell} \int_{0}^{r_{\ell}} u^{2j+1}u^{2\alpha}e^{-nu^{2b}}du
	\\
& = \int_{0}^{n \rho^{2b}} \Big( \frac{y}{n} \Big)^{\frac{j+1+\alpha}{b}} \frac{e^{-y}}{2by}dy + \sum_{\ell=1}^{m} \omega_{\ell} \int_{0}^{n r_{\ell}^{2b}} \Big( \frac{y}{n} \Big)^{\frac{j+1+\alpha}{b}} \frac{e^{-y}}{2by}dy \\
& = \frac{n^{-\frac{j+1+\alpha}{b}}}{2b} \bigg( \gamma(\tfrac{j+1+\alpha}{b},n \rho^{2b}) + \sum_{\ell=1}^{m} \omega_{\ell} \; \gamma(\tfrac{j+1+\alpha}{b},n r_{\ell}^{2b}) \bigg),
\end{align*}
where $\gamma(a,z)$ is the incomplete gamma function
\begin{align*}
\gamma(a,z) = \int_{0}^{z}t^{a-1}e^{-t}dt.
\end{align*}
Hence,
\begin{align*}
(2\pi)^{n}\prod_{j=0}^{n-1}\int_{0}^{\rho}u^{2j+1}w(u)du = n^{-\frac{n^{2}}{2b}}n^{-\frac{1+2\alpha}{2b}n} \frac{\pi^{n}}{b^{n}} \prod_{j=1}^{n} \bigg( \gamma(\tfrac{j+\alpha}{b},n \rho^{2b}) + \sum_{\ell=1}^{m} \omega_{\ell} \; \gamma(\tfrac{j+\alpha}{b},n r_{\ell}^{2b}) \bigg).
\end{align*}
An expression for $\mathcal{Z}_{n}$ in terms of $\gamma$ can be found by setting $\omega_{1}=\dots=\omega_{m} =0$ above:
\begin{align*}
\mathcal{Z}_{n} = n^{-\frac{n^{2}}{2b}}n^{-\frac{1+2\alpha}{2b}n} \frac{\pi^{n}}{b^{n}} \prod_{j=1}^{n}  \gamma(\tfrac{j+\alpha}{b},n \rho^{2b}),
\end{align*}
and therefore, by \eqref{simplified determinant},
\begin{align}\label{main exact formula}
\ln \mathcal{E}_{n} = \sum_{j=1}^{n} \ln \bigg(1+\sum_{\ell=1}^{m} \omega_{\ell} \frac{\gamma(\tfrac{j+\alpha}{b},nr_{\ell}^{2b})}{\gamma(\tfrac{j+\alpha}{b},n \rho^{2b})} \bigg),
\end{align}
where $\mathcal{E}_{n}:=\mathbb{E}\big[ \prod_{\ell=1}^{m} e^{u_{\ell}\mathrm{N}(r_{\ell})} \big]$. The above formula is the starting point of the proofs of Theorems \ref{thm:main thm hard}, \ref{thm:main thm semi-hard} and \ref{thm:main thm bulk}. We infer from \eqref{main exact formula} that, to obtain the large $n$ asymptotics of $\mathcal{E}_{n}$, we need the asymptotics of $\gamma(a,z)$ as $a,z$ tend to $+\infty$ at various relative speeds. The uniform asymptotics of $\gamma$ are actually well-known, and we recall them in Appendix \ref{section:uniform asymp gamma}.


\medskip The approach considered here shows similarities with \cite{Charlier 2d jumps, Charlier 2d gap, ChLe2022, ByunCharlier}. The large $n$ behavior of $\gamma(\tfrac{j+\alpha}{b},n \rho^{2b})$ depends crucially on whether $\frac{j+\alpha}{b} \ll n\rho^{2b}$, $\frac{j+\alpha}{b}\approx n\rho^{2b}$ or $\frac{j+\alpha}{b}\gg n\rho^{2b}$. Hence, for the proofs of both Theorem \ref{thm:main thm hard} and Theorem \ref{thm:main thm semi-hard}, we will split the sum in \eqref{main exact formula} into four parts,
\begin{align*}
\ln \mathcal{E}_{n} = S_{0} + S_{1} + S_{2} + S_{3},
\end{align*}
where $S_{0},\ldots,S_{3}$ are defined in \eqref{def of S0 and S1 hard}--\eqref{def of S2 and S3 hard}. The sum $S_{0}$ involves a large but fixed number of $j$'s; the sum $S_{1}$ corresponds to those $j$'s that are ``large" and for which $\frac{j+\alpha}{b} \ll n\rho^{2b}$; and the sum $S_{3}$ involves the $j$'s for which $\frac{j+\alpha}{b} \gg n\rho^{2b}$. For both theorems, the most delicate sum is $S_{2}$: this sum involves the $j$-terms in \eqref{main exact formula} for which $\frac{j+\alpha}{b}\approx n\rho^{2b}$, and therefore critical transitions occur in the asymptotic behavior of the functions $\{\gamma(\tfrac{j+\alpha}{b},nr_{\ell}^{2b})\}_{\ell=1}^{m}$ and $\gamma(\tfrac{j+\alpha}{b},n \rho^{2b})$ when performing the sum $S_{2}$.

\medskip For the two novel regimes considered in this work, namely the hard edge regime \eqref{def of rell hard} and the semi-hard edge regime \eqref{def of rell semi-hard}, the proofs require precise Riemann sum approximations for functions with singularities (the singularities are more difficult to handle in the hard edge regime). In comparison, the bulk regime of Theorem \ref{thm:main thm bulk} (whose proof is omitted here as it is essentially identical to \cite{ChLe2022}) is simpler as the corresponding Riemann sum approximations involve more well-behaved functions.

\medskip \textbf{Related works.} By \eqref{simplified determinant}--\eqref{def of w and omega}, we have $\mathcal{E}_{n} = D_{n}/\mathcal{Z}_{n}$ where $D_{n}$ is an $n \times n$ determinant with a rotation-invariant weight supported on $\mathbb{C}$ and with $m$ merging discontinuities: for Theorem \ref{thm:main thm hard}, the discontinuities are merging near the hard edge at speed $1/n$; for Theorem \ref{thm:main thm semi-hard}, the discontinuities are merging near the hard edge at speed $1/\sqrt{n}$; and for Theorem \ref{thm:main thm bulk}, the discontinuities are merging in the bulk at speed $1/\sqrt{n}$.

The problem of determining asymptotics of structured determinants with discontinuities has a long history. When the weight is supported on the unit circle or on the real line, this problem was studied by many authors, including Lenard, Fisher, Hartwig, Widom, Basor, B\"ottcher, Silbermann, Ehrhardt, Deift, Its and Krasovsky, see e.g. \cite{BasMor, DIKreview, Charlier} for some historical background, \cite{CD2019, CharlierBessel, LCX2022, DXZ2022 bis, DZ2022} for structured determinants with discontinuities near a  hard edge, and \cite{CK2015, Fahs} for merging discontinuities in the bulk.

A central theme in normal random matrix theories concerns the asymptotic distribution of linear statistics $\sum_1^n f(z_j)$ where $f$ is a given test-function on the plane.
The analytical situation depends crucially on whether or not $f$ belongs to the Sobolev class $W^{1,2}$, since this is believed to be the right condition under which we obtain a well-defined limiting normal distribution (say, after subtracting the expectation). This is
rigorously verified in the Ginibre case in \cite{RV} and if the test-function is $C^2$-smooth for more general ensembles in \cite{AM}. However, the class $W^{1,2}$ excludes certain natural test-functions,
or the logarithm $l_z(w)=\ln|z-w|$ (or close relatives like Green's functions) which is used in connection with the Gaussian free field, and characteristic functions $\chi_E(z)$ which define counting statistics.



The works \cite{CE2020, L et al 2019, FenzlLambert, Charlier 2d jumps, ChLe2022} were already mentioned earlier in the introduction and deal with determinants with discontinuities in dimension two.
Determinants corresponding to the logarithmic test-function $l_z$, for some special ensembles, have attracted considerable attention in recent years \cite{WebbWong, DeanoSimm, ByunCharlier, BDH2022}, see also e.g. \cite{AKS2018,BBLM2015, BGM17, BEG18, LeeYang3}.

\section{Proof of Theorem \ref{thm:main thm hard}}\label{section:proof edge}
In this section, the $r_{\ell}$'s are as in \eqref{def of rell hard}. Our proof strategy follows \cite{Charlier 2d jumps, Charlier 2d gap, ChLe2022, ByunCharlier}.

\medskip Let us define
\begin{align}\label{jplusjminusdef}
& j_{-}:=\lceil \tfrac{bn\rho^{2b}}{1+\epsilon} - \alpha \rceil, \qquad j_{+} := \lfloor  \tfrac{bn\rho^{2b}}{1-\epsilon} - \alpha \rfloor,
\end{align}
where $\epsilon > 0$ is independent of $n$. We assume that $\epsilon$ is sufficiently small such that
\begin{align}\label{brho2b1epsilon}
\frac{b\rho^{2b}}{1-\epsilon} < 1,
\end{align}
so that, recalling the formula \eqref{main exact formula} for $\ln \mathcal{E}_{n}$, we can write
\begin{align}\label{log Dn as a sum of sums hard}
\ln \mathcal{E}_{n} = S_{0} + S_{1} + S_{2} + S_{3},
\end{align}
where
\begin{align}
& S_{0} = \sum_{j=1}^{M'} \ln \bigg( 1+\sum_{\ell=1}^{m} \omega_{\ell} \frac{\gamma(\tfrac{j+\alpha}{b},nr_{\ell}^{2b})}{\gamma(\tfrac{j+\alpha}{b},n \rho^{2b})} \bigg), & & S_{1} = \sum_{j=M'+1}^{j_{-}-1} \ln \bigg( 1+\sum_{\ell=1}^{m} \omega_{\ell} \frac{\gamma(\tfrac{j+\alpha}{b},nr_{\ell}^{2b})}{\gamma(\tfrac{j+\alpha}{b},n \rho^{2b})} \bigg), \label{def of S0 and S1 hard} \\
& S_{2} = \sum_{j=j_{-}}^{j_{+}} \ln \bigg( 1+\sum_{\ell=1}^{m} \omega_{\ell} \frac{\gamma(\tfrac{j+\alpha}{b},nr_{\ell}^{2b})}{\gamma(\tfrac{j+\alpha}{b},n \rho^{2b})} \bigg), & & S_{3}=\sum_{j=j_{+}+1}^{n} \ln \bigg( 1+\sum_{\ell=1}^{m} \omega_{\ell} \frac{\gamma(\tfrac{j+\alpha}{b},nr_{\ell}^{2b})}{\gamma(\tfrac{j+\alpha}{b},n \rho^{2b})} \bigg). \label{def of S2 and S3 hard}
\end{align}
In the above, $M'>0$ is an integer independent of $n$. For $j=1,\dots,n$ and $k =1,\dots,m$, we also define $a_{j}:=\frac{j+\alpha}{b}$, and
\begin{subequations}\label{def etajl hard}
\begin{align}
& \lambda_{j,k} := \frac{bnr_{k}^{2b}}{j+\alpha}, & & \eta_{j,k} := (\lambda_{j,k}-1)\sqrt{\frac{2 (\lambda_{j,k}-1-\ln \lambda_{j,k})}{(\lambda_{j,k}-1)^{2}}}, \\
& \lambda_{j} := \frac{bn\rho^{2b}}{j+\alpha}, & & \eta_{j} := (\lambda_{j}-1)\sqrt{\frac{2 (\lambda_{j}-1-\ln \lambda_{j})}{(\lambda_{j}-1)^{2}}}.
\end{align}
\end{subequations}
With this notation, the summand in \eqref{def of S0 and S1 hard}--\eqref{def of S2 and S3 hard} can be rewritten as
\begin{align*}
\ln \bigg( 1+\sum_{\ell=1}^{m} \omega_{\ell} \frac{\gamma(a_{j},a_{j}\lambda_{j,\ell})}{\gamma(a_{j},a_{j}\lambda_{j})} \bigg).
\end{align*}
The notation $\eta_{j}$ and $\eta_{j,k}$ in \eqref{def of S0 and S1 hard}--\eqref{def of S2 and S3 hard} is introduced in the same spirit as the notation $\eta$ of Lemma \ref{lemma: uniform}.
Recall also that $\Omega := e^{u_{1}+\dots+u_{m}} = \sum_{j=1}^{m+1}\omega_{j}$.

\begin{lemma}\label{lemma: S0 hard}
For any $x_{1},\dots,x_{m} \in \mathbb{R}$, there exists $\delta > 0$ such that
\begin{align}\label{asymp of S0 hard}
S_{0} = M' \ln \Omega + \bigO(e^{-cn}), \qquad \mbox{as } n \to + \infty,
\end{align}
uniformly for $u_{1} \in D_\delta(x_1),\dots,u_{m} \in D_\delta(x_m)$.
\end{lemma}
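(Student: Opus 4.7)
The plan is to exploit the fact that the sum defining $S_0$ has only $M'$ terms, with $M'$ independent of $n$, so for each $j \in \{1,\dots,M'\}$ the index $a_j=(j+\alpha)/b$ stays bounded while the second argument of both $\gamma$'s grows linearly in $n$. Under these conditions the incomplete gamma functions converge to $\Gamma(a_j)$ up to exponentially small errors, and this forces the quotient appearing inside the logarithm to be $1+\bigO(e^{-cn})$.

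More concretely, I would first recall that $\gamma(a,z)=\Gamma(a)-\Gamma(a,z)$, where for bounded $a>0$ and $z\to +\infty$ the classical estimate $\Gamma(a,z)=z^{a-1}e^{-z}(1+\bigO(1/z))$ holds. Since $a_j=(j+\alpha)/b$ is bounded for $j\le M'$ and since both $nr_\ell^{2b}=n\rho^{2b}-\rho^{2b}t_\ell$ and $n\rho^{2b}$ grow like $n$, one obtains
\begin{align*}
\gamma(a_j,nr_\ell^{2b})=\Gamma(a_j)+\bigO(e^{-cn}),\qquad \gamma(a_j,n\rho^{2b})=\Gamma(a_j)+\bigO(e^{-cn}),
\end{align*}
for some $c>0$ depending only on $b,\rho$ and the $t_\ell$'s (and $M'$, but $M'$ is fixed). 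Because $\alpha>-1$ and $j\ge 1$, we have $a_j\ge (1+\alpha)/b>0$, so $\Gamma(a_j)$ is bounded away from $0$. Dividing therefore yields
\begin{align*}
\frac{\gamma(a_j,nr_\ell^{2b})}{\gamma(a_j,n\rho^{2b})}=1+\bigO(e^{-cn}),\qquad 1\le j\le M',\quad 1\le\ell\le m.
\end{align*}

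Next I would use the telescoping identity
\begin{align*}
1+\sum_{\ell=1}^{m}\omega_\ell=\Omega=e^{u_1+\dots+u_m},
\end{align*}
which follows directly from \eqref{def of Omega intro}--\eqref{def of Omega j intro}. Combined with the previous step, and with the uniform bound $|\omega_\ell|\le C$ for $u_j\in D_\delta(x_j)$, this gives
\begin{align*}
1+\sum_{\ell=1}^{m}\omega_\ell\,\frac{\gamma(a_j,nr_\ell^{2b})}{\gamma(a_j,n\rho^{2b})}=\Omega+\bigO(e^{-cn}),
\end{align*}
uniformly in $u_j\in D_\delta(x_j)$. By choosing $\delta>0$ small enough, $|\Omega|=|e^{u_1+\dots+u_m}|$ stays bounded away from $0$ throughout the polydisk, so $\ln$ is analytic on a neighborhood of the right-hand side and
\begin{align*}
\ln\!\bigg(1+\sum_{\ell=1}^{m}\omega_\ell\,\frac{\gamma(a_j,nr_\ell^{2b})}{\gamma(a_j,n\rho^{2b})}\bigg)=\ln\Omega+\bigO(e^{-cn}).
\end{align*}
Summing over the fixed finite range $j=1,\dots,M'$ yields \eqref{asymp of S0 hard}.

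I do not expect a genuine obstacle here: this lemma is the ``easy bulk-of-indices-far-from-the-transition'' estimate, and the only small point of care is to keep $\Omega$ away from $0$ (hence the need for a sufficiently small $\delta$ and the uniformity of the exponentially small error). The real work in the paper happens for $S_1,S_2,S_3$, where $a_j$ and the arguments $nr_\ell^{2b}$, $n\rho^{2b}$ become comparable and the uniform asymptotics of $\gamma(a,z)$ recalled in the appendix must be used in full strength.
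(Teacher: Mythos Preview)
Your proof is correct and follows essentially the same approach as the paper: both use that for $j\le M'$ the first argument $a_j=(j+\alpha)/b$ is fixed while the second arguments $nr_\ell^{2b}$ and $n\rho^{2b}$ tend to $+\infty$, so $\gamma(a_j,\cdot)\to\Gamma(a_j)$ with an exponentially small error (the paper cites this as Lemma~\ref{lemma:various regime of gamma}), and then sum the identity $1+\sum_{\ell=1}^m\omega_\ell=\Omega$ over the $M'$ terms. One small remark: since $\Omega=e^{u_1+\cdots+u_m}$ is an exponential, $|\Omega|$ is automatically bounded away from $0$ on any compact polydisk, so no smallness of $\delta$ is actually needed for that step.
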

\begin{proof}
We infer from \eqref{def of S0 and S1 hard} and Lemma \ref{lemma:various regime of gamma} that
\begin{align*}
S_{0} & = \sum_{j=1}^{M'} \ln \bigg( \sum_{\ell=1}^{m+1} \omega_{\ell} \big[1 + \bigO(e^{-cn}) \big] \bigg) = \sum_{j=1}^{M'} \ln \Omega + \bigO(e^{-cn}), \quad \mbox{as } n \to +\infty.
\end{align*}
In the above, the error terms before the second equality are independent of $u_{1},\dots,u_{m}$, so the claim follows.
\end{proof}

\begin{lemma}\label{lemma: S2km1 hard}
The constant $M'$ can be chosen sufficiently large such that the following holds. For any $x_{1},\dots,x_{m} \in \mathbb{R}$, there exists $\delta > 0$ such that
\begin{align*}
& S_{1} = (j_{-}-M'-1) \ln \Omega + \bigO(e^{-cn}),
\end{align*}
as $n \to +\infty$ uniformly for $u_{1} \in D_\delta(x_1),\dots,u_{m} \in D_\delta(x_m)$.
\end{lemma}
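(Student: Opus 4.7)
\medskip\noindent\textbf{Proof plan.} My strategy is to show that every summand in $S_1$ equals $\ln\Omega+O(e^{-cn})$, and then add up the $O(n)$ contributions to obtain $(j_--M'-1)\ln\Omega+O(e^{-c'n})$. First I would check that for $M'+1\le j\le j_--1$ the reduced variables
\[\lambda_j=\frac{bn\rho^{2b}}{j+\alpha}\ge 1+\epsilon,\qquad \lambda_{j,\ell}=\lambda_j\bigg(1-\frac{t_\ell}{n}\bigg)\ge 1+\tfrac{\epsilon}{2},\]
both hold uniformly for $n$ sufficiently large; the first is immediate from the definition $j<bn\rho^{2b}/(1+\epsilon)-\alpha$, and the second follows since $t_\ell=O(1)$.

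Next I would invoke the uniform asymptotics of $\gamma(a,z)$ recorded in Appendix~\ref{section:uniform asymp gamma}, applied in the regime $z/a\ge 1+\epsilon/2$, to obtain
\[\frac{\Gamma(a_j)-\gamma(a_j,n\rho^{2b})}{\Gamma(a_j)}=O\!\bigl(e^{-a_j(\lambda_j-1-\ln\lambda_j)}/\sqrt{a_j}\bigr),\]
and an analogous estimate with $\rho$ replaced by $r_\ell$. The crucial observation is that the exponent $a_j(\lambda_j-1-\ln\lambda_j)=n\rho^{2b}-a_j-a_j\ln(n\rho^{2b}/a_j)$ is of order $n$ throughout the summation range: for $j$ close to $M'+1$ the parameter $a_j=O(1)$ while $\lambda_j=\Theta(n)$, so the exponent equals $n\rho^{2b}-O(\ln n)$; for $j$ close to $j_-$ one has $a_j=\Theta(n)$ and $\lambda_j-1-\ln\lambda_j$ is bounded below by a positive constant depending only on $\epsilon$, so the exponent is again $\Theta(n)$. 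Taking $M'$ sufficiently large ensures that the expansion applies cleanly at the lower end of the range, where it is most delicate since $a_j$ is small and the error terms in Appendix~A involve inverse powers of $a_j$.

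Combining the two estimates gives
\[R_{j,\ell}:=\frac{\gamma(a_j,nr_\ell^{2b})}{\gamma(a_j,n\rho^{2b})}=1+O(e^{-cn}),\]
uniformly in $\ell\in\{1,\dots,m\}$, $j\in[M'+1,j_--1]$, and $u_1,\dots,u_m$ in the prescribed disks. Using $\omega_{m+1}=1$ and $\sum_{\ell=1}^{m+1}\omega_\ell=\Omega$, this yields $1+\sum_{\ell=1}^m\omega_\ell R_{j,\ell}=\Omega+O(e^{-cn})$. Since $\Omega=e^{u_1+\cdots+u_m}$ stays uniformly bounded, and uniformly bounded away from $0$, for $u_j\in D_\delta(x_j)$ with $\delta$ small, taking the principal logarithm produces $\ln\Omega+O(e^{-cn})$ per term; summing over the $j_--M'-1=O(n)$ terms gives the stated asymptotic with residual error $O(n\,e^{-cn})=O(e^{-c'n})$ for any $c'<c$.

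The main obstacle is establishing \emph{uniform} exponential decay of the error across the whole summation range, which straddles two qualitatively different asymptotic regimes for the incomplete gamma function: $a_j$ bounded with $\lambda_j$ very large (small $j$), versus $a_j$ comparable to $n$ with $\lambda_j$ bounded but bounded away from $1$ (large $j$). The freedom to enlarge $M'$ is used precisely to absorb the imprecise region near the lower end of the sum, where the standard large-$a$ expansions would otherwise produce error terms not sharp enough to sum against the growing number of summands.
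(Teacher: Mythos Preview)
Your proposal is correct and takes essentially the same approach as the paper. The only difference worth noting is in establishing the uniform lower bound $a_j\eta_j^2\ge cn$ across the whole summation range: you argue at the two endpoints ($j$ near $M'+1$ and $j$ near $j_-$), whereas the paper observes that $\partial_j(a_j\eta_j^2)=-\tfrac{2}{b}\ln\lambda_j<0$, so the exponent is monotone decreasing in $j$ and a single check at $j=j_-$ suffices.
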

\begin{proof}
According to (\ref{def of S0 and S1 hard}) and (\ref{def etajl hard}), we have
$$S_{1} = \sum_{j=M'+1}^{j_{-}-1} \ln \bigg( 1+\sum_{\ell=1}^{m} \omega_{\ell}
\frac{\gamma(a_j, a_j \lambda_{j, \ell}) }{ \gamma(a_j, a_j \lambda_j )} \bigg).$$
There is a $\delta>0$ such that $\lambda_j > 1 + \delta$ and $\lambda_{j,\ell} = \lambda_j (1 - t_\ell/n) > 1 + \delta$ for all $j \in \{M'+1, \dots, j_{-}-1\}$ and $\ell \in \{1,\dots,m\}$. Hence, by Lemma \ref{lemma: uniform} $(i)$ we can choose $M'$  such that
$$S_{1} = \sum_{j=M'+1}^{j_{-}-1} \ln \bigg( 1+\sum_{\ell=1}^{m} \omega_{\ell}
\frac{1 + \bigO(e^{-\frac{a_j \eta_{j, \ell}^2}{2}}) }{1 + \bigO(e^{-\frac{a_j \eta_j^2}{2}})} \bigg),$$
where the error terms are uniform with respect to $j$ and $\ell$. The functions $j \mapsto a_j \eta_j^2$ and $j \mapsto a_j \eta_{j,\ell}^2$ are decreasing, because
$$\partial_j (a_j \eta_j^2) = -\frac{2}{b}\ln{\lambda_j} < 0, \qquad
\partial_j (a_j \eta_{j,\ell}^2) = -\frac{2}{b}\ln{\lambda_{j,\ell}} < 0.$$
Moreover, we have $a_{j_-} \eta_{j_-}^2 > 2c n$ and hence $a_{j_-} \eta_{j_-,\ell}^2 = a_{j_-} \eta_{j_-}^2 + \bigO(1) > cn$ for all sufficiently large $n$ for some $c > 0$. It follows that
$$S_{1} = \sum_{j=M'+1}^{j_{-}-1} \ln \bigg( 1+\sum_{\ell=1}^{m} \omega_{\ell}
\frac{1 + \bigO(e^{-cn})}{1 + \bigO(e^{-cn})} \bigg)
= \sum_{j=M'+1}^{j_{-}-1} \ln\bigg( 1+ \sum_{\ell=1}^{m} \omega_{\ell} \bigg) + \bigO(e^{-cn}),$$
from which the desired conclusion follows.
\end{proof}

To obtain the large $n$ asymptotics of $S_{3}$, we will rely on the following lemma.
\begin{lemma}\label{lemma:Riemann sum NEW}[Adapted from \cite[Lemma 3.4]{Charlier 2d gap}]
Let $A=A(n),a_{0}=a_{0}(n)$, $B=B(n),b_{0}=b_{0}(n)$ be bounded functions of $n \in \{1,2,\dots\}$, such that
\begin{align*}
& a_{n} := An + a_{0} \qquad \mbox{ and } \qquad b_{n} := Bn + b_{0}
\end{align*}
are integers. Assume also that $B-A$ is positive and remains bounded away from $0$. Let $f$ be a function independent of $n$, which is $C^{2}([\min\{\frac{a_{n}}{n},A\},\max\{\frac{b_{n}}{n},B\}])$ for all $n\in \{1,2,\dots\}$. Then as $n \to + \infty$, we have
\begin{multline}
 \sum_{j=a_{n}}^{b_{n}}f(\tfrac{j}{n}) = n \int_{A}^{B}f(x)dx + \frac{(1-2a_{0})f(A)+(1+2b_{0})f(B)}{2}   \\
+ \bigO \bigg( \frac{\mathfrak{m}_{A,n}(f')+\mathfrak{m}_{B,n}(f')}{n} + \sum_{j=a_{n}}^{b_{n}-1} \frac{\mathfrak{m}_{j,n}(f'')}{n^{2}} \bigg), \label{sum f asymp gap NEW}
\end{multline}
where, for a given function $g$ continuous on $\big[\min\{\frac{a_{n}}{n},A\},\max\{\frac{b_{n}}{n},B\}\big]$,
\begin{align*}
\mathfrak{m}_{A,n}(g) := \max_{x \in [\min\{\frac{a_{n}}{n},A\},\max\{\frac{a_{n}}{n},A\}]}|g(x)|, \quad \mathfrak{m}_{B,n}(g) := \max_{x \in [\min\{\frac{b_{n}}{n},B\},\max\{\frac{b_{n}}{n},B\}]}|g(x)|,
\end{align*}
and for $j \in \{a_{n},\dots,b_{n}-1\}$, $\mathfrak{m}_{j,n}(g) := \max_{x \in [\frac{j}{n},\frac{j+1}{n}]}|g(x)|$.
\end{lemma}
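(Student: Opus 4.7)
The plan is to apply the classical Euler--Maclaurin summation formula to $g(j) := f(j/n)$ on $\{a_n, a_n+1, \ldots, b_n\}$ and then account for the discrepancy between the natural endpoints $a_n/n = A + a_0/n$, $b_n/n = B + b_0/n$ and the target endpoints $A$, $B$ via Taylor expansion around $A$ and $B$. Concretely, Euler--Maclaurin up to the order at which $f''''$ first appears gives
\begin{equation*}
\sum_{j=a_n}^{b_n} g(j) = \int_{a_n}^{b_n} g(x)\, dx + \tfrac{1}{2}\bigl(g(a_n) + g(b_n)\bigr) + \tfrac{1}{12}\bigl(g'(b_n) - g'(a_n)\bigr) + R,
\end{equation*}
with a remainder $R$ involving $g''''$ integrated against a periodic Bernoulli polynomial on $[a_n, b_n]$. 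Since $g^{(k)}(x) = n^{-k} f^{(k)}(x/n)$, the change of variables $y = x/n$ turns the integral into $n \int_{a_n/n}^{b_n/n} f(y)\, dy$, the trapezoidal term into $\tfrac{1}{2}(f(a_n/n) + f(b_n/n))$, the Bernoulli term into $\tfrac{1}{12n}(f'(b_n/n) - f'(a_n/n))$, and bounds $R$ by a constant times $n^{-4} \sum_{j=a_n}^{b_n-1} \mathfrak{m}_{j,n}(f'''')$, which matches the last term of the claimed error.

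Next, I would Taylor expand each piece around $A$ and $B$. For the integral I would split
\begin{equation*}
n \int_{a_n/n}^{b_n/n} f = n \int_A^B f + n\int_B^{B + b_0/n} f - n \int_A^{A + a_0/n} f,
\end{equation*}
expanding the two boundary integrals to third order in $1/n$ with Lagrange remainders controlled by $\mathfrak{m}_{A,n}(f''')/n^3$ and $\mathfrak{m}_{B,n}(f''')/n^3$; this produces contributions of the form $\pm[r_0 f(\cdot) + \tfrac{r_0^2}{2n} f'(\cdot) + \tfrac{r_0^3}{6n^2} f''(\cdot)]$ for $r_0 \in \{a_0, b_0\}$. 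Similarly $f(a_n/n)$ and $f(b_n/n)$ are expanded to order $1/n^2$, and $f'(b_n/n) - f'(a_n/n)$ to order $1/n$. Collecting powers of $1/n$: at order $1$, the linear contributions $b_0 f(B) - a_0 f(A)$ from the integral combine with the trapezoidal endpoint values $\tfrac{1}{2}(f(A) + f(B))$ to give exactly $\tfrac{1}{2}[(1 - 2a_0) f(A) + (1 + 2b_0) f(B)]$; at order $1/n$, the integral contributes $\tfrac{1}{2n}(b_0^2 f'(B) - a_0^2 f'(A))$, the trapezoidal term contributes $\tfrac{1}{2n}(a_0 f'(A) + b_0 f'(B))$, and the Bernoulli correction contributes $\tfrac{1}{12n}(f'(B) - f'(A))$, which sum to $[(-1 + 6a_0 - 6a_0^2) f'(A) + (1 + 6b_0 + 6b_0^2) f'(B)]/(12n)$.

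The main bookkeeping challenge will be the $1/n^2$ coefficient, which receives three kinds of contributions: the cubic term $\pm r_0^3/(6n^2) f''(\cdot)$ from the boundary-integral expansions, the quadratic term $r_0^2/(4n^2) f''(\cdot)$ from the trapezoidal Taylor expansion, and the first-order term $\pm r_0/(12 n^2) f''(\cdot)$ from expanding $f'$ at the shifted endpoints inside the Bernoulli correction. One must verify that these combine into $[(-a_0 + 3a_0^2 - 2a_0^3) f''(A) + (b_0 + 3b_0^2 + 2b_0^3) f''(B)]/(12 n^2)$; the identities $-a_0/12 + a_0^2/4 - a_0^3/6 = (-a_0 + 3a_0^2 - 2a_0^3)/12$ and $b_0/12 + b_0^2/4 + b_0^3/6 = (b_0 + 3b_0^2 + 2b_0^3)/12$ close this computation. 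Finally, the $\mathfrak{m}_{A,n}(f''') + \mathfrak{m}_{B,n}(f''')$ part of the stated error comes from the Lagrange remainders of the Taylor expansions near $A$ and $B$, while the $\sum_j \mathfrak{m}_{j,n}(f'''') / n^4$ part comes from the Euler--Maclaurin remainder $R$. Once the expansions are organized cleanly, what remains is elementary but slightly tedious algebra; I expect that the bookkeeping at order $1/n^2$ is the only delicate point.
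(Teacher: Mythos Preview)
Your approach is correct and complete: Euler--Maclaurin on $g(j)=f(j/n)$ followed by Taylor expansion of each endpoint piece about $A$ and $B$ reproduces exactly the stated coefficients, and your order-by-order bookkeeping at $1$, $1/n$, and $1/n^2$ checks out line by line. The only minor omission is that the full Euler--Maclaurin formula with remainder in $g^{(4)}$ also produces a term $-\tfrac{1}{720}\bigl(g'''(b_n)-g'''(a_n)\bigr) = \bigO\bigl(n^{-3}(\mathfrak{m}_{A,n}(f''')+\mathfrak{m}_{B,n}(f'''))\bigr)$, which is harmlessly absorbed into the claimed error.

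Note that the paper does not give its own proof of this lemma: it is quoted verbatim from \cite[Lemma 3.4]{Charlier 2d gap}. The argument there is essentially the one you outline, so there is no meaningful difference in approach to discuss.
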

Following the approach of \cite{Charlier 2d jumps, Charlier 2d gap}, we define
\begin{align} \label{def of theta n eps hard}
\theta_{+}^{(n,\epsilon)} = \bigg( \frac{b n \rho^{2b}}{1-\epsilon}-\alpha \bigg)-\bigg\lfloor \frac{b n \rho^{2b}}{1-\epsilon}-\alpha \bigg\rfloor, \qquad \theta_{-}^{(n,\epsilon)} = \bigg\lceil \frac{b n \rho^{2b}}{1+\epsilon}-\alpha \bigg\rceil-\bigg( \frac{b n \rho^{2b}}{1+\epsilon}-\alpha \bigg).
\end{align}
\begin{lemma}\label{lemma:S3 asymp hard}
For any $x_{1},\dots,x_{m} \in \mathbb{R}$, there exists $\delta > 0$ such that
\begin{align*}
& S_{3} = n \int_{\frac{b \rho^{2b}}{1-\epsilon}}^{1} f_{1}(x)dx + \int_{\frac{b\rho^{2b}}{1-\epsilon}}^{1} f(x)dx + (\alpha+\theta_{+}^{(n,\epsilon)}-\tfrac{1}{2})f_{1}(\tfrac{b\rho^{2b}}{1-\epsilon})+\tfrac{1}{2}f_{1}(1) + \bigO(n^{-1}),
\end{align*}
as $n \to +\infty$ uniformly for $u_{1} \in D_\delta(x_1),\dots,u_{m} \in D_\delta(x_m)$, where $f_{1}(x) := \ln \big( 1+\mathsf{T}_{0}(x) \big)$ and $f$ and $\mathsf{T}_{j}$ are defined in \eqref{def of f hard} and \eqref{def of T hard}.
\end{lemma}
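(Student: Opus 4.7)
The strategy is to combine a $1/n$-expansion of the summand, obtained from the uniform large-$a$ asymptotics of $\gamma(a,a\lambda)$ in the regime $\lambda \le 1-\epsilon$ (Appendix \ref{section:uniform asymp gamma}), with the Euler--Maclaurin-type Riemann sum formula of Lemma \ref{lemma:Riemann sum NEW}. First, for $j\in\{j_{+}+1,\dots,n\}$ both $\lambda_{j}$ and $\lambda_{j,\ell}=\lambda_{j}(1-t_{\ell}/n)$ are uniformly bounded away from $1$ by \eqref{jplusjminusdef}, so the relevant asymptotic takes the form
\begin{align*}
\gamma(a,a\lambda) = \frac{(a\lambda)^{a}e^{-a\lambda}}{a(1-\lambda)}\Big(1+\tfrac{\varphi_{1}(\lambda)}{a}+\bigO(a^{-2})\Big),
\end{align*}
with an explicit $\varphi_{1}$ bounded on $[c,1-\epsilon]$. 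Specialising at $z=n\rho^{2b}$ and at $z=nr_{\ell}^{2b}=n\rho^{2b}(1-t_{\ell}/n)$ and taking the ratio, the leading prefactors combine through the identity $(1-t_{\ell}/n)^{a_{j}}e^{n(\rho^{2b}-r_{\ell}^{2b})}=\exp[-\tfrac{t_{\ell}}{b}(\tfrac{j}{n}-b\rho^{2b})+\bigO(n^{-1})]$, which (after a routine expansion) should yield
\begin{align*}
\frac{\gamma(a_{j},nr_{\ell}^{2b})}{\gamma(a_{j},n\rho^{2b})} = \exp\Big[-\tfrac{t_{\ell}}{b}\big(\tfrac{j}{n}-b\rho^{2b}\big)\Big]\Big(1+\tfrac{\mathfrak{e}_{\ell}(j/n)}{n}+\bigO(n^{-2})\Big),
\end{align*}
uniformly in $u_{1},\dots,u_{m}$ in small complex disks, with $\mathfrak{e}_{\ell}$ computable from the $\varphi_{1}$ contributions and from the $t_{\ell}^{2}$-term in the expansion of $(1-t_{\ell}/n)^{a_{j}}$.

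The second step is to plug this into the summand, factor out $f_{1}(j/n)=\ln(1+\mathsf{T}_{0}(j/n))$, and expand the logarithm to order $n^{-1}$. The key algebraic check is that the resulting $n^{-1}$ coefficient collapses onto the compact form $f(x)$ defined in \eqref{def of f hard}. I expect this to reduce to manipulating sums of the type $\sum_{\ell}\omega_{\ell}t_{\ell}^{k}e^{-t_{\ell}(x-b\rho^{2b})/b}=\mathsf{T}_{k}(x;\vec t,\vec u)$ for $k=1,2$, since the only $t_{\ell}$-dependence in the leading factor sits inside the exponent $-t_{\ell}(x-b\rho^{2b})/b$, so differentiating in $t_{\ell}$ naturally produces the $\mathsf{T}_{1},\mathsf{T}_{2}$ structure. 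Carrying out this identification, and checking that the $\bigO(n^{-2})$ remainder and its derivatives in $x=j/n$ are uniformly controlled (as required by Lemma \ref{lemma:Riemann sum NEW}), is the main technical obstacle of the lemma.

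Finally, I would apply Lemma \ref{lemma:Riemann sum NEW} with $a_{n}=j_{+}+1$, $b_{n}=n$, $A=b\rho^{2b}/(1-\epsilon)$, $B=1$. From \eqref{def of theta n eps hard} one has $j_{+}=bn\rho^{2b}/(1-\epsilon)-\alpha-\theta_{+}^{(n,\epsilon)}$, giving $a_{0}=(j_{+}+1)-An=1-\alpha-\theta_{+}^{(n,\epsilon)}$ and $b_{0}=0$. The leading Riemann-sum contribution produces $n\int_{A}^{1}f_{1}\,dx+\int_{A}^{1}f\,dx$, the boundary correction $\tfrac{1}{2}[(1-2a_{0})f_{1}(A)+f_{1}(B)]$ simplifies to $(\alpha+\theta_{+}^{(n,\epsilon)}-\tfrac{1}{2})f_{1}(A)+\tfrac{1}{2}f_{1}(1)$, and the pointwise $\bigO(n^{-2})$ remainder summed over $\bigO(n)$ indices, together with the higher-order Euler--Maclaurin corrections from Lemma \ref{lemma:Riemann sum NEW}, contribute $\bigO(n^{-1})$, delivering the asymptotic form claimed.
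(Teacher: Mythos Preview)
Your proposal is correct and follows essentially the same route as the paper. The paper invokes Lemma~\ref{lemma: asymp of gamma for lambda bounded away from 1}\,(ii) (which, via $e^{-a\eta^{2}/2}=\lambda^{a}e^{-a(\lambda-1)}$ and Stirling, is equivalent to the Tricomi form you quote), expands the ratio to extract the summand $f_{1}(j/n)+\tfrac{1}{n}f(j/n)+\bigO(n^{-2})$, and then applies Lemma~\ref{lemma:Riemann sum NEW} with exactly the parameters $A=\tfrac{b\rho^{2b}}{1-\epsilon}$, $a_{0}=1-\alpha-\theta_{+}^{(n,\epsilon)}$, $B=1$, $b_{0}=0$ that you identify.
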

\begin{proof}
For $j \geq j_{+}+1$ and $k \in \{1,\dots,m\}$, $1-\lambda_{j,k}$ and $1-\lambda_{j}$ are positive and bounded away from $0$. Hence, using Lemma \ref{lemma: asymp of gamma for lambda bounded away from 1} (ii), we obtain
\begin{align*}
S_{3} & = \sum_{j=j_{+}+1}^{n} \ln \bigg\{ 1+ \frac{\sum_{\ell=1}^{m} \omega_{\ell} \frac{e^{-\frac{a_j}{2}\eta_{j,\ell}^2}}{\sqrt{2\pi}} \big\{\sum_{k=0}^{1} \frac{S(\varphi_k(\lambda_{j,\ell}))}{a_j^{k+1/2}} + \bigO\big(\frac{1}{a_j^{5/2}}\big) +  \bigO\big(\frac{1}{(a_j \eta_{j,\ell}^2)^{5/2}}\big)\big\}}{\frac{e^{-\frac{a_j}{2}\eta_j^2}}{\sqrt{2\pi}} \big\{\sum_{k=0}^{1} \frac{S(\varphi_k(\lambda_j))}{a_j^{k+1/2}} + \bigO\big(\frac{1}{a_j^{5/2}}\big) +  \bigO\big(\frac{1}{(a_j \eta_j^2)^{5/2}}\big)\big\}  } \bigg\} \\
& = \sum_{j=j_{+}+1}^{n} \ln \bigg\{ 1+\sum_{\ell=1}^{m} \omega_{\ell} \frac{e^{-\frac{a_{j}\eta_{j,\ell}^{2}}{2}}( \frac{-1}{\lambda_{j,\ell}-1}\frac{1}{\sqrt{a_{j}}}+\frac{1+10\lambda_{j,\ell}+\lambda_{j,\ell}^{2}}{12(\lambda_{j,\ell}-1)^{3}} \frac{1}{a_{j}^{3/2}} + \bigO(n^{-5/2}) )}{e^{-\frac{a_{j}\eta_{j}^{2}}{2}}( \frac{-1}{\lambda_{j}-1}\frac{1}{\sqrt{a_{j}}}+\frac{1+10\lambda_{j}+\lambda_{j}^{2}}{12(\lambda_{j}-1)^{3}} \frac{1}{a_{j}^{3/2}} + \bigO(n^{-5/2}) )} \bigg\} \\
& = \sum_{j=j_{+}+1}^{n} \bigg(f_{1}(j/n)+\frac{1}{n}f(j/n) + \bigO(n^{-2})\bigg),
\end{align*}
where the above $\bigO$-terms are uniform for $j \in \{j_{+}+1,\dots,n\}$. The claim then follows after a computation using Lemma \ref{lemma:Riemann sum NEW} (with $A=\frac{b\rho^{2b}}{1-\epsilon}$, $a_{0}=1-\alpha-\theta_{+}^{(n,\epsilon)}$, $B=1$ and $b_{0}=0$).
\end{proof}
We now focus on $S_{2}$. Let $M:=n^{\frac{1}{10}}$. We split $S_{2}$ in three pieces as follows
\begin{align}\label{asymp prelim of S2kpvp hard}
& S_{2}=S_{2}^{(1)}+S_{2}^{(2)}+S_{2}^{(3)}, \qquad S_{2}^{(v)} := \sum_{j:\lambda_{j}\in I_{v}}  \ln \bigg( 1+\sum_{\ell=1}^{m} \omega_{\ell} \frac{\gamma(a_{j},a_{j}\lambda_{j,\ell})}{\gamma(a_{j},a_{j}\lambda_{j})} \bigg), \quad v=1,2,3,
\end{align}
where
\begin{align}\label{I1I2I3def}
I_{1} = [1-\epsilon,1-\tfrac{M}{\sqrt{n}}), \qquad I_{2} = [1-\tfrac{M}{\sqrt{n}},1+\tfrac{M}{\sqrt{n}}], \qquad I_{3} = (1+\tfrac{M}{\sqrt{n}},1+\epsilon].
\end{align}
From \eqref{asymp prelim of S2kpvp hard}, we see that the large $n$ asymptotics of $\{S_{2}^{(v)}\}_{v=1,2,3}$ involve the asymptotics of $\gamma(a,z)$ when $a \to + \infty$, $z \to +\infty$ with $\lambda=\frac{z}{a} \in [1-\epsilon,1+\epsilon]$. These sums can also be rewritten using
\begin{align}\label{sums lambda j hard}
& \sum_{j:\lambda_{j}\in I_{3}} = \sum_{j=j_{-}}^{g_{-}-1}, \qquad \sum_{j:\lambda_{j}\in I_{2}} = \sum_{j= g_{-}}^{g_{+}}, \qquad \sum_{j:\lambda_{j}\in I_{1}} = \sum_{j= g_{+}+1}^{j_{+}},
\end{align}
where $g_{-} := \lceil \frac{bn\rho^{2b}}{1+\frac{M}{\sqrt{n}}}-\alpha \rceil$, $g_{+} := \lfloor \frac{bn\rho^{2b}}{1-\frac{M}{\sqrt{n}}}-\alpha \rfloor$. Let us also define
\begin{align*}
& \theta_{-}^{(n,M)} := g_{-} - \bigg( \frac{bn \rho^{2b}}{1+\frac{M}{\sqrt{n}}} - \alpha \bigg) = \bigg\lceil \frac{bn \rho^{2b}}{1+\frac{M}{\sqrt{n}}} - \alpha \bigg\rceil - \bigg( \frac{bn \rho^{2b}}{1+\frac{M}{\sqrt{n}}} - \alpha \bigg), \\
& \theta_{+}^{(n,M)} := \bigg( \frac{bn \rho^{2b}}{1-\frac{M}{\sqrt{n}}} - \alpha \bigg) - g_{+} = \bigg( \frac{bn \rho^{2b}}{1-\frac{M}{\sqrt{n}}} - \alpha \bigg) - \bigg\lfloor \frac{bn \rho^{2b}}{1-\frac{M}{\sqrt{n}}} - \alpha \bigg\rfloor.
\end{align*}
Clearly, $\theta_{-}^{(n,M)},\theta_{+}^{(n,M)} \in [0,1)$. Note that the individual sums $S_{2}^{(1)},S_{2}^{(2)},S_{2}^{(3)}$ depend on $M$, although $S_{2}=S_{2}^{(1)}+S_{2}^{(2)}+S_{2}^{(3)}$ is independent of $M$. Below, we will first obtain large $n$ asymptotics of $S_{2}^{(1)},S_{2}^{(2)},S_{2}^{(3)}$. After adding the asymptotic formulas of $S_{2}^{(1)},S_{2}^{(2)},S_{2}^{(3)}$, we will find that all $M$-dependent terms cancel, as they must. For this reason, below we will not replace $M$ by $n^{1/10}$ until the last step of the proof. The reason why we choose $M=n^{1/10}$ is technical. In the various asymptotic formulas below, there will be different types of error terms, such as $\bigO(\frac{M^{4}}{\sqrt{n}})$, $\bigO(\frac{\sqrt{n}}{M^{11}})$, etc, and in the last step of the proof we will find that $M=n^{1/10}$ is the choice that produces the best control over the total error.

\begin{lemma}\label{lemma:S2kp3p hard}
For any $x_{1},\dots,x_{m} \in \mathbb{R}$, there exists $\delta > 0$ such that
\begin{align*}
S_{2}^{(3)} = & \; \Big( b\rho^{2b}n - j_{-} - bM\rho^{2b}\sqrt{n} + bM^{2}\rho^{2b} -\alpha+\theta_{-}^{(n,M)} - bM^{3}\rho^{2b}n^{-\frac{1}{2}} \Big) \ln  \Omega + \bigO(M^{4}n^{-1}),
\end{align*}
as $n \to +\infty$ uniformly for $u_{1} \in D_\delta(x_1),\dots,u_{m} \in D_\delta(x_m)$.
\end{lemma}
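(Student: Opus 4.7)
By \eqref{sums lambda j hard} we may write $S_{2}^{(3)}=\sum_{j=j_{-}}^{g_{-}-1}\ln\!\big(1+\sum_{\ell=1}^{m}\omega_{\ell}\,\gamma(a_{j},a_{j}\lambda_{j,\ell})/\gamma(a_{j},a_{j}\lambda_{j})\big)$. On the range $j_{-}\leq j\leq g_{-}-1$ the indices satisfy $\lambda_{j}>1+M/\sqrt{n}$, and since $\lambda_{j,\ell}=\lambda_{j}(1-t_{\ell}/n)$ the same bound (with a slightly smaller constant) holds for every $\lambda_{j,\ell}$ once $n$ is large. Both arguments of the incomplete gamma functions are therefore well inside the ``$\lambda>1$'' regime, where the uniform expansion for $\gamma(a,a\lambda)$ (Lemma \ref{lemma: uniform} applied in the appendix) gives
\[
\frac{\gamma(a_{j},a_{j}\lambda_{j,\ell})}{\gamma(a_{j},a_{j}\lambda_{j})}=1+\bigO\!\Big(e^{-a_{j}\eta_{j,\ell}^{2}/2}+e^{-a_{j}\eta_{j}^{2}/2}\Big),
\]
uniformly in $\ell\in\{1,\ldots,m\}$, in $j\in\{j_{-},\ldots,g_{-}-1\}$, and in $u_{1}\in D_{\delta}(x_{1}),\ldots,u_{m}\in D_{\delta}(x_{m})$ provided $\delta$ is small.

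The slowest-decaying error occurs at the right endpoint $j=g_{-}-1$, where $\lambda_{j}\approx 1+M/\sqrt{n}$. Using $\lambda-1-\ln\lambda\sim(\lambda-1)^{2}/2$ and $a_{j}\sim bn\rho^{2b}$, one gets $a_{j}\eta_{j}^{2}/2\gtrsim c M^{2}=c\,n^{1/5}$ uniformly in the summation range, and similarly for $\eta_{j,\ell}^{2}$. Since $\omega_{m+1}=1$ and $\omega_{1}+\cdots+\omega_{m+1}=\Omega$, it follows that
\[
1+\sum_{\ell=1}^{m}\omega_{\ell}\frac{\gamma(a_{j},a_{j}\lambda_{j,\ell})}{\gamma(a_{j},a_{j}\lambda_{j})}=\Omega\Big(1+\bigO\!\big(e^{-cn^{1/5}}\big)\Big),
\]
and taking logarithms (noting that $|\Omega|$ is bounded away from $0$ and $\infty$ on the polydisc of $u$'s) each summand equals $\ln\Omega+\bigO(e^{-cn^{1/5}})$. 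Summing the $\bigO(n)$ terms yields a cumulative gamma-type error of size $\bigO(ne^{-cn^{1/5}})$, which is super-polynomially small and hence dominated by the announced $\bigO(M^{4}/n)=\bigO(n^{-3/5})$.

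It remains to count the terms and expand. From the definition $g_{-}=\lceil bn\rho^{2b}(1+M/\sqrt{n})^{-1}-\alpha\rceil=bn\rho^{2b}(1+M/\sqrt{n})^{-1}-\alpha+\theta_{-}^{(n,M)}$, the Taylor expansion $(1+M/\sqrt{n})^{-1}=1-M/\sqrt{n}+M^{2}/n-M^{3}/n^{3/2}+\bigO(M^{4}/n^{2})$ gives
\[
g_{-}-j_{-}=b\rho^{2b}n-j_{-}-bM\rho^{2b}\sqrt{n}+bM^{2}\rho^{2b}-bM^{3}\rho^{2b}n^{-1/2}-\alpha+\theta_{-}^{(n,M)}+\bigO(M^{4}/n),
\]
and multiplying by $\ln\Omega=\bigO(1)$ reproduces the asserted expression. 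The main technical point is making the uniform expansion of $\gamma$ valid all the way down to the boundary $\lambda_{j}\approx 1+M/\sqrt{n}$, together with uniformity in the complex parameters $u_{j}$; the remaining steps are routine bookkeeping.
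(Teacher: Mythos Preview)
Your argument is correct and follows essentially the same route as the paper's proof: bound each ratio $\gamma(a_j,a_j\lambda_{j,\ell})/\gamma(a_j,a_j\lambda_j)$ by $1+\bigO(e^{-a_j\eta_j^2/2})$ on the range $\lambda_j>1+M/\sqrt{n}$, deduce that every summand is $\ln\Omega$ up to an error of size $\bigO(e^{-cM^2})$, and then expand $g_--j_-$ via the geometric series for $(1+M/\sqrt{n})^{-1}$. The only cosmetic difference is that the paper invokes Lemma~\ref{lemma: asymp of gamma for lambda bounded away from 1}\,(i) (which is tailored to the regime $\lambda\geq 1+a^{-1/2}$) rather than the general uniform expansion of Lemma~\ref{lemma: uniform}; since the former is a direct consequence of the latter combined with the large-argument asymptotics of $\mathrm{erfc}$, this is not a substantive distinction.
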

\begin{proof}
Recall that $a_{j}, \lambda_{j}, \lambda_{j,k}, \eta_{j}, \eta_{j,k}$ are defined in \eqref{def etajl hard}. By \eqref{asymp prelim of S2kpvp hard}, we have
\begin{align*}
S_{2}^{(3)} & = \sum_{j:\lambda_{j}\in I_{3}} \ln \bigg( 1+ \sum_{\ell=1}^{m} \omega_{\ell} \frac{\gamma\big(a_j, a_j \lambda_{j, \ell}\big) }{ \gamma\big(a_j, a_j \lambda_j \big)}\bigg).
\end{align*}
If $\lambda_{j}\in I_{3}$, then $\lambda_j > 1 + \frac{M}{\sqrt{n}}$ and $\lambda_{j,\ell} = \lambda_j (1 - t_\ell/n) > 1 + \frac{M}{\sqrt{n}} + \bigO(n^{-1})$. So there exists a constant $c > 0$ such that
\begin{align*}
&\eta_j \geq c \frac{M}{\sqrt{n}}, \qquad   -\eta_{j} \sqrt{a_{j}/2} \leq - c M, \qquad \eta_{j,\ell} \geq c \frac{M}{\sqrt{n}}, \qquad -\eta_{j,\ell} \sqrt{a_{j}/2} \leq - c M,
\end{align*}
for all sufficiently large $n$, $\ell \in \{1,\dots,m\}$ and $j\in \{j:\lambda_{j}\in I_{3}\}$.
Hence, by part $(i)$ of Lemma \ref{lemma: asymp of gamma for lambda bounded away from 1},
\begin{align*}
S_{2}^{(3)} & = \sum_{j:\lambda_{j}\in I_{3}} \ln \bigg( 1+\sum_{\ell=1}^{m} \omega_{\ell} \frac{ 1 + \bigO(e^{-\frac{a_{j}\eta_{j,\ell}^{2}}{2}}) }{ 1 + \bigO(e^{-\frac{a_{j}\eta_{j}^{2}}{2}}) } \bigg) = \sum_{j=j_{-}}^{g_{-}-1} \ln \Omega +\bigO(e^{-c^2M^2})
	\\
& = (g_{-}-j_{-})  \ln  \Omega +\bigO(e^{-c^2M^2})
\end{align*}
as $n \to +\infty$. Since
\begin{align*}
 g_{-}-j_{-} & = \bigg( \frac{bn \rho^{2b}}{1+\frac{M}{\sqrt{n}}} - \alpha \bigg)+\theta_{-}^{(n,M)} -j_{-} \nonumber \\
& = b\rho^{2b}n - j_{-} - bM\rho^{2b}\sqrt{n} + bM^{2}\rho^{2b}-\alpha+\theta_{-}^{(n,M)} - bM^{3}\rho^{2b}n^{-\frac{1}{2}} + \bigO(M^{4}n^{-1})
\end{align*}
as $n \to + \infty$, the desired conclusion follows.
\end{proof}

\begin{lemma}\label{lemma:S2p1p hard}
For any $x_{1},\dots,x_{m} \in \mathbb{R}$, there exists $\delta > 0$ such that
\begin{align*}
& S_{2}^{(1)} = D_{1}^{(\epsilon)} n + D_{2}^{(M)} \sqrt{n} + D_{3} \ln n + D_{4}^{(n,\epsilon,M)} + \frac{D_{5}^{(n,M)}}{\sqrt{n}} + \bigO\bigg( \frac{M^{4}}{n} + \frac{1}{\sqrt{n} M} + \frac{1}{M^{6}} + \frac{\sqrt{n}}{M^{11}} \bigg),
\end{align*}
as $n \to +\infty$ uniformly for $u_{1} \in D_\delta(x_1),\dots,u_{m} \in D_\delta(x_m)$, where
\begin{align*}
& D_{1}^{(\epsilon)} = \int_{b\rho^{2b}}^{\frac{b\rho^{2b}}{1-\epsilon}}f_{1}(x)dx, \qquad D_{2}^{(M)} = -b\rho^{2b} f_{1}(b\rho^{2b})M, \qquad D_{3} = - \frac{b\rho^{2b} \mathsf{T}_{1}(b\rho^{2b})}{2(1+\mathsf{T}_{0}(b\rho^{2b}))}, \\
& D_{4}^{(n,\epsilon,M)} = -b\rho^{2b} M^{2} \Big( f_{1}(b\rho^{2b}) + \frac{b\rho^{2b}}{2}f_{1}'(b\rho^{2b}) \Big) - \frac{b\rho^{2b} \mathsf{T}_{1}(b\rho^{2b})}{1+\mathsf{T}_{0}(b\rho^{2b})} \ln \bigg( \frac{\epsilon}{M(1-\epsilon)} \bigg) \\
& + \int_{b\rho^{2b}}^{\frac{b\rho^{2b}}{1-\epsilon}} \bigg\{ f(x) + \frac{b\rho^{2b}\mathsf{T}_{1}(b\rho^{2b})}{(1+\mathsf{T}_{0}(b\rho^{2b}))(x-b\rho^{2b})} \bigg\}dx + \bigg( \alpha - \frac{1}{2} + \theta_{+}^{(n,M)} \bigg)f_{1}(b \rho^{2b}) \\
& + \bigg( \frac{1}{2} - \alpha - \theta_{+}^{(n,\epsilon)} \bigg) f_{1}\bigg( \frac{b\rho^{2b}}{1-\epsilon} \bigg) + \frac{b \mathsf{T}_{1}(b\rho^{2b})}{M^{2}(1+\mathsf{T}_{0}(b\rho^{2b}))} + \frac{-5b \mathsf{T}_{1}(b\rho^{2b})}{2\rho^{2b}M^{4}(1+\mathsf{T}_{0}(b\rho^{2b}))}, \\
& D_{5}^{(n,M)} = -M^{3}b\rho^{2b} \bigg( f_{1}(b\rho^{2b}) + b\rho^{2b} f_{1}'(b\rho^{2b}) + \frac{(b\rho^{2b})^{2}}{6}f_{1}''(b\rho^{2b}) \bigg) + M b\rho^{2b} f_{1}'(b\rho^{2b}) \bigg( \alpha-\frac{1}{2}+\theta_{+}^{(n,M)} \bigg) \\
& + M \bigg( \frac{(b+\alpha)\rho^{2b}\mathsf{T}_{1}(b\rho^{2b})}{1+\mathsf{T}_{0}(b\rho^{2b})} - \frac{b\rho^{4b} \mathsf{T}_{2}(b\rho^{2b})}{2(1+\mathsf{T}_{0}(b\rho^{2b}))} + \frac{b\rho^{4b} \mathsf{T}_{1}(b\rho^{2b})^{2}}{(1+\mathsf{T}_{0}(b\rho^{2b}))^{2}} \bigg),
\end{align*}
where $f_{1}$ and $f$ are as in the statement of Lemma \ref{lemma:S3 asymp hard}.
\end{lemma}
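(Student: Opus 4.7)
The plan is to combine the uniform asymptotic expansion of $\gamma(a,a\lambda)$ in the regime $\lambda<1$ bounded away from $1$ (part (ii) of Lemma \ref{lemma: asymp of gamma for lambda bounded away from 1}) with the Euler--Maclaurin-type formula of Lemma \ref{lemma:Riemann sum NEW}, exactly in the spirit of the proof of Lemma \ref{lemma:S3 asymp hard}. The essential new difficulty, compared with $S_3$, is that the left endpoint $\lambda_j\approx 1-M/\sqrt n$ of the range of summation approaches the critical value $\lambda=1$ as $n\to\infty$; correspondingly the limiting integrand $f(x)$ of Lemma \ref{lemma:S3 asymp hard} develops a simple pole at $x=b\rho^{2b}$, which coincides with the left endpoint of the resulting integral.

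First I would apply part (ii) of Lemma \ref{lemma: asymp of gamma for lambda bounded away from 1} to numerator and denominator of each ratio $\gamma(a_j,a_j\lambda_{j,\ell})/\gamma(a_j,a_j\lambda_j)$. The dominant exponential factor reduces to
\begin{align*}
e^{-a_j(\eta_{j,\ell}^2-\eta_j^2)/2} = \exp\!\bigl(-\tfrac{t_\ell}{b}(\tfrac{j+\alpha}{n}-b\rho^{2b}) + \bigO(n^{-1})\bigr),
\end{align*}
which follows by expanding $\eta^2=2(\lambda-1-\ln\lambda)$ together with $a_j\lambda_j=n\rho^{2b}$ and $\lambda_{j,\ell}=\lambda_j(1-t_\ell/n)$. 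The ratio of the algebraic prefactors $(\lambda_j-1)/(\lambda_{j,\ell}-1)$ equals $1+\bigO(1/(n|\lambda_j-1|))$. Collecting the resulting orders of $1/n$, one obtains an expansion
\begin{align*}
\ln\!\Bigl(1+\sum_{\ell=1}^m \omega_\ell\tfrac{\gamma(a_j,a_j\lambda_{j,\ell})}{\gamma(a_j,a_j\lambda_j)}\Bigr) = f_1(j/n)+\tfrac{1}{n}f(j/n)+R_j,
\end{align*}
where $f_1$ and $f$ are as in Lemma \ref{lemma:S3 asymp hard} and the remainder $R_j$ is controlled by a finite number of negative powers of $(\lambda_j-1)$, i.e.\ by positive powers of $\sqrt n/M$ near the boundary.

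Next I would apply Lemma \ref{lemma:Riemann sum NEW} to the Riemann-type sum $\sum_{j=g_++1}^{j_+}\bigl(f_1(j/n)+\tfrac{1}{n}f(j/n)\bigr)$ with $A=b\rho^{2b}/(1-M/\sqrt n)$, $B=b\rho^{2b}/(1-\epsilon)$, and fractional offsets determined by $\theta_+^{(n,M)}$ and $\theta_+^{(n,\epsilon)}$. Because $f$ has a simple pole at $b\rho^{2b}$, I would subtract its singular part and write
\begin{align*}
f(x) = -\tfrac{b\rho^{2b}\mathsf{T}_1(b\rho^{2b})}{(1+\mathsf{T}_0(b\rho^{2b}))(x-b\rho^{2b})}+\widetilde f(x),
\end{align*}
with $\widetilde f$ smooth at $b\rho^{2b}$. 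The smooth part yields the integral $\int_{b\rho^{2b}}^{b\rho^{2b}/(1-\epsilon)}\{f+\ldots\}\,dx$ visible in $D_4^{(n,\epsilon,M)}$, while the singular part is integrated explicitly via
\begin{align*}
\int_A^B\frac{dx}{x-b\rho^{2b}} = \ln\!\Bigl(\tfrac{\epsilon}{M(1-\epsilon)}\Bigr)+\tfrac{1}{2}\ln n+\bigO(M/\sqrt n),
\end{align*}
which produces both the $D_3\ln n$ contribution and the explicit logarithm inside $D_4^{(n,\epsilon,M)}$. All the polynomial-in-$M$ corrections in $D_2^{(M)}, D_4^{(n,\epsilon,M)}$ and $D_5^{(n,M)}$ arise from Taylor-expanding $f_1(A)$, $f_1'(A)$, $f_1''(A)$ about $b\rho^{2b}$ in powers of $A-b\rho^{2b}\sim bM\rho^{2b}/\sqrt n$, retaining enough terms to reach order $1/\sqrt n$.

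The principal obstacle is the uniform control of $R_j$ across $I_1$. Each successive correction in part (ii) of Lemma \ref{lemma: asymp of gamma for lambda bounded away from 1} carries an additional negative power of $(\lambda_j-1)$, hence an extra factor $\sqrt n/M$ near the boundary, so the $\gamma$-expansion has to be truncated at precisely the right order and the resulting bound summed against the Riemann-sum error in Lemma \ref{lemma:Riemann sum NEW}. Balancing these competing sources is what produces the four-term error $\bigO\bigl(M^4/n+1/(\sqrt n M)+1/M^6+\sqrt n/M^{11}\bigr)$; with the eventual choice $M=n^{1/10}$ each of them collapses to $n^{-3/5}$, which is exactly the error budget required for Theorem \ref{thm:main thm hard}. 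Getting all four to balance simultaneously (rather than, say, only three of them) is the delicate part of the bookkeeping.
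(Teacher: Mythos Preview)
Your overall strategy is exactly the paper's: apply Lemma~\ref{lemma: asymp of gamma for lambda bounded away from 1}~(ii) to the ratio, expand $\ln(1+X_j)$ in powers of $1/n$, turn the sum into an integral via Lemma~\ref{lemma:Riemann sum NEW}, and deal with the simple pole of $f$ at $x=b\rho^{2b}$ by subtracting its singular part. The identification of the $\ln n$ term and of the positive-power-of-$M$ pieces of $D_2^{(M)}$, $D_4^{(n,\epsilon,M)}$, $D_5^{(n,M)}$ via Taylor expansion of $f_1$ at the moving endpoint is also correct.

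There is, however, a concrete gap in your bookkeeping. The statement that ``all the polynomial-in-$M$ corrections in $D_4^{(n,\epsilon,M)}$ arise from Taylor-expanding $f_1(A),f_1'(A),f_1''(A)$'' is not right: the two terms
\[
\frac{b\,\mathsf{T}_1(b\rho^{2b})}{M^{2}\bigl(1+\mathsf{T}_0(b\rho^{2b})\bigr)}
\qquad\text{and}\qquad
\frac{-5b\,\mathsf{T}_1(b\rho^{2b})}{2\rho^{2b}M^{4}\bigl(1+\mathsf{T}_0(b\rho^{2b})\bigr)}
\]
in $D_4^{(n,\epsilon,M)}$ do \emph{not} come from $f_1$ or $f$ at all. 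They come from the next two orders in the expansion of $X_j$ beyond $f_1+f/n$, namely
\[
\frac{1}{n^{2}}\,\frac{2b^{3}\rho^{4b}\mathsf{T}_1(x)}{(1+\mathsf{T}_0(x))(x-b\rho^{2b})^{3}}
\quad\text{and}\quad
\frac{1}{n^{3}}\,\frac{-10b^{5}\rho^{6b}\mathsf{T}_1(x)}{(1+\mathsf{T}_0(x))(x-b\rho^{2b})^{5}},
\]
which after integration from $A=\frac{b\rho^{2b}}{1-M/\sqrt n}$ produce contributions of size $1/M^{2}$ and $1/M^{4}$. If you stop at $f_1+f/n$ and merely \emph{bound} the $1/n^2$ correction, the summed remainder is $\bigO(1/M^{2})=\bigO(n^{-1/5})$, which destroys the $\bigO(n^{-3/5})$ error budget of Theorem~\ref{thm:main thm hard}. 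So the $\gamma$-expansion must be carried explicitly to order $1/n^{3}$ (equivalently, $N=6$ in Lemma~\ref{lemma: asymp of gamma for lambda bounded away from 1}), and only the tail beyond that is absorbed into the stated error terms; this is also why the peculiar $1/M^{6}$ and $\sqrt n/M^{11}$ errors appear. These $1/M^{2}$ and $1/M^{4}$ pieces are not cosmetic: in Lemma~\ref{lemma: asymp of S2k final hard} they cancel against matching contributions from $S_2^{(2)}$.
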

\begin{proof}
We have
\begin{align}\label{S21Xj}
S_{2}^{(1)}  = \sum_{j= g_{+}+1}^{j_{+}} \ln ( 1+ X_j), \quad \text{where} \quad X_j := \frac{\sum_{\ell=1}^{m} \omega_{\ell} \gamma\big(a_j, a_j \lambda_{j, \ell}\big) }{ \gamma\big(a_j, a_j \lambda_j \big)}.
\end{align}
Since $\lambda_j \in [1-\epsilon, 1 - \frac{M}{\sqrt{n}})$ for $g_+ + 1 \leq j \leq j_+$ and $\lambda_{j,\ell} = \lambda_j(1-\frac{t_{\ell}}{n})$, we can apply part $(ii)$ of Lemma \ref{lemma: asymp of gamma for lambda bounded away from 1} to find, for each $N \geq 0$,
\begin{align}\label{Xjexpression}
X_j = \frac{\sum_{\ell=1}^{m} \omega_{\ell} \frac{e^{-\frac{a_j}{2}\eta_{j,\ell}^2}}{\sqrt{2\pi}} \big\{\sum_{k=0}^{N-1} \frac{S(\varphi_k(\lambda_{j,\ell}))}{a_j^{k+1/2}} + \bigO\big(\frac{1}{a_j^{N+1/2}}\big) +  \bigO\big(\frac{1}{(a_j \eta_{j,\ell}^2)^{N+1/2}}\big)\big\}}{\frac{e^{-\frac{a_j}{2}\eta_j^2}}{\sqrt{2\pi}} \big\{\sum_{k=0}^{N-1} \frac{S(\varphi_k(\lambda_j))}{a_j^{k+1/2}} + \bigO\big(\frac{1}{a_j^{N+1/2}}\big) +  \bigO\big(\frac{1}{(a_j \eta_j^2)^{N+1/2}}\big)\big\}  }.
\end{align}
Let $x := j/n$. For all sufficiently large $n$ we have $\eta_j \asymp \lambda_{j}-1$,\footnote{More precisely, this means that $\eta_j$ and $\lambda_{j}-1$ are of the same order in the sense that there exist constants $c_1, c_2 > 0$ such that $c_1 \leq \eta_j/(\lambda_{j}-1) \leq c_2$ for all sufficiently large $n$ and all $g_+ + 1 \leq j \leq j_+$.} $\eta_{j,\ell} \asymp \lambda_{j,\ell}-1 \asymp \lambda_{j}-1$, and
\begin{align*}
& x \in \bigg[\frac{b\rho^{2b}}{1-\frac{M}{\sqrt{n}}} + \bigO(n^{-1}), \frac{b\rho^{2b}}{1-\epsilon} + \bigO(n^{-1})\bigg], \qquad a_j = \frac{xn}{b} + \bigO(1),
\end{align*}
uniformly for $g_+ + 1 \leq j \leq j_+$.
Thus, multiplying both the numerator and denominator on the right-hand side of (\ref{Xjexpression}) by $-a_j^{1/2}(\lambda_{j} -1)$ and using that $S(\varphi_0(\lambda)) = -\frac{1}{\lambda-1}$, we find
\begin{align}\label{XjYjl}
X_j &= \sum_{\ell=1}^{m} \omega_{\ell} e^{-\frac{a_j}{2}(\eta_{j,\ell}^2 - \eta_j^2)} Y_{j,\ell}, \quad Y_{j,\ell} := \frac{\frac{\lambda_{j} -1}{\lambda_{j,\ell} -1} - (\lambda_{j} -1)\sum_{k=1}^{N-1} \frac{S(\varphi_k(\lambda_{j,\ell}))}{a_j^{k}} +  \bigO\big(\frac{1}{(n (\lambda_{j} -1)^2)^{N}}\big)}{1 - (\lambda_{j} -1) \sum_{k=1}^{N-1} \frac{S(\varphi_k(\lambda_j))}{a_j^{k}} +  \bigO\big(\frac{1}{(n (\lambda_{j} -1)^2)^{N}}\big)}.
\end{align}
Using that $a_j = \frac{xn + \alpha}{b}$, we can expand the exponential as $n \to +\infty$:
\begin{align}\label{expexpansion}
e^{-\frac{a_j}{2}(\eta_{j,\ell}^2 - \eta_j^2)}
= e^{a_j \ln(1 - \frac{t_{\ell}}{n}) + a_j \frac{b\rho^{2b} t_{\ell}}{nx + \alpha}}
= e^{-\frac{t_{\ell}}{b}(x - b \rho^{2b})}\bigg(1 - \frac{t_{\ell}^2 x + 2t_{\ell} \alpha}{2 b n} + \bigO\Big(\frac{1}{n^2}\Big)\bigg)
\end{align}
uniformly for $g_+ + 1 \leq j \leq j_+$. On the other hand, as $n \to +\infty$,
$$\lambda_{j,\ell} = \frac{b \rho^{2b}}{x}\bigg(1 - \frac{\alpha + x t_{\ell}}{x n} + \frac{\alpha(\alpha + x t_{\ell})}{x^2 n^2} + \bigO\Big(\frac{1}{n^3}\Big)\bigg), \qquad \lambda_j = \frac{b \rho^{2b}}{x}\bigg(1 - \frac{\alpha}{x n} + \frac{\alpha^2}{x^2 n^2} + \bigO\Big(\frac{1}{n^3}\Big)\bigg),$$
uniformly for $g_+ + 1 \leq j \leq j_+$. Substituting these expansions into the expression for $Y_{j,\ell}$ in (\ref{XjYjl}) with $N = 6$, a calculation gives
\begin{align}\nonumber
Y_{j,\ell} = &\; 1 -\frac{b \rho^{2 b} t_{\ell}}{n \left(x-b \rho^{2 b}\right)} + \frac{2 b^3 \rho^{4 b} t_{\ell}}{n^2 \left(x-b \rho^{2 b}\right)^3} + \bigO\bigg(\frac{1}{n^{2}(x-b\rho^{2b})^{2}}\bigg)
-\frac{10 b^5 \rho^{6 b} t_{\ell}}{n^3 \left(x-b \rho^{2 b}\right)^5}
	\\ \label{Yjlexpansion}
& + \bigO\bigg(\frac{1}{n^{3}(x-b\rho^{2b})^{4}}\bigg) + \bigO\bigg(\frac{1}{n^{4}(x-b\rho^{2b})^{7}} \bigg) + \bigO\bigg(\frac{1}{(n (x-b\rho^{2b})^2)^{6}}\bigg)
\end{align}
uniformly for $g_+ + 1 \leq j \leq j_+$. The asymptotic formulas (\ref{expexpansion}) and (\ref{Yjlexpansion}) imply that
\begin{align}\nonumber
X_j = &\; \mathsf{T}_0(x) -\frac{b \mathsf{T}_1(x) \rho^{2 b}}{n \left(x-b \rho^{2 b}\right)} -\frac{x\mathsf{T}_2(x) }{2 b n}-\frac{\alpha  \mathsf{T}_1(x)}{b n}
+ \frac{2 b^3 \mathsf{T}_1(x) \rho^{4 b}}{n^2 \left(x-b \rho^{2 b}\right)^3}
-\frac{10 b^5 \mathsf{T}_1(x) \rho^{6 b}}{n^3 \left(x-b \rho^{2 b}\right)^5}
	\\ \label{Xjexpansion}
& + \bigO\bigg(\frac{1}{n^{2}(x-b\rho^{2b})^{2}} + \frac{1}{n^{3}(x-b\rho^{2b})^{4}} + \frac{1}{n^{4}(x-b\rho^{2b})^{7}} + \frac{1}{n^{6}(x-b\rho^{2b})^{12}} \bigg).
\end{align}
If $A, B > 1$, then
\begin{align*}
\sum_{j= g_{+}+1}^{j_{+}} \bigO\bigg(\frac{1}{n^A (x - b\rho^{2b})^B}\bigg)
& = \bigO\bigg(\int_{g_+}^{j_+} \frac{1}{n^A (j/n - b\rho^{2b})^B} dj\bigg)
= \bigO\bigg(\int_{g_+/n}^{j_+/n} \frac{1}{n^{A-1} (x - b\rho^{2b})^B} dx\bigg)
	\\
&
= \bigO\bigg(\frac{1}{n^{A-1} (M/\sqrt{n})^{B-1}}\bigg)
= \bigO\bigg(\frac{1}{n^{A-(B+1)/2} M^{B-1}}\bigg),
\end{align*}
so substitution of (\ref{Xjexpansion}) into (\ref{S21Xj}) yields
\begin{align}\nonumber
 S_{2}^{(1)} = & \sum_{j= g_{+}+1}^{j_{+}} \bigg(  f_{1}(x)+\frac{1}{n}f(x) + \frac{1}{n^{2}} \frac{2b^{3}\rho^{4b} \mathsf{T}_{1}(x)}{(1+\mathsf{T}_{0}(x))(x-b\rho^{2b})^{3}} + \frac{1}{n^{3}} \frac{-10b^{5}\rho^{6b} \mathsf{T}_{1}(x)}{(1+\mathsf{T}_{0}(x))(x-b\rho^{2b})^{5}} \bigg)
 	\\ \label{S21sumjlargen}
& + \bigO\bigg( \frac{1}{M\sqrt{n}} + \frac{1}{M^{3}\sqrt{n}} + \frac{1}{M^{6}} + \frac{\sqrt{n}}{M^{11}} \bigg).
\end{align}
Employing Lemma \ref{lemma:Riemann sum NEW} with $A=\frac{b\rho^{2b}}{1-\frac{M}{\sqrt{n}}}$, $a_{0}=1-\alpha-\theta_{+}^{(n,M)}$, $B=\frac{b\rho^{2b}}{1-\epsilon}$ and $b_{0} = -\alpha-\theta_{+}^{(n,\epsilon)}$, and using that $f^{(k)}(A) = \bigO(n^{(k+1)/2} M^{-(k+1)})$ for $k \geq 0$, we get
\begin{align}\nonumber
& \sum_{j= g_{+}+1}^{j_{+}} f_{1}(x) = n \int_{\frac{b\rho^{2b}}{1-\frac{M}{\sqrt{n}}}}^{\frac{b\rho^{2b}}{1-\epsilon}}f_{1}(x)dx + \big(\alpha-\tfrac{1}{2}+\theta_{+}^{(n,M)}\big)f_{1}(\tfrac{b\rho^{2b}}{1-\frac{M}{\sqrt{n}}})+\big(\tfrac{1}{2}-\alpha-\theta_{+}^{(n,\epsilon)}\big)f_{1}(\tfrac{b\rho^{2b}}{1-\epsilon})+\bigO(n^{-1}), \\ \nonumber
& \frac{1}{n}\sum_{j= g_{+}+1}^{j_{+}} f(x) = \int_{\frac{b\rho^{2b}}{1-\frac{M}{\sqrt{n}}}}^{\frac{b\rho^{2b}}{1-\epsilon}}f(x)dx + \bigO\bigg(\frac{1}{M\sqrt{n}}\bigg), \\ \nonumber
& \frac{1}{n^{2}}\sum_{j= g_{+}+1}^{j_{+}}  \frac{2b^{3}\rho^{4b} \mathsf{T}_{1}(x)}{(1+\mathsf{T}_{0}(x))(x-b\rho^{2b})^{3}} = \frac{1}{n} \int_{\frac{b\rho^{2b}}{1-\frac{M}{\sqrt{n}}}}^{\frac{b\rho^{2b}}{1-\epsilon}} \frac{2b^{3}\rho^{4b} \mathsf{T}_{1}(x)dx}{(1+\mathsf{T}_{0}(x))(x-b\rho^{2b})^{3}}  + \bigO\bigg(\frac{1}{M^{3}\sqrt{n}}\bigg), \\ \label{f1fsumslargen}
& \frac{1}{n^{3}}\sum_{j= g_{+}+1}^{j_{+}} \frac{-10b^{5}\rho^{6b} \mathsf{T}_{1}(x)}{(1+\mathsf{T}_{0}(x))(x-b\rho^{2b})^{5}} = \frac{1}{n^{2}} \int_{\frac{b\rho^{2b}}{1-\frac{M}{\sqrt{n}}}}^{\frac{b\rho^{2b}}{1-\epsilon}} \frac{-10b^{5}\rho^{6b} \mathsf{T}_{1}(x)dx}{(1+\mathsf{T}_{0}(x))(x-b\rho^{2b})^{5}}  + \bigO\bigg(\frac{1}{M^{5}\sqrt{n}}\bigg).
\end{align}
The large $n$ behavior of the integrals in (\ref{f1fsumslargen}) can be determined as follows. Let us write
\begin{align}\label{nintebrho2bf1}
 n \int_{\frac{b\rho^{2b}}{1-\frac{M}{\sqrt{n}}}}^{\frac{b\rho^{2b}}{1-\epsilon}}f_{1}(x)dx
= &\; n \int_{b\rho^{2b}}^{\frac{b\rho^{2b}}{1-\epsilon}}f_{1}(x)dx - n \int_{b\rho^{2b}}^{\frac{b\rho^{2b}}{1-\frac{M}{\sqrt{n}}}} f_{1}(x)dx.
\end{align}
Using the integration by parts formula
\begin{align*}
\int_A^B f_1(x)dx
= \bigg((x-A)f_{1}(x) - \frac{(x-A)^2}{2!}f_{1}'(x) + \frac{(x-A)^3}{3!} f_{1}''(x)\bigg)\bigg|_A^B - \int_A^B \frac{(x-A)^3}{3!} f_1'''(x)dx
\end{align*}
with $A =b\rho^{2b} $ and $B = \frac{b\rho^{2b}}{1-\frac{M}{\sqrt{n}}}$ in the second integral in (\ref{nintebrho2bf1}), and then expanding as $n \to +\infty$, we obtain
\begin{align*}
 n \int_{\frac{b\rho^{2b}}{1-\frac{M}{\sqrt{n}}}}^{\frac{b\rho^{2b}}{1-\epsilon}}f_{1}(x)dx
= &\; n \int_{b\rho^{2b}}^{\frac{b\rho^{2b}}{1-\epsilon}}f_{1}(x)dx - b\rho^{2b}f_{1}(b \rho^{2b})M\sqrt{n} - M^{2} b \rho^{2b}\Big( f_{1}(b\rho^{2b})+ \frac{b \rho^{2b}}{2} f_{1}'(b \rho^{2b}) \Big) \\
& - \frac{M^{3}}{\sqrt{n}} b\rho^{2b} \Big( f_{1}(b\rho^{2b}) + b\rho^{2b}f_{1}'(b\rho^{2b}) + \frac{(b\rho^{2b})^{2}}{6}f_{1}''(b\rho^{2b})  \Big) + \bigO\bigg( \frac{M^{4}}{n} \bigg),
\end{align*}
where we have used that
$$n\int_A^B \frac{(x-A)^3}{3!} f_1'''(x)dx = \bigO(n(B-A)^4) = \bigO(M^4/n).$$
Similar calculations using that $\mathsf{T}_j^{(k)}(x) = (-1/b)^k \mathsf{T}_{j+k}(x)$ for $j,k \geq 0$ give
\begin{align*}
& \int_{\frac{b\rho^{2b}}{1-\frac{M}{\sqrt{n}}}}^{\frac{b\rho^{2b}}{1-\epsilon}}f(x)dx = \int_{b\rho^{2b}}^{\frac{b\rho^{2b}}{1-\epsilon}} \bigg\{ f(x) + \frac{b\rho^{2b} \mathsf{T}_{1}(b\rho^{2b})}{(1+\mathsf{T}_{0}(b\rho^{2b}))(x-b\rho^{2b})} \bigg\}dx - \frac{b\rho^{2b} \mathsf{T}_{1}(b\rho^{2b})}{2(1+\mathsf{T}_{0}(b\rho^{2b}))} \ln n \\
&\hspace{2.6cm} - \frac{b\rho^{2b} \mathsf{T}_{1}(b\rho^{2b})}{1+\mathsf{T}_{0}(b\rho^{2b})} \ln \frac{\epsilon}{M(1-\epsilon)} + \frac{M}{\sqrt{n}} \bigg\{ \frac{(b+\alpha) \rho^{2b} \mathsf{T}_{1}(b\rho^{2b})}{1+\mathsf{T}_{0}(b\rho^{2b})} - \frac{b \rho^{4b} \mathsf{T}_{2}(b\rho^{2b})}{2(1+\mathsf{T}_{0}(b\rho^{2b}))} \\
&\hspace{2.6cm} + \frac{b\rho^{4b} \mathsf{T}_{1}(b\rho^{2b})^{2}}{(1+\mathsf{T}_{0}(b\rho^{2b}))^{2}} \bigg\} + \bigO\bigg( \frac{M^{2}}{n} \bigg).
\end{align*}
Furthermore,
\begin{align*}
\frac{1}{n} \int_{\frac{b\rho^{2b}}{1-\frac{M}{\sqrt{n}}}}^{\frac{b\rho^{2b}}{1-\epsilon}} \frac{2b^{3}\rho^{4b} \mathsf{T}_{1}(x)}{(1+\mathsf{T}_{0}(x))(x-b\rho^{2b})^{3}} dx
& = \frac{1}{n} \int_{\frac{b\rho^{2b}}{1-\frac{M}{\sqrt{n}}}}^{\frac{b\rho^{2b}}{1-\epsilon}} \bigg(\frac{2b^{3}\rho^{4b} \mathsf{T}_{1}(b \rho^{2b})}{(1+\mathsf{T}_{0}(b \rho^{2b}))(x-b\rho^{2b})^{3}} + \bigO\Big(\frac{1}{(x-b\rho^{2b})^2}\Big)\bigg)dx
	\\
& = \frac{b \mathsf{T}_{1}(b \rho^{2b})}{M^{2}(1+\mathsf{T}_{0}(b \rho^{2b}))} + \bigO\bigg(\frac{1}{M \sqrt{n}}\bigg),
\end{align*}
and a similar calculation yields
\begin{align*}
& \frac{1}{n^{2}} \int_{\frac{b\rho^{2b}}{1-\frac{M}{\sqrt{n}}}}^{\frac{b\rho^{2b}}{1-\epsilon}} \frac{-10b^{5}\rho^{6b} \mathsf{T}_{1}(x)}{(1+\mathsf{T}_{0}(x))(x-b\rho^{2b})^{5}} dx = \frac{-5 b \mathsf{T}_{1}(b \rho^{2b})}{2\rho^{2b}M^{4}(1+\mathsf{T}_{0}(b \rho^{2b}))} + \bigO\bigg(\frac{1}{M^{3} \sqrt{n}}\bigg).
\end{align*}
Substituting the above expansions into (\ref{f1fsumslargen}), the claim follows from (\ref{S21sumjlargen}).
\end{proof}

For $k \in \{1,\dots,m\}$ and $j \in \{j: \lambda_{j} \in I_{2}\}=\{g_{-},\dots,g_{+}\}$, we define $M_{j,k} := \sqrt{n}(\lambda_{j,k}-1)$ and $M_{j} := \sqrt{n}(\lambda_{j}-1)$. For the large $n$ asymptotics of $\smash{S_{2}^{(2)}}$ we will need the following lemma.
\begin{lemma}(Taken from \cite[Lemma 3.11]{Charlier 2d gap})\label{lemma:Riemann sum}
Let $h \in C^{3}(\mathbb{R})$. As $n \to + \infty$, we have
\begin{align}
& \sum_{j=g_{-}}^{g_{+}}h(M_{j}) = b\rho^{2b} \int_{-M}^{M} h(t) dt \; \sqrt{n} - 2 b \rho^{2b} \int_{-M}^{M} th(t) dt + \bigg( \frac{1}{2}-\theta_{-}^{(n,M)} \bigg)h(M)+ \bigg( \frac{1}{2}-\theta_{+}^{(n,M)} \bigg)h(-M) \nonumber \\
& + \frac{1}{\sqrt{n}}\bigg[ 3b\rho^{2b} \int_{-M}^{M}t^{2}h(t)dt + \bigg( \frac{1}{12}+\frac{\theta_{-}^{(n,M)}(\theta_{-}^{(n,M)}-1)}{2} \bigg)\frac{h'(M)}{b\rho^{2b}} - \bigg( \frac{1}{12}+\frac{\theta_{+}^{(n,M)}(\theta_{+}^{(n,M)}-1)}{2} \bigg)\frac{h'(-M)}{b\rho^{2b}} \bigg] \nonumber \\
& + \bigO\Bigg(  \frac{1}{n^{3/2}} \sum_{j=g_{-}+1}^{g_{+}} \bigg( (1+|M_{j}|^{3}) \tilde{\mathfrak{m}}_{j,n}(h) + (1+M_{j}^{2})\tilde{\mathfrak{m}}_{j,n}(h') + (1+|M_{j}|) \tilde{\mathfrak{m}}_{j,n}(h'') + \tilde{\mathfrak{m}}_{j,n}(h''') \bigg)   \Bigg), \label{sum f asymp 2}
\end{align}
where, for $\tilde{h} \in C(\mathbb{R})$ and $j \in \{g_{-}+1,\dots,g_{+}\}$, we define $\tilde{\mathfrak{m}}_{j,n}(\tilde{h}) := \max_{x \in [M_{j},M_{j-1}]}|\tilde{h}(x)|$.
\end{lemma}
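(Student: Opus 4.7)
Since the statement is quoted verbatim from \cite[Lemma 3.11]{Charlier 2d gap}, the plan is ultimately to appeal to that reference; let me nevertheless describe the underlying strategy for completeness.

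The approach is Euler--Maclaurin summation after reparametrizing the sum via the natural scaling variable $M_j$. I would define $H(x) := h\bigl(\sqrt{n}\bigl(\tfrac{b\rho^{2b}}{x+\alpha/n} - 1\bigr)\bigr)$ so that $H(j/n) = h(M_j)$, and apply Lemma \ref{lemma:Riemann sum NEW} to $\sum_{j=g_-}^{g_+} H(j/n)$ with endpoints $A = b\rho^{2b}/(1+M/\sqrt{n}) - \alpha/n$ and $B = b\rho^{2b}/(1-M/\sqrt{n}) - \alpha/n$, chosen together with the offsets $a_0, b_0$ so that $a_n = g_-$ and $b_n = g_+$. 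This is precisely where the fractional parts $\theta_\pm^{(n,M)}$ enter the boundary correction terms.

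The next step is to transport the integrals and boundary values back to the $t = M_j$ variable. The substitution $x + \alpha/n = b\rho^{2b}/(1 + t/\sqrt{n})$ gives
$$dx = -\frac{b\rho^{2b}}{\sqrt{n}}\Bigl(1 - \frac{2t}{\sqrt{n}} + \frac{3t^2}{n} + O(t^3 n^{-3/2})\Bigr) dt,$$
so expanding $\int_A^B H(x)\, dx$ in powers of $1/\sqrt{n}$ produces the three integrals $\int_{-M}^M h\, dt$, $\int_{-M}^M t h\, dt$, $\int_{-M}^M t^2 h\, dt$ weighted by $\sqrt{n}$, $1$, $1/\sqrt{n}$ respectively, matching the leading three contributions in the claimed expansion. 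The boundary terms $\tfrac{1}{2}((1-2a_0)f(A) + (1+2b_0)f(B))$ and $\tfrac{1}{12n}((-1+6a_0-6a_0^2)f'(A) + (1+6b_0+6b_0^2)f'(B))$ from Lemma \ref{lemma:Riemann sum NEW} translate, after evaluating $H(A), H(B), H'(A), H'(B)$ by the chain rule, into the stated $h(\pm M), h'(\pm M)$ contributions with the stated coefficients in $\theta_\pm^{(n,M)}$.

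The main technical obstacle is controlling the remainder quantitatively. By the chain rule, $H^{(k)}(x)$ is a polynomial in $h^{(1)}(M_j), \dots, h^{(k)}(M_j)$ whose coefficients involve powers of $\partial_x M_j \sim \sqrt{n}$ and powers of $M_j$ itself. Consequently, the Euler--Maclaurin remainders $O\bigl(n^{-3}(\mathfrak{m}_{A,n}(H''')+\mathfrak{m}_{B,n}(H'''))\bigr)$ and $O\bigl(n^{-4}\sum_j \mathfrak{m}_{j,n}(H'''')\bigr)$ unfold, upon collecting terms, into exactly the weighted sum of $\tilde{\mathfrak{m}}_{j,n}(h), \tilde{\mathfrak{m}}_{j,n}(h'), \tilde{\mathfrak{m}}_{j,n}(h''), \tilde{\mathfrak{m}}_{j,n}(h''')$ with polynomial weights $(1+|M_j|^3), (1+M_j^2), (1+|M_j|), 1$ stated in the lemma. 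Executing this bookkeeping cleanly---in particular verifying that no additional error terms survive---is the main burden, and is carried out in \cite[Lemma 3.11]{Charlier 2d gap}.
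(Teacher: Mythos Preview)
The paper does not prove this lemma at all; it simply imports it verbatim from \cite[Lemma 3.11]{Charlier 2d gap}, so your ultimate appeal to that reference is exactly what the paper does. Your added sketch of the mechanism (reparametrize via $M_j$, apply the Euler--Maclaurin-type Lemma \ref{lemma:Riemann sum NEW}, then unwind the change of variables and chain-rule bookkeeping) is a correct and helpful elaboration beyond what the paper provides.
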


\begin{lemma}\label{lemma:S2kp2p hard}
For any $x_{1},\dots,x_{m} \in \mathbb{R}$, there exists $\delta > 0$ such that
\begin{align*}
&  S_{2}^{(2)} = E_{2}^{(M)} \sqrt{n} + E_{4}^{(M)} + \frac{E_{5}^{(M)}}{\sqrt{n}} + \bigO\bigg( \frac{M^{4}}{n} + \frac{M^{14}}{n^{2}} \bigg), \\
& E_{2}^{(M)} = 2b\rho^{2b} M \ln(1+\mathsf{T}_{0}(b\rho^{2b})), \\
& E_{4}^{(M)} = \ln(1+\mathsf{T}_{0}(b\rho^{2b})) \big( 1-\theta_{-}^{(n,M)}-\theta_{+}^{(n,M)} \big) + b \rho^{2b} \int_{-M}^{M} h_{1}(t)dt, \\
& E_{5}^{(M)} = 2b\rho^{2b}M^{3} \ln(1+\mathsf{T}_{0}(b\rho^{2b})) + \bigg( \frac{1}{2}-\theta_{-}^{(n,M)} \bigg) h_{1}(M) + \bigg( \frac{1}{2}-\theta_{+}^{(n,M)} \bigg) h_{1}(-M) \\
& \hspace{1.15cm} + b \rho^{2b} \int_{-M}^{M} \big( h_{2}(t)-2th_{1}(t) \big)dt,
\end{align*}
as $n \to +\infty$ uniformly for $u_{1} \in D_\delta(x_1),\dots,u_{m} \in D_\delta(x_m)$, where $h_{1}$, $h_{2}$ are given by
\begin{align}
& h_{1}(x) = - \frac{2\rho^{b} \mathsf{T}_{1}(b\rho^{2b})}{1+\mathsf{T}_{0}(b\rho^{2b})} \frac{e^{-\frac{1}{2}x^{2}\rho^{2b}}}{\sqrt{2\pi} \, \mathrm{erfc}(-\frac{x \rho^{b}}{\sqrt{2}})}, \label{def of h1 HARD} \\
& h_{2}(x) = -\frac{h_{1}(x)^{2}}{2} + \frac{1}{1+\mathsf{T}_{0}(b\rho^{2b})} \frac{e^{-\frac{1}{2}x^{2}\rho^{2b}}}{\sqrt{2\pi} \, \mathrm{erfc}(-\frac{x \rho^{b}}{\sqrt{2}})} \bigg\{ \Big( \rho^{b} x - \frac{5}{3}\rho^{3b}x^{3} \Big) \mathsf{T}_{1}(b\rho^{2b}) \nonumber \\
& \hspace{1.25cm} -\rho^{3b}x \mathsf{T}_{2}(b\rho^{2b}) + \frac{4-10\rho^{2b}x^{2}}{3} \mathsf{T}_{1}(b\rho^{2b}) \frac{e^{-\frac{1}{2}x^{2}\rho^{2b}}}{\sqrt{2\pi}\, \mathrm{erfc}(-\frac{x\rho^{b}}{\sqrt{2}})} \bigg\}. \nonumber
\end{align}
\end{lemma}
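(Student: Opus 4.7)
The strategy is to uniformly expand the summand of $S_2^{(2)}$ in the critical window $\lambda_j \in [1-M/\sqrt n, 1+M/\sqrt n]$, then apply the Riemann-sum formula of Lemma \ref{lemma:Riemann sum}. In this window, neither part of Lemma \ref{lemma: asymp of gamma for lambda bounded away from 1} applies, and one instead uses the intermediate uniform expansion recorded in Appendix \ref{section:uniform asymp gamma}, which writes $\gamma(a_j,a_j\lambda_{j,\ell})/\Gamma(a_j)$ as $\tfrac12\erfc(-\eta_{j,\ell}\sqrt{a_j/2})$ plus corrections in powers of $1/a_j$. Combined with $a_j = n\rho^{2b}/b + O(1)$, $\eta_{j,\ell}\sqrt{a_j/2} = -\rho^b M_{j,\ell}/\sqrt{2} + O(1/\sqrt n)$ and $M_{j,\ell} = M_j - \lambda_j t_\ell/\sqrt n$, a Taylor expansion of $\erfc$ around $-\rho^b M_j/\sqrt 2$ and of the subleading piece around $\lambda_j=1$ yields
\begin{align*}
\frac{\gamma(a_j,a_j\lambda_{j,\ell})}{\gamma(a_j,a_j\lambda_j)} = 1 + \frac{P_1(M_j)\,t_\ell}{\sqrt n} + \frac{P_2(M_j;t_\ell)}{n} + O\bigl((1+|M_j|)^{c} n^{-3/2}\bigr),
\end{align*}
with explicit $P_1,P_2$ built out of $e^{-\rho^{2b}M_j^2/2}/\erfc(-\rho^b M_j/\sqrt 2)$ and polynomial factors in $M_j$.

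Next I would multiply by $\omega_\ell$, sum over $\ell$ using $\sum_\ell \omega_\ell = \mathsf T_0(b\rho^{2b})$, $\sum_\ell \omega_\ell t_\ell = \mathsf T_1(b\rho^{2b})$, $\sum_\ell \omega_\ell t_\ell^2 = \mathsf T_2(b\rho^{2b})$, and take $\ln(1+\cdot)$ via $\ln(\Omega+\varepsilon) = \ln\Omega + \varepsilon/\Omega - \varepsilon^2/(2\Omega^2)+\cdots$. Collecting terms to order $1/n$ gives
\begin{align*}
\ln\Bigl(1+\sum_{\ell=1}^m \omega_\ell \frac{\gamma(a_j,a_j\lambda_{j,\ell})}{\gamma(a_j,a_j\lambda_j)}\Bigr) = \ln(1+\mathsf T_0(b\rho^{2b})) + \frac{h_1(M_j)}{\sqrt n} + \frac{h_2(M_j)}{n} + \text{error},
\end{align*}
with $h_1,h_2$ exactly the functions in the statement: the constant $\ln\Omega$ arises from $\Omega=1+\mathsf T_0(b\rho^{2b})$; $h_1$ is the single Taylor piece weighted by $\mathsf T_1(b\rho^{2b})/\Omega$; and the $-h_1^2/2$ ingredient of $h_2$ is the $-\varepsilon^2/(2\Omega^2)$ contribution of the logarithm, while the remainder of $h_2$ comes from the $1/n$ term in the ratio expansion.

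Finally, Lemma \ref{lemma:Riemann sum} is applied layer by layer to $h(M_j)=\ln\Omega + h_1(M_j)/\sqrt n + h_2(M_j)/n$. The constant layer $\ln\Omega$ contributes $2b\rho^{2b}M\ln\Omega\sqrt n = E_2^{(M)}\sqrt n$, the boundary term $(1-\theta_-^{(n,M)}-\theta_+^{(n,M)})\ln\Omega$ inside $E_4^{(M)}$, and $2b\rho^{2b}M^3\ln\Omega/\sqrt n$ as the first term of $E_5^{(M)}/\sqrt n$, while odd-in-$t$ integrals like $\int_{-M}^M t\ln\Omega\,dt$ vanish. The $h_1/\sqrt n$ layer produces $b\rho^{2b}\int_{-M}^M h_1(t)\,dt$ in $E_4^{(M)}$ together with the boundary pieces $(\tfrac12-\theta_-^{(n,M)})h_1(M)$, $(\tfrac12-\theta_+^{(n,M)})h_1(-M)$ and the $-2th_1$ integral in $E_5^{(M)}$. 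The $h_2/n$ layer contributes $b\rho^{2b}\int_{-M}^M h_2(t)\,dt/\sqrt n$. Adding these matches the stated formulas for $E_2^{(M)},E_4^{(M)},E_5^{(M)}$ exactly.

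The main technical obstacle is uniform control of the remainder. The prefactor $e^{-\rho^{2b}M_j^2/2}/\erfc(-\rho^b M_j/\sqrt 2)$ grows like $|M_j|$ as $M_j\to -\infty$, and higher-order terms in the intermediate $\gamma$-expansion introduce negative powers of $\lambda_j-1\sim M_j/\sqrt n$, so both the per-summand error and the Riemann-sum remainder $(1+|M_j|^3)\tilde{\mathfrak m}_{j,n}(h)/n^{3/2}$ of Lemma \ref{lemma:Riemann sum} require careful tracking. Balancing these blow-ups against the $O(M\sqrt n)$ number of summands is what yields the combined error $O(M^4/n + M^{14}/n^2)$ appearing in the statement; no new idea is needed beyond bookkeeping, but this is the longest and most delicate part of the argument.
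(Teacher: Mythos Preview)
Your approach is essentially the same as the paper's: expand the summand via the uniform $\gamma$-asymptotics of Lemma~\ref{lemma: uniform}, take the logarithm, and feed the result into Lemma~\ref{lemma:Riemann sum}. The identification of the three layers $\ln\Omega$, $h_1/\sqrt n$, $h_2/n$ and how they populate $E_2^{(M)},E_4^{(M)},E_5^{(M)}$ is correct.

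The one substantive point you underplay is the error control, which is \emph{not} merely bookkeeping. When the ratio is expanded as $g_1(M_j)+g_2(M_j)/\sqrt n+g_3(M_j)/n+g_4(M_j)/n^{3/2}+g_5(M_j)/n^2+\cdots$, the coefficients $g_k$ carry \emph{apparent} polynomial growth as $M_j\to-\infty$ that is far too large: naively $g_3=O(x^4)$, $g_4=O(x^7)$, $g_5=O(x^{10})$. The paper computes $g_4$ and $g_5$ explicitly and verifies that cancellations bring these down to $g_3=O(x^2)$, $g_4=O(x^3)$, $g_5=O(x^4)$. Only after this does the $n^{-3/2}$-level remainder in the logarithm become $O((1+|M_j|^3)/n^{3/2})$, which sums to $M^4/n$; without the cancellation you would get $M^8/n$ or worse, and for $M=n^{1/10}$ this exceeds the target $n^{-3/5}$. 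The second error $M^{14}/n^2$ then comes from the raw truncation $O((1+|M_j|^{13})/n^{5/2})$ after expanding through order $n^{-2}$. So you must in fact push the ratio expansion two orders beyond $h_2$ and check these cancellations explicitly; this is the real content behind the stated error bound, not a routine afterthought. (Minor point: $a_j\approx n\rho^{2b}$, not $n\rho^{2b}/b$.)
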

\begin{proof}
Using \eqref{asymp prelim of S2kpvp hard} and Lemma \ref{lemma: uniform}, we obtain
\begin{align}\label{lol1 hard}
& S_{2}^{(2)} = \sum_{j:\lambda_{j}\in I_{2}} \ln \bigg( 1+\sum_{\ell=1}^{m} \omega_{\ell} \frac{ \frac{1}{2}\mathrm{erfc}\Big(-\eta_{j,\ell} \sqrt{a_{j}/2}\Big) - R_{a_{j}}(\eta_{j,\ell}) }{ \frac{1}{2}\mathrm{erfc}\big(-\eta_{j} \sqrt{a_{j}/2}\big) - R_{a_{j}}(\eta_{j}) } \bigg).
\end{align}
For $j \in \{j:\lambda_{j}\in I_{2}\}$, we have $1-\frac{M}{\sqrt{n}} \leq \lambda_{j} = \frac{bn\rho^{2b}}{j+\alpha} \leq 1+\frac{M}{\sqrt{n}}$, $-M \leq M_{j} \leq M$, and
\begin{align*}
M_{j,k} = M_{j} - \frac{t_{k}}{\sqrt{n}} - \frac{t_{k}M_{j}}{n}, \qquad k=1,\dots,m.
\end{align*}
Furthermore, as $n \to + \infty$ we have
\begin{align}\nonumber
\eta_{j,\ell} = &\; \frac{M_{j}}{\sqrt{n}} - \frac{M_{j}^{2}+3t_{\ell}}{3n} + \frac{7M_{j}^{3}-12 t_{\ell} M_{j}}{36 n^{3/2}}
-\frac{73 M_j^4-45 M_j^2 t_{\ell}+180 t_{\ell}^2}{540 n^2}
	\\
& + \frac{1331 M_j^5-552 M_j^3 t_{\ell}-1080 M_j t_{\ell}^2}{12960
   n^{5/2}}
+ \bigO\bigg(\frac{1+M_{j}^{6}}{n^3}\bigg) \label{asymp etaj and etajsqrtajover2 1 hard}
	\\ \nonumber
-\eta_{j,\ell} \sqrt{a_{j}/2} = & - \frac{M_{j}\rho^{b}}{\sqrt{2}} + \frac{(5M_{j}^{2}+6t_{\ell})\rho^{b}}{6\sqrt{2} \sqrt{n}} - \frac{\rho^{b} M_{j}(53M_{j}^{2}+12t_{\ell})}{72\sqrt{2} n}
+ \frac{\rho^b \left(270 M_j^2 t_{\ell}+1447 M_j^4+720
   t_{\ell}^2\right)}{2160 \sqrt{2} n^{3/2}}
   	\\
 & -\frac{M_j \rho^b \left(5352 M_j^2 t_{\ell}+32183 M_j^4+4320
   t_{\ell}^2\right)}{51840 \sqrt{2} n^2}
+ \bigO\bigg( \frac{1+M_{j}^{6}}{n^{5/2}} \bigg) \label{asymp etaj and etajsqrtajover2 2 hard}
\end{align}
uniformly for $j\in \{j:\lambda_{j}\in I_{2}\}$. Hence, by \eqref{asymp of Ra}, as $n \to + \infty$ we have
\begin{align}\nonumber
R_{a_{j}}(\eta_{j,\ell}) = &\; \frac{e^{-\frac{M_{j}^{2}\rho^{2b}}{2}}}{\sqrt{2\pi}} \bigg\{ \frac{-1}{3\rho^{b}\sqrt{n}} - \frac{M_{j}(3+10M_{j}^{2}\rho^{2b}+12t_{\ell}\rho^{2b})}{36\rho^{b}n} 	
	\\ \nonumber
& + \frac{45 \rho^{4 b} (6 M_j^2 t_{\ell}+7 M_j^4+4
   t_{\ell}^2)+2 \rho^{2 b} (22 M_j^2-45 t_{\ell})-5 \rho^{6
   b} (5 M_j^3+6 M_j t_{\ell})^2-2}{1080 \rho^{3 b} n^{3/2}}
   	\\\nonumber
& + \frac{M_j \rho^{-3 b}}{38880 n^2}\Big(-6 \rho^{4 b} (1806 M_j^2 t_{\ell}+1967
   M_j^4+1350 t_{\ell}^2)
   	\\\nonumber
&   +45 \rho^{6 b} (5 M_j^2+6
   t_{\ell}) (42 M_j^2 t_{\ell}+47 M_j^4+24
   t_{\ell}^2)-36 \rho^{2 b} (29 M_j^2+45 t_{\ell})
   	\\
&   -10 \label{Rajetajlexpansion}
   M_j^2 \rho^{8 b} (5 M_j^2+6 t_{\ell})^3-243\Big)
+ \bigO((1+M_{j}^{12})n^{-\frac{5}{2}}) \bigg\}
\end{align}
and
\begin{align}\nonumber
\frac{1}{2}\mathrm{erfc}\Big(-\eta_{j,\ell} &\sqrt{a_{j}/2}\Big) = \frac{1}{2}\mathrm{erfc}\Big(-\frac{\rho^{b}M_{j}}{\sqrt{2}}\Big) -\frac{e^{-\frac{M_{j}^{2}\rho^{2b}}{2}}\rho^{b}(5 M_{j}^{2} - 6 t_{\ell})}{6\sqrt{2\pi}\sqrt{n}}
	\\ \nonumber
& + \frac{e^{-\frac{M_{j}^{2}\rho^{2b}}{2}}M_{j} \rho^{b}}{72\sqrt{2\pi} \, n} \Big( 53M_{j}^{2} + 12 t_{\ell} - 25 M_{j}^{4} \rho^{2b} - 60 M_{j}^{2} t_{\ell} \rho^{2b} - 36 t_{\ell}^{2}\rho^{2b} \Big)
	\\ \label{erfcetajlexpansion}
& + \frac{e^{-\frac{M_{j}^{2}\rho^{2b}}{2}}P_8(M_j, t_\ell)}{n^{3/2}}
+ \frac{e^{-\frac{M_{j}^{2}\rho^{2b}}{2}}P_{11}(M_j, t_\ell)}{n^{2}}
+ \bigO\Big(e^{-\frac{M_{j}^{2}\rho^{2b}}{2}}\frac{1+M_{j}^{14}}{n^{5/2}} \Big),
\end{align}
uniformly for $j\in \{j:\lambda_{j}\in I_{2}\}$, where $P_8(M_j, t_\ell)$ and $P_{11}(M_j, t_\ell)$ are polynomials in $M_j$ of order $8$ and $11$, respectively. If $t_\ell = 0$, then $\lambda_{j,\ell} = \lambda_j$ and $\eta_{j,\ell} = \eta_j$; hence analogous expansions of $R_{a_{j}}(\eta_j)$ and $\frac{1}{2}\mathrm{erfc}(-\eta_{j} \sqrt{a_{j}/2})$ can be obtained by setting $t_\ell = 0$ in (\ref{Rajetajlexpansion}) and (\ref{erfcetajlexpansion}).
Substituting the above asymptotics into \eqref{lol1 hard}, we obtain
\begin{align}
1+\sum_{\ell=1}^{m} \omega_{\ell} \frac{ \frac{1}{2}\mathrm{erfc}\Big(-\eta_{j,\ell} \sqrt{a_{j}/2}\Big) - R_{a_{j}}(\eta_{j,\ell}) }{ \frac{1}{2}\mathrm{erfc}\big(-\eta_{j} \sqrt{a_{j}/2}\big) - R_{a_{j}}(\eta_{j}) }   = &\; g_{1}(M_{j}) + \frac{g_{2}(M_{j})}{\sqrt{n}} + \frac{g_{3}(M_{j})}{n} \nonumber \\
& + \frac{g_{4}(M_{j})}{n^{3/2}} + \frac{g_{5}(M_{j})}{n^{2}} + \bigO\Big(\frac{1+|M_{j}|^{13}}{n^{5/2}}\Big), \label{asymp of S2kp2p in proof hard}
\end{align}
as $n \to + \infty$, where
\begin{align*}
& g_{1}(x) = 1+\mathsf{T}_{0}(b\rho^{2b}), \qquad g_{2}(x) = - \frac{e^{-\frac{1}{2}x^{2}\rho^{2b}}2\rho^{b} \mathsf{T}_{1}(b\rho^{2b})}{\sqrt{2\pi} \mathrm{erfc}(-\frac{x\rho^{b}}{\sqrt{2}})}, \\
& g_{3}(x) = \frac{e^{-\frac{1}{2}x^{2}\rho^{2b}}}{3\sqrt{2\pi} \, \mathrm{erfc}(-\frac{x\rho^{b}}{\sqrt{2}})} \bigg\{ \frac{e^{-\frac{1}{2}x^{2}\rho^{2b}} \mathsf{T}_{1}(b\rho^{2b})}{\sqrt{2\pi} \, \mathrm{erfc}(-\frac{x\rho^{b}}{\sqrt{2}})} (4-10x^{2}\rho^{2b}) + \mathsf{T}_{1}(b\rho^{2b}) \big( 3x\rho^{b}-5x^{3}\rho^{3b} \big) \\
& \hspace{4.1cm} -3 \rho^{3b}x \mathsf{T}_{2}(b\rho^{2b}) \bigg\}.
\end{align*}
The functions $g_{4}$ and $g_{5}$ can also be computed explicitly, but we do not write them down.
The functions $g_j(x)$, $j = 2, \dots, 5$, have exponential decay as $x \to +\infty$.
Also, since
\begin{align}\label{trivial asymp}
\frac{e^{-\frac{1}{2}x^{2}\rho^{2b}}}{\sqrt{2\pi} \, \mathrm{erfc}(-\frac{x\rho^{b}}{\sqrt{2}})} = -\frac{\rho^{b} x}{2} + \bigO(x^{-1}), \qquad \mbox{as } x \to -\infty,
\end{align}
$g_{2}(x) = \bigO(x)$ as $x \to -\infty$. It appears at first sight that $g_{3}(x) = \bigO(x^{4})$ as $x \to -\infty$. However, a direct computation using \eqref{trivial asymp} shows that some cancellations occur and in fact $g_{3}(x) = \bigO(x^{2})$ as $x \to -\infty$. Similarly, the exact expressions for $g_{4}$ and $g_{5}$ suggest at first sight that $g_{4}(x) = \bigO(x^{7})$ and $g_{5}(x) = \bigO(x^{10})$ as $x \to -\infty$, but here too, cancellations occur and in fact we have $g_{4}(x) = \bigO(x^{3})$ and $g_{5}(x) = \bigO(x^{4})$ as $x \to -\infty$. Thus, after a computation using \eqref{asymp of S2kp2p in proof hard}, we obtain
\begin{align*}
& S_{2}^{(2)} = \sum_{j=g_{-}}^{g_{+}} \bigg\{ \ln(1+\mathsf{T}_{0}(b\rho^{2b})) + \frac{h_{1}(M_{j})}{\sqrt{n}} + \frac{h_{2}(M_{j})}{n} + \bigO\bigg( \frac{1+|M_{j}|^{3}}{n^{3/2}} + \frac{1+|M_{j}|^{13}}{n^{5/2}} \bigg) \bigg\}.
\end{align*}
as $n \to + \infty$, where $h_1 = g_2/g_1$ and $h_2 = -h_1^2/2 + g_3/g_1$. Note that
\begin{align*}
\sum_{j=g_{-}}^{g_{+}}\bigO\bigg( \frac{1+|M_{j}|^{3}}{n^{3/2}} + \frac{1+|M_{j}|^{13}}{n^{5/2}} \bigg) = \bigO\bigg( \frac{M^{4}}{n} + \frac{M^{14}}{n^{2}} \bigg), \qquad \mbox{as } n \to + \infty.
\end{align*}
Using Lemma \ref{lemma:Riemann sum}, we find the claim.
\end{proof}
Let us define
\begin{align}
& \mathcal{I}_1 = \int_{-\infty}^{+\infty} \bigg\{ \frac{e^{-y^{2}}}{\sqrt{\pi}\, \mathrm{erfc}(y)} - \chi_{(0,+\infty)}(y) \bigg[ y + \frac{y}{2(1+y^{2})} \bigg] \bigg\}dy, \label{def of I1}
	\\
& \mathcal{I}_{2} = \int_{-\infty}^{+\infty} \bigg\{ \frac{y^{3}e^{-y^{2}}}{\sqrt{\pi} \, \mathrm{erfc}(y)} - \chi_{(0,+\infty)}(y) \bigg[ y^{4}+\frac{y^{2}}{2}-\frac{1}{2} \bigg] \bigg\}dy, \label{def of I3} \\
& \mathcal{I}_{3} = \int_{-\infty}^{+\infty} \bigg\{ \bigg( \frac{e^{-y^{2}}}{\sqrt{\pi} \, \mathrm{erfc}(y)} \bigg)^{2} - \chi_{(0,+\infty)}(y) \bigg[ y^{2}+1 \bigg] \bigg\} dy, \label{def of I4}
	\\
& \mathcal{I}_{4} = \int_{-\infty}^{+\infty} \bigg\{ \bigg( \frac{y \, e^{-y^{2}}}{\sqrt{\pi} \, \mathrm{erfc}(y)} \bigg)^{2} - \chi_{(0,+\infty)}(y)\bigg[ y^{4}+y^{2}-\frac{3}{4} \bigg] \bigg\} dy, \label{def of I5}
\end{align}
and recall that $\mathcal{I}$ is defined in \eqref{def of I}.

\begin{lemma}\label{lemma: asymp of S2k final hard}
The constant $M'$ can be chosen sufficiently large such that the following holds. For any $x_{1},\dots,x_{m} \in \mathbb{R}$, there exists $\delta > 0$ such that
\begin{align*}
& S_{2} =  - j_{-} \ln \Omega + C_{1}^{(\epsilon)}n + C_{2} \ln n + C_{3}^{(n,\epsilon)} + \frac{\widehat{C}_{4}}{\sqrt{n}}  + \bigO\bigg(\frac{\sqrt{n}}{M^{11}} +  \frac{1}{M^{6}} + \frac{1}{\sqrt{n} M} + \frac{M^{4}}{n} + \frac{M^{14}}{n^{2}} \bigg),
\end{align*}
as $n \to +\infty$ uniformly for $u_{1} \in D_\delta(x_1),\dots,u_{m} \in D_\delta(x_m)$, where $C_{2}$ is as in the statement of Theorem \ref{thm:main thm hard} and
\begin{align*}
 C_{1}^{(\epsilon)} =&\; b \rho^{2b} \ln \Omega + \int_{b\rho^{2b}}^{\frac{b\rho^{2b}}{1-\epsilon}} f_{1}(x)dx, \\
 C_{3}^{(n,\epsilon)} = &\; \frac{1}{2}\ln \Omega + \int_{b\rho^{2b}}^{\frac{b\rho^{2b}}{1-\epsilon}} \bigg\{ f(x) + \frac{b\rho^{2b} \mathsf{T}_{1}(b\rho^{2b})}{\Omega (x-b\rho^{2b})} \bigg\}dx + \bigg( \frac{1}{2}-\alpha - \theta_{+}^{(n,\epsilon)}  \bigg) f_{1}\bigg(\frac{b\rho^{2b}}{1-\epsilon}\bigg) \\
&  - \frac{2b\rho^{2b}}{\Omega}\mathsf{T}_{1}(b\rho^{2b}) \mathcal{I}_{1} + \frac{b\rho^{2b}}{2\Omega}\mathsf{T}_{1}(b \rho^{2b}) \big( \ln 2 - 2b \ln(\rho) \big) - \frac{\mathsf{T}_{1}(b\rho^{2b})}{\Omega}b\rho^{2b} \ln \bigg( \frac{\epsilon}{1-\epsilon} \bigg), \\
\widehat{C}_{4} = & \; \sqrt{2}b\rho^{b}\frac{ \rho^{2b} \mathsf{T}_{2}(b\rho^{2b}) -5 \mathsf{T}_{1}(b\rho^{2b}) }{ \Omega} \mathcal{I}  + \frac{10\sqrt{2} b \rho^{b}}{3} \frac{\mathsf{T}_{1}(b\rho^{2b})}{\Omega} \mathcal{I}_{2}
	\\
& + \sqrt{2} b \rho^{2b} \frac{\mathsf{T}_{1}(b\rho^{2b})}{\Omega} \bigg( \frac{2}{3\rho^{b}} -  \rho^{b} \frac{\mathsf{T}_{1}(b\rho^{2b})}{\Omega} \bigg) \mathcal{I}_{3}
 - \frac{10 \sqrt{2} b \rho^{b}}{3} \frac{\mathsf{T}_{1}(b\rho^{2b})}{\Omega} \mathcal{I}_{4},
\end{align*}
and $f_{1}$ and $f$ are as in the statement of Lemma \ref{lemma:S3 asymp hard}.
\end{lemma}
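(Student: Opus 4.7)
The plan is to combine the asymptotic expansions of $S_{2}^{(1)}$, $S_{2}^{(2)}$ and $S_{2}^{(3)}$ established in Lemmas \ref{lemma:S2p1p hard}, \ref{lemma:S2kp2p hard}, \ref{lemma:S2kp3p hard}, reorganize the resulting sum by powers of $n$, and show that all $M$-dependent pieces cancel once the integrals $\int_{-M}^{M}h_{j}(t)\,dt$ appearing in $E_{4}^{(M)}$ and $E_{5}^{(M)}$ are expanded for large $M$ in terms of the constants $\mathcal{I},\mathcal{I}_{1},\ldots,\mathcal{I}_{4}$. The announced error bound then follows from the error terms of those three lemmas.

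The lower-order bookkeeping is routine. At order $n$ only $S_{2}^{(3)}$ and $S_{2}^{(1)}$ contribute, producing $-j_{-}\ln\Omega + C_{1}^{(\epsilon)}n$. The coefficient of $\ln n$ comes entirely from $D_{3}$ in $S_{2}^{(1)}$ and equals $C_{2}$. At order $\sqrt{n}$, the three contributions $-b\rho^{2b}M\ln\Omega$, $D_{2}^{(M)}=-b\rho^{2b}\ln\Omega\cdot M$ and $E_{2}^{(M)}=2b\rho^{2b}M\ln\Omega$ sum to zero. At order $n^{0}$, using $f_{1}(b\rho^{2b})=\ln\Omega$ and $f_{1}'(b\rho^{2b})=-\mathsf{T}_{1}(b\rho^{2b})/(b\Omega)$, the $M^{2}\ln\Omega$ pieces cancel between $S_{2}^{(3)}$ and $D_{4}^{(n,\epsilon,M)}$, and the various $\theta_{\pm}$- and $\alpha$-dependent logarithmic terms collapse to $\tfrac12 \ln\Omega$.

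The main obstacle is the identification of the $M$-independent remainder of $b\rho^{2b}\int_{-M}^{M}h_{1}(t)\,dt$ (and of the analogous $n^{-1/2}$-integral), and the verification that its polynomial-in-$M$ divergent parts match the explicit $M^{2}$, $\ln M$, $M^{-2}$, $M^{-4}$ terms in $D_{4}^{(n,\epsilon,M)}$. The key tool is the change of variables $y=-t\rho^{b}/\sqrt{2}$, which turns $h_{1}(t)$ into a multiple of $e^{-y^{2}}/(\sqrt{\pi}\,\mathrm{erfc}(y))$. Combined with $\mathrm{erfc}(y)\sim e^{-y^{2}}/(\sqrt{\pi}y)$ and the definition of $\mathcal{I}_{1}$, this yields
\begin{align*}
b\rho^{2b}\int_{-M}^{M}h_{1}(t)\,dt = & -\frac{b\rho^{4b}\mathsf{T}_{1}(b\rho^{2b})}{2\Omega}M^{2}-\frac{b\rho^{2b}\mathsf{T}_{1}(b\rho^{2b})}{\Omega}\ln M -\frac{2b\rho^{2b}\mathsf{T}_{1}(b\rho^{2b})}{\Omega}\mathcal{I}_{1}
	\\
& + \frac{b\rho^{2b}\mathsf{T}_{1}(b\rho^{2b})}{2\Omega}(\ln 2 - 2b\ln\rho) + \bigO(M^{-2}).
\end{align*}
The $M^{2}$, $\ln M$ and $\bigO(M^{-2})$ pieces on the right absorb the corresponding explicit terms in $D_{4}^{(n,\epsilon,M)}$, while the finite part assembles with the $\ln(\epsilon/(M(1-\epsilon)))$-term, the boundary piece $(\tfrac12-\alpha-\theta_{+}^{(n,\epsilon)})f_{1}(b\rho^{2b}/(1-\epsilon))$ and the regularized integral in $D_{4}^{(n,\epsilon,M)}$ to reproduce $C_{3}^{(n,\epsilon)}$. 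A completely analogous but longer expansion of $\int_{-M}^{M}(h_{2}(t)-2th_{1}(t))\,dt$, using $\mathcal{I}$, $\mathcal{I}_{2}$, $\mathcal{I}_{3}$, $\mathcal{I}_{4}$ to regularize the various divergent integrals, produces $\widehat{C}_{4}$ after cancellation of the $M$ and $M^{3}$ pieces of $D_{5}^{(n,M)}$ and the $-bM^{3}\rho^{2b}n^{-1/2}\ln\Omega$ piece of $S_{2}^{(3)}$.
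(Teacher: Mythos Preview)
Your approach is the same as the paper's, and the cancellations at orders $n$, $\sqrt{n}$, $\ln n$ are handled correctly. The gap is in the error control for $C_{3}^{(n,\epsilon)}$. You write
\[
b\rho^{2b}\int_{-M}^{M}h_{1}(t)\,dt = (\text{explicit terms}) + \bigO(M^{-2})
\]
and then claim this $\bigO(M^{-2})$ ``absorbs the corresponding explicit terms in $D_{4}^{(n,\epsilon,M)}$''. This is not a valid step: $D_{4}^{(n,\epsilon,M)}$ contains the specific quantities $\frac{b\mathsf{T}_{1}(b\rho^{2b})}{M^{2}\Omega}$ and $\frac{-5b\mathsf{T}_{1}(b\rho^{2b})}{2\rho^{2b}M^{4}\Omega}$, and saying that an unspecified $\bigO(M^{-2})$ error cancels them still leaves an $\bigO(M^{-2})$ error at order $n^{0}$. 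The lemma, however, requires $\bigO(M^{-6})$ there; with $M=n^{1/10}$ an $\bigO(M^{-2})$ term is $\bigO(n^{-1/5})$, which would destroy the $\bigO(n^{-3/5})$ bound in Theorem~\ref{thm:main thm hard}.

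What is actually needed is a sharper regularization. The paper subtracts, in addition to $y+\frac{y}{2(1+y^{2})}$, the extra term $\frac{3y}{4(1+y^{6})}$ from $\frac{e^{-y^{2}}}{\sqrt{\pi}\,\mathrm{erfc}(y)}$, so that the resulting integrand is $\bigO(y^{-7})$ and its tail contributes $\bigO(M^{-6})$. The primitive of this extra term is elementary, and when expanded for large $M$ it produces precisely the $M^{-2}$ and $M^{-4}$ pieces that match $\frac{b\mathsf{T}_{1}}{M^{2}\Omega}$ and $\frac{-5b\mathsf{T}_{1}}{2\rho^{2b}M^{4}\Omega}$ in $D_{4}^{(n,\epsilon,M)}$ (together with a constant $-\frac{\pi}{4\sqrt{3}}$ that cancels as well). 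Without this finer regularization and the explicit matching, the proof does not deliver the stated error. Your treatment of $\widehat{C}_{4}$ has the same issue in principle, but there the target error is only $\bigO(M^{-1})$, so a cruder regularization suffices.
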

\begin{proof}
By combining Lemmas \ref{lemma:S2kp3p hard}, \ref{lemma:S2p1p hard} and \ref{lemma:S2kp2p hard}, we have
\begin{align*}
S_{2} = & - j_{-} \ln \Omega + C_{1}^{(\epsilon)}n + \widetilde{C}_{2}\sqrt{n} + C_{2} \ln n + C_{3}^{(n,\epsilon,M)} + \frac{C_{4}^{(M)}}{\sqrt{n}}
	\\
& + \bigO\bigg(\frac{\sqrt{n}}{M^{11}} + \frac{1}{M^{6}} + \frac{1}{\sqrt{n} M} + \frac{M^{4}}{n} + \frac{M^{14}}{n^{2}} \bigg),
\end{align*}
as $n \to +\infty$ uniformly for $u_{1} \in D_\delta(x_1),\dots,u_{m} \in D_\delta(x_m)$, where $C_{1}^{(\epsilon)}$ is as in the statement, and
\begin{align*}
& \widetilde{C}_{2} = - bM \rho^{2b} \ln \Omega + D_{2}^{(M)}+E_{2}^{(M)}, \\
& C_{3}^{(n,\epsilon,M)} = \big( bM^{2}\rho^{2b} - \alpha + \theta_{-}^{(n,M)} \big) \ln \Omega+ D_4^{(n,\epsilon,M)} + E_4^{(M)}, \\
& C_{4}^{(n,M)} = -bM^{3}\rho^{2b} \ln \Omega + D_{5}^{(n,M)} + E_{5}^{(M)}.
\end{align*}
Using that $f_{1}(b \rho^{2b}) = \ln(1+\mathsf{T}_{0}(b\rho^{2b})) = \ln \Omega$, we readily verify that $\widetilde{C}_{2}=0$. Furthermore, by rearranging the terms and using $f_{1}'(b\rho^{2b}) = \frac{\frac{-1}{b}\mathsf{T}_{1}(b\rho^{2b})}{1+\mathsf{T}_{0}(b\rho^{2b})}$, we obtain
\begin{align*}
C_{3}^{(n,\epsilon,M)} & = \frac{1}{2} \ln \Omega + \widetilde{C}_{3}^{(\epsilon,M)} + \int_{b\rho^{2b}}^{\frac{b\rho^{2b}}{1-\epsilon}} \bigg\{ f(x) + \frac{b\rho^{2b}\mathsf{T}_{1}(b\rho^{2b})}{(1+\mathsf{T}_{0}(b\rho^{2b}))(x-b\rho^{2b})} \bigg\}dx \\
& + \bigg( \frac{1}{2}-\alpha-\theta_{+}^{(n,\epsilon)} \bigg) f_{1}\bigg( \frac{b\rho^{2b}}{1-\epsilon} \bigg),
\end{align*}
where
\begin{align*}
& \widetilde{C}_{3}^{(\epsilon,M)} := b\rho^{2b}\int_{-M}^{M}h_{1}(t)dt + \frac{\mathsf{T}_{1}(b\rho^{2b})}{1+\mathsf{T}_{0}(b\rho^{2b})}\bigg( M^{2}\frac{b\rho^{4b}}{2} - b\rho^{2b} \ln \bigg( \frac{\epsilon}{M(1-\epsilon)} \bigg) + \frac{b}{M^{2}} + \frac{-5b}{2\rho^{2b}M^{4}} \bigg).
\end{align*}
Using the definition \eqref{def of h1 HARD} of $h_{1}$ and a change of variables, we rewrite $\widetilde{C}_{3}^{(\epsilon,M)}$ as
\begin{align*}
& \widetilde{C}_{3}^{(\epsilon,M)} = -2b \rho^{2b} \frac{\mathsf{T}_{1}(b\rho^{2b})}{1+\mathsf{T}_{0}(b\rho^{2b})} \int_{-\frac{M\rho^{b}}{\sqrt{2}}}^{\frac{M\rho^{b}}{\sqrt{2}}} \bigg\{ \frac{e^{-y^{2}}}{\sqrt{\pi} \, \mathrm{erfc}(y)} - \chi_{(0,+\infty)}(y) \bigg[ y + \frac{y}{2(1+y^{2})} + \frac{3y}{4(1+y^{6})} \bigg] \bigg\}dy \\
& + \frac{\mathsf{T}_{1}(b\rho^{2b})}{1+\mathsf{T}_{0}(b\rho^{2b})} \bigg\{ -2b \rho^{2b} \int_{0}^{\frac{M\rho^{b}}{\sqrt{2}}} \bigg( y + \frac{y}{2(1+y^{2})} + \frac{3y}{4(1+y^{6})} \bigg)dy + M^{2}\frac{b\rho^{4b}}{2} + b\rho^{2b} \ln M \\
& + \frac{b}{M^{2}} + \frac{-5b}{2\rho^{2b}M^{4}} \bigg\} - \frac{\mathsf{T}_{1}(b\rho^{2b})}{1+\mathsf{T}_{0}(b\rho^{2b})} b \rho^{2b} \ln \frac{\epsilon}{1-\epsilon}.
\end{align*}
The reason for the above rewriting stems from the following asymptotics:
\begin{align*}
\frac{e^{-y^{2}}}{\sqrt{\pi} \, \mathrm{erfc}(y)} - \bigg[ y + \frac{y}{2(1+y^{2})} + \frac{3y}{4(1+y^{6})} \bigg] = \bigO(y^{-7}), \qquad \mbox{as } y \to + \infty,
\end{align*}
which implies
\begin{align*}
& \int_{-\frac{M\rho^{b}}{\sqrt{2}}}^{\frac{M\rho^{b}}{\sqrt{2}}} \bigg\{ \frac{e^{-y^{2}}}{\sqrt{\pi} \, \mathrm{erfc}(y)} - \chi_{(0,+\infty)}(y) \bigg[ y + \frac{y}{2(1+y^{2})} + \frac{3y}{4(1+y^{6})} \bigg] \bigg\}dy
	\\
& = \int_{-\infty}^{\infty} \bigg\{ \frac{e^{-y^{2}}}{\sqrt{\pi} \, \mathrm{erfc}(y)} - \chi_{(0,+\infty)}(y) \bigg[ y + \frac{y}{2(1+y^{2})} + \frac{3y}{4(1+y^{6})} \bigg] \bigg\}dy + \bigO(M^{-6})
	\\
& = \int_{-\infty}^{\infty} \bigg\{ \frac{e^{-y^{2}}}{\sqrt{\pi} \, \mathrm{erfc}(y)} - \chi_{(0,+\infty)}(y) \bigg[ y + \frac{y}{2(1+y^{2})} \bigg] \bigg\}dy - \frac{\pi}{4\sqrt{3}} + \bigO(M^{-6}), \qquad \mbox{as } n \to + \infty.
\end{align*}
Furthermore, using a primitive and then expanding yields
\begin{align*}
& -2b \rho^{2b} \int_{0}^{\frac{M\rho^{b}}{\sqrt{2}}} \bigg( y + \frac{y}{2(1+y^{2})} + \frac{3y}{4(1+y^{6})} \bigg)dy + M^{2}\frac{b\rho^{4b}}{2} + b\rho^{2b} \ln M + \frac{b}{M^{2}} + \frac{-5b}{2\rho^{2b}M^{4}} \\
& = - \frac{b\rho^{2b}}{6} \bigg( \sqrt{3} \, \pi - 3 \ln 2 + 6b \ln \rho \bigg) + \bigO(M^{-6}), \qquad \mbox{as } n \to + \infty.
\end{align*}
It follows from the above and some further simplifications that
\begin{align*}
C_{3}^{(n,\epsilon,M)} = C_{3}^{(n,\epsilon)} + \bigO(M^{-6}), \qquad \mbox{as } n \to + \infty,
\end{align*}
where $C_{3}^{(n,\epsilon)}$ is as in the statement. Similar (but longer) computation, using among other things that
\begin{align*}
f_{1}''(b\rho^{2b}) = - \bigg( \frac{\frac{-1}{b}\mathsf{T}_{1}(b\rho^{2b})}{\Omega} \bigg)^{2} + \frac{(-\frac{1}{b})^{2}\mathsf{T}_{2}(b\rho^{2b})}{\Omega},
\end{align*}
show that $C_{4}^{(n,M)}$ can be rewritten as
\begin{align}\label{lol2}
C_{4}^{(n,M)} = Q_{1}^{(n,M)} + Q_{2}^{(n,M)} + Q_{3}^{(M)} + Q_{4}^{(M)} + Q_{5}^{(M)} + Q_{6}^{(M)},
\end{align}
where
\begin{align*}
& Q_{1}^{(n,M)} = - \frac{2\rho^{b}\mathsf{T}_{1}(b\rho^{2b})}{\Omega} \bigg( \frac{1}{2}-\theta_{-}^{(n,M)} \bigg) \frac{e^{-\frac{M^{2}\rho^{2b}}{2}}}{\sqrt{2\pi} \, \mathrm{erfc}(-\frac{M\rho^{b}}{\sqrt{2}})}, \\
& Q_{2}^{(n,M)} = - \frac{2\rho^{b}\mathsf{T}_{1}(b\rho^{2b})}{\Omega} \bigg( \frac{1}{2}-\theta_{+}^{(n,M)} \bigg) \bigg( \frac{e^{-\frac{M^{2}\rho^{2b}}{2}}}{\sqrt{2\pi} \, \mathrm{erfc}(\frac{M\rho^{b}}{\sqrt{2}})} - \frac{M\rho^{b}}{2} \bigg), \\
& Q_{3}^{(M)} = \frac{\sqrt{2}\, b \rho^{b}}{\Omega} \big( -5\mathsf{T}_{1}(b\rho^{2b}) + \rho^{2b} \mathsf{T}_{2}(b\rho^{2b}) \big) \int_{-\frac{M\rho^{b}}{\sqrt{2}}}^{\frac{M\rho^{b}}{\sqrt{2}}} \bigg\{ \frac{ye^{-y^{2}}}{\sqrt{\pi}\, \mathrm{erfc}(y)} - \chi_{(0,+\infty)}(y) \bigg[ y^{2}+\frac{1}{2} \bigg] \bigg\}dy, \\
& Q_{4}^{(M)} = \frac{10\sqrt{2} \, b \rho^{b}}{3\Omega} \mathsf{T}_{1}(b\rho^{2b}) \int_{-\frac{M\rho^{b}}{\sqrt{2}}}^{\frac{M\rho^{b}}{\sqrt{2}}} \bigg\{ \frac{y^{3}e^{-y^{2}}}{\sqrt{\pi}\, \mathrm{erfc}(y)} - \chi_{(0,+\infty)}(y) \bigg[ y^{4} + \frac{y^{2}}{2} - \frac{1}{2} \bigg] \bigg\}dy, \\
& Q_{5}^{(M)} = \sqrt{2} \, b \rho^{b} \frac{\mathsf{T}_{1}(b\rho^{2b})}{\Omega}\bigg( \frac{2}{3} - \rho^{2b} \frac{\mathsf{T}_{1}(b\rho^{2b})}{\Omega} \bigg) \int_{-\frac{M\rho^{b}}{\sqrt{2}}}^{\frac{M\rho^{b}}{\sqrt{2}}} \bigg\{ \bigg( \frac{ e^{-y^{2}}}{\sqrt{\pi}\, \mathrm{erfc}(y)} \bigg)^{2} - \chi_{(0,+\infty)}(y) \bigg[ y^{2} + 1 \bigg] \bigg\}dy, \\
& Q_{6}^{(M)} = - \frac{10\sqrt{2}\, b  \rho^{b}}{3}\frac{\mathsf{T}_{1}(b\rho^{2b})}{\Omega} \int_{-\frac{M\rho^{b}}{\sqrt{2}}}^{\frac{M\rho^{b}}{\sqrt{2}}} \bigg\{ \bigg( \frac{ ye^{-y^{2}}}{\sqrt{\pi}\, \mathrm{erfc}(y)} \bigg)^{2} - \chi_{(0,+\infty)}(y) \bigg[ y^{4} + y^{2} - \frac{3}{4} \bigg] \bigg\}dy.
\end{align*}
Furthermore, using the asymptotics of $\mathrm{erfc}(y)$ as $y \to \pm \infty$, we infer that
\begin{align*}
& Q_{1}^{(n,M)} = \bigO(e^{-\frac{M^{2}\rho^{2b}}{2}}), \qquad Q_{2}^{(n,M)} = \bigO(M^{-1}), \\
& Q_{3}^{(M)} = \frac{\sqrt{2}\, b \rho^{b}}{\Omega} \big( \rho^{2b} \mathsf{T}_{2}(b\rho^{2b})-5\mathsf{T}_{1}(b\rho^{2b}) \big) \int_{-\infty}^{\infty} \bigg\{ \frac{ye^{-y^{2}}}{\sqrt{\pi}\, \mathrm{erfc}(y)} - \chi_{(0,+\infty)}(y) \bigg[ y^{2}+\frac{1}{2} \bigg] \bigg\}dy + \bigO(M^{-1}), \\
& Q_{4}^{(M)} = \frac{10\sqrt{2} \, b \rho^{b}}{3\Omega} \mathsf{T}_{1}(b\rho^{2b}) \int_{-\infty}^{\infty} \bigg\{ \frac{y^{3}e^{-y^{2}}}{\sqrt{\pi}\, \mathrm{erfc}(y)} - \chi_{(0,+\infty)}(y) \bigg[ y^{4} + \frac{y^{2}}{2} - \frac{1}{2} \bigg] \bigg\}dy +\bigO(M^{-1}), \\
& Q_{5}^{(M)} = \sqrt{2} \, b \rho^{b} \frac{\mathsf{T}_{1}(b\rho^{2b})}{\Omega}\bigg( \frac{2}{3} - \rho^{2b} \frac{\mathsf{T}_{1}(b\rho^{2b})}{\Omega} \bigg) \int_{-\infty}^{\infty} \bigg\{ \bigg( \frac{ e^{-y^{2}}}{\sqrt{\pi}\, \mathrm{erfc}(y)} \bigg)^{2} - \chi_{(0,+\infty)}(y) \bigg[ y^{2} + 1 \bigg] \bigg\}dy +\bigO(M^{-1}), \\
& Q_{6}^{(M)} = - \frac{10\sqrt{2}\, b  \rho^{b}}{3}\frac{\mathsf{T}_{1}(b\rho^{2b})}{\Omega} \int_{-\infty}^{\infty} \bigg\{ \bigg( \frac{ ye^{-y^{2}}}{\sqrt{\pi}\, \mathrm{erfc}(y)} \bigg)^{2} - \chi_{(0,+\infty)}(y) \bigg[ y^{4} + y^{2} - \frac{3}{4} \bigg] \bigg\}dy +\bigO(M^{-1}),
\end{align*}
as $n \to + \infty$. Substituting the above asymptotics in \eqref{lol2} yields
\begin{align}
& C_{4}^{(n, M)} = \widehat{C}_{4} + \bigO(M^{-1}),
\end{align}
and the claim follows.
\end{proof}

Recall that $\mathcal{I}_{1}, \mathcal{I}_{2}, \mathcal{I}_{3}, \mathcal{I}_{4}$ are defined in \eqref{def of I1}--\eqref{def of I5}, and that $\mathcal{I}$ is defined in \eqref{def of I}.
\begin{lemma}\label{lemma:some simplification of integrals}
The following relations hold:
\begin{align}\label{lol3}
& \mathcal{I}_{1} = \frac{\ln (2\sqrt{\pi})}{2}, & &  \mathcal{I}_{3} = \mathcal{I}, & & \mathcal{I}_{4} = \mathcal{I}_{2}-\mathcal{I}.
\end{align}
In particular, $\widehat{C}_{4}=C_{4}$, where $C_{4}$ is as in the statement of Theorem \ref{thm:main thm hard}.
\end{lemma}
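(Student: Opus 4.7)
The plan rests on one differential identity and one antiderivative. Writing
\[
R(y) := \frac{e^{-y^{2}}}{\sqrt{\pi}\,\mathrm{erfc}(y)},
\]
one has $R(y) = -\tfrac12 \tfrac{d}{dy}\ln \mathrm{erfc}(y)$ and, differentiating directly and using $\mathrm{erfc}'(y)=-\tfrac{2}{\sqrt{\pi}}e^{-y^{2}}$,
\begin{equation}\label{Rricc}
R'(y) \;=\; 2R(y)\bigl(R(y)-y\bigr), \qquad\text{i.e.}\qquad R(y)^{2} \;=\; y R(y) + \tfrac12 R'(y).
\end{equation}
These are the only ingredients; everything else is a cutoff-and-asymptotics computation using the standard expansion
$R(y) = y + \tfrac{1}{2y} - \tfrac{1}{2y^{3}} + \bigO(y^{-5})$ as $y\to+\infty$, together with $R(-L)=\bigO(e^{-L^{2}})$.

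For $\mathcal{I}_{1}$, I would integrate on $[-L,L]$: the $R$-part integrates to $-\tfrac12[\ln\mathrm{erfc}(L)-\ln\mathrm{erfc}(-L)]$; using $\ln\mathrm{erfc}(-L)\to\ln 2$ and $\ln\mathrm{erfc}(L) = -L^{2}-\ln L-\tfrac12\ln\pi+\bigO(L^{-2})$, this equals $\tfrac{L^{2}}{2}+\tfrac12\ln L+\tfrac14\ln\pi+\tfrac12\ln 2+\bigO(L^{-2})$. The counterterm integrates to $\tfrac{L^{2}}{2}+\tfrac14\ln(1+L^{2}) = \tfrac{L^{2}}{2}+\tfrac12\ln L+\bigO(L^{-2})$. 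Taking $L\to\infty$ leaves $\mathcal{I}_{1}=\tfrac14\ln\pi+\tfrac12\ln 2=\tfrac12\ln(2\sqrt{\pi})$.

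For $\mathcal{I}_{3}=\mathcal{I}$, substitute \eqref{Rricc} to get
\[
R(y)^{2} - (y^{2}+1) \;=\; \bigl[yR(y)-(y^{2}+\tfrac12)\bigr] + \tfrac12 R'(y) - \tfrac12,
\]
so $\mathcal{I}_{3}-\mathcal{I} = \int_{-\infty}^{\infty}[\tfrac12 R'(y)-\tfrac12\chi_{(0,\infty)}(y)]dy$, which by cutoff equals $\lim_{L\to\infty}\tfrac12[R(L)-R(-L)-L]=0$ by the asymptotics above. For $\mathcal{I}_{4}=\mathcal{I}_{2}-\mathcal{I}$, I combine all three integrals: the polynomial counterterms collapse to $\tfrac{3y^{2}}{2}+\tfrac14$, and using \eqref{Rricc} in the form $y^{2}(R^{2}-yR)=\tfrac{y^{2}}{2}R'$ the $R$-part becomes $\tfrac{y^{2}}{2}R'(y)+yR(y)$. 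Integration by parts on $[-L,L]$ eliminates the $yR$-term and leaves $\tfrac12 L^{2}R(L)-\tfrac12 L^{2}R(-L)$; since $\tfrac12L^{2}R(L)=\tfrac{L^{3}}{2}+\tfrac{L}{4}+\bigO(L^{-1})$ this cancels exactly against $\int_{0}^{L}(\tfrac{3y^{2}}{2}+\tfrac14)dy=\tfrac{L^{3}}{2}+\tfrac{L}{4}$, yielding $0$ in the limit. The main (mild) obstacle is just the bookkeeping of subleading corrections so that the cancellations in $\mathcal{I}_{4}-\mathcal{I}_{2}+\mathcal{I}$ actually vanish to order $L^{-1}$, but this follows from the standard asymptotic expansion of $R$ to one further order.

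Finally, substituting $\mathcal{I}_{3}=\mathcal{I}$ and $\mathcal{I}_{4}=\mathcal{I}_{2}-\mathcal{I}$ into the displayed formula for $\widehat{C}_{4}$ in Lemma~\ref{lemma: asymp of S2k final hard}, the $\mathcal{I}_{2}$-coefficient becomes
$\tfrac{10\sqrt{2}b\rho^{b}}{3}\tfrac{\mathsf{T}_{1}(b\rho^{2b})}{\Omega}-\tfrac{10\sqrt{2}b\rho^{b}}{3}\tfrac{\mathsf{T}_{1}(b\rho^{2b})}{\Omega}=0$,
while the $\mathcal{I}$-coefficient collects to
\[
\sqrt{2}b\rho^{b}\bigg[\tfrac{\rho^{2b}\mathsf{T}_{2}-5\mathsf{T}_{1}}{\Omega}+\tfrac{2\mathsf{T}_{1}}{3\Omega}+\tfrac{10\mathsf{T}_{1}}{3\Omega}-\rho^{2b}\tfrac{\mathsf{T}_{1}^{2}}{\Omega^{2}}\bigg] = \sqrt{2}b\rho^{b}\bigg[\rho^{2b}\tfrac{\mathsf{T}_{2}}{\Omega}-\tfrac{\mathsf{T}_{1}}{\Omega}-\rho^{2b}\tfrac{\mathsf{T}_{1}^{2}}{\Omega^{2}}\bigg],
\]
since $-5+\tfrac23+\tfrac{10}{3}=-1$. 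This matches the formula for $C_{4}$ in Theorem~\ref{thm:main thm hard}, proving $\widehat{C}_{4}=C_{4}$.
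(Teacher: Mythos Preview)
Your proof is correct and follows essentially the same route as the paper. The paper phrases the two key identities as ``integration by parts gives'' $\int R^{2}\,dy=\tfrac{R}{2}+\int yR\,dy$ and $\int y^{2}R^{2}\,dy=\tfrac{y^{2}R}{2}+\int(y^{3}-y)R\,dy$, then evaluates on $[-N,N]$ with the polynomial counterterms and lets $N\to\infty$; your Riccati identity $R'=2R(R-y)$ is exactly the differential form of these relations, and your cutoff computations and the final algebraic reduction of $\widehat{C}_{4}$ to $C_{4}$ match the paper's.
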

\begin{proof}
The first identity in \eqref{lol3} follows from a direct calculation using the primitive
\begin{align*}
\int \frac{e^{-y^{2}}}{\sqrt{\pi} \, \mathrm{erfc}(y)} dy = -\frac{1}{2}\ln \big(\mathrm{erfc}(y)\big) + \mbox{const}.
\end{align*}
Integration by parts gives
\begin{align*}
& \int \bigg( \frac{e^{-y^{2}}}{\sqrt{\pi} \, \mathrm{erfc}(y)} \bigg)^{2} dy = \frac{e^{-y^{2}}}{2\sqrt{\pi} \, \mathrm{erfc}(y)} + \int \frac{y \, e^{-y^{2}}}{\sqrt{\pi} \, \mathrm{erfc}(y)} dy + \mbox{const}, \\
& \int \bigg( \frac{y \, e^{-y^{2}}}{\sqrt{\pi} \, \mathrm{erfc}(y)} \bigg)^{2} dy = \frac{y^{2}e^{-y^{2}}}{2\sqrt{\pi} \, \mathrm{erfc}(y)} + \int \frac{ (y^{3}-y)\, e^{-y^{2}}}{\sqrt{\pi} \, \mathrm{erfc}(y)} dy + \mbox{const}.
\end{align*}
Hence, for any $N>0$,
\begin{multline*}
\int_{-N}^{N} \bigg\{ \bigg( \frac{e^{-y^{2}}}{\sqrt{\pi} \, \mathrm{erfc}(y)} \bigg)^{2} - \chi_{(0,+\infty)}(y) \bigg[ y^{2}+1 \bigg] \bigg\} dy = \bigg( \frac{e^{-N^{2}}}{2\sqrt{\pi}\mathrm{erfc}(N)} - \frac{N}{2} \bigg) - \frac{e^{-N^{2}}}{2\sqrt{\pi}\mathrm{erfc}(-N)} \\
+ \int_{-N}^{N} \bigg\{ \frac{y \, e^{-y^{2}}}{\sqrt{\pi}\, \mathrm{erfc}(y)} - \chi_{(0,+\infty)}(y) \bigg[ y^{2}+\frac{1}{2} \bigg] \bigg\} dy,
\end{multline*}
and
\begin{multline*}
\int_{-N}^{N} \bigg\{ \bigg( \frac{y \, e^{-y^{2}}}{\sqrt{\pi} \, \mathrm{erfc}(y)} \bigg)^{2} - \chi_{(0,+\infty)}(y) \bigg[ y^{4} + y^{2} - \frac{3}{4} \bigg] \bigg\} dy = \bigg( \frac{N^{2}e^{-N^{2}}}{2\sqrt{\pi}\mathrm{erfc}(N)} - \frac{N^{3}}{2} - \frac{N}{4} \bigg) - \frac{N^{2}e^{-N^{2}}}{2\sqrt{\pi}\mathrm{erfc}(-N)} \\
+ \int_{-N}^{N} \bigg\{ \frac{y^{3} \, e^{-y^{2}}}{\sqrt{\pi} \, \mathrm{erfc}(y)} - \chi_{(0,+\infty)}(y) \bigg[ y^{4} + \frac{y^{2}}{2} - \frac{1}{2} \bigg] \bigg\} dy - \int_{-N}^{N} \bigg\{ \frac{y \, e^{-y^{2}}}{\sqrt{\pi} \, \mathrm{erfc}(y)} - \chi_{(0,+\infty)}(y) \bigg[ y^{2} + \frac{1}{2} \bigg] \bigg\} dy.
\end{multline*}
The second and third identities in \eqref{lol3} are obtained by letting $N \to + \infty$ in the above two formulas. We then find $\widehat{C}_{4} = C_{4}$ after a direct computation.
\end{proof}

\begin{proof}[End of the proof of Theorem \ref{thm:main thm hard}]
Let $M' > 0$ be sufficiently large such that Lemmas \ref{lemma: S2km1 hard} and \ref{lemma: asymp of S2k final hard} hold. Using \eqref{log Dn as a sum of sums hard} and Lemmas \ref{lemma: S0 hard}, \ref{lemma: S2km1 hard}, \ref{lemma:S3 asymp hard} and \ref{lemma: asymp of S2k final hard}, we conclude that for any $x_{1},\dots,x_{m} \in \mathbb{R}$, there exists $\delta > 0$ such that
\begin{align*}
& \ln \mathcal{E}_{n} = S_{0}+S_{1}+S_{2}+S_{3} \\
& = M' \ln \Omega + (j_{-}-M'-1) \ln \Omega - j_{-} \ln  \Omega + C_{1}^{(\epsilon)}n + n \int_{\frac{b \rho^{2b}}{1-\epsilon}}^{1} f_{1}(x)dx + C_{2} \ln n + C_{3}^{(n,\epsilon)} + \frac{C_{4}}{\sqrt{n}} \\
&  + \int_{\frac{b\rho^{2b}}{1-\epsilon}}^{1} f(x)dx + (\alpha+\theta_{+}^{(n,\epsilon)}-\tfrac{1}{2})f_{1}(\tfrac{b\rho^{2b}}{1-\epsilon})+\tfrac{1}{2}f_{1}(1) + \bigO\bigg( \frac{\sqrt{n}}{M^{11}} + \frac{1}{M^{6}} + \frac{1}{\sqrt{n} M} + \frac{M^{4}}{n} + \frac{M^{14}}{n^{2}} \bigg),
\end{align*}
as $n \to +\infty$ uniformly for $u_{1} \in D_\delta(x_1),\dots,u_{m} \in D_\delta(x_m)$. Since $M=n^{1/10}$, the above error term is $\bigO(n^{-3/5})$. Furthermore, using Lemma \ref{lemma:some simplification of integrals}, a computation shows that
\begin{align*}
& C_{1}^{(\epsilon)} + \int_{\frac{b \rho^{2b}}{1-\epsilon}}^{1} f_{1}(x)dx = C_{1}, \\
& -\ln \Omega + C_{3}^{(n,\epsilon)} + \int_{\frac{b\rho^{2b}}{1-\epsilon}}^{1} f(x)dx + (\alpha+\theta_{+}^{(n,\epsilon)}-\tfrac{1}{2})f_{1}(\tfrac{b\rho^{2b}}{1-\epsilon}))+\tfrac{1}{2}f_{1}(1) = C_{3},
\end{align*}
where $C_{1}$ and $C_{3}$ are as in the statement of Theorem \ref{thm:main thm hard}. This concludes the proof of Theorem \ref{thm:main thm hard}.
\end{proof}

\section{Proof of Theorem \ref{thm:main thm semi-hard}}\label{section:proof}
 As in the proof of Theorem \ref{thm:main thm hard}, our starting point is formula \eqref{log Dn as a sum of sums hard}, where $M'>0$ is an integer independent of $n$, $j_\pm$ are defined in (\ref{jplusjminusdef}), and $\epsilon > 0$ is such that (\ref{brho2b1epsilon}) holds. The variables $a_{j}, \lambda_j, \lambda_{j,k}, \eta_j, \eta_{j,k}$ are given by (\ref{def etajl hard}), where $r_k$ is now defined by (\ref{def of rell semi-hard}) (in contrast to Section \ref{section:proof edge} where $r_k$ was given by (\ref{def of rell hard})). The following two lemmas are analogous to Lemmas \ref{lemma: S0 hard} and \ref{lemma: S2km1 hard} and are proved in the same way.

\begin{lemma}\label{lemma: S0 semi-hard}
For any $x_{1},\dots,x_{m} \in \mathbb{R}$, there exists $\delta > 0$ such that
\begin{align}\label{asymp of S0 semi-hard}
S_{0} = M' \ln \Omega + \bigO(e^{-cn}), \qquad \mbox{as } n \to + \infty,
\end{align}
uniformly for $u_{1} \in D_\delta(x_1),\dots,u_{m} \in D_\delta(x_m)$.
\end{lemma}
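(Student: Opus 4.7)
The plan is to mirror the proof of Lemma \ref{lemma: S0 hard} essentially verbatim, adapting only the verification that the second argument of each incomplete gamma function in $S_0$ is of size $\asymp n$ under the semi-hard scaling \eqref{rellsemihardedge}. Indeed, for $j\in\{1,\dots,M'\}$ with $M'$ fixed, the first argument $a_j=(j+\alpha)/b$ is bounded, while
\begin{align*}
nr_\ell^{2b} = n\rho^{2b} - \sqrt{2}\,\mathfrak{s}_\ell \rho^b\sqrt{n} \ge cn, \qquad n\rho^{2b} \ge cn,
\end{align*}
for some $c>0$ and all sufficiently large $n$, since $\rho>0$ is fixed and the $\sqrt{n}$-correction is of strictly lower order than the leading $n\rho^{2b}$.

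Next, I would invoke the standard asymptotic for $\gamma(a,z)$ in the regime ``$a$ bounded, $z\to+\infty$'' (the appropriate case of Lemma \ref{lemma:various regime of gamma} in the appendix), to conclude that
\begin{align*}
\gamma(a_j,nr_\ell^{2b}) = \Gamma(a_j)\bigl[1+\bigO(e^{-cn})\bigr], \qquad \gamma(a_j,n\rho^{2b}) = \Gamma(a_j)\bigl[1+\bigO(e^{-cn})\bigr],
\end{align*}
where the implicit constants depend only on $M'$, $b$, $\alpha$, $\rho$, $\mathfrak{s}_1,\dots,\mathfrak{s}_m$ and are independent of $u_1,\dots,u_m$. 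Taking the ratio, the factor $\Gamma(a_j)$ cancels, and plugging into \eqref{def of S0 and S1 hard} together with the identity $\Omega = 1+\sum_{\ell=1}^m \omega_\ell = \sum_{\ell=1}^{m+1}\omega_\ell$ from \eqref{def of omegaell} gives
\begin{align*}
S_0 = \sum_{j=1}^{M'}\ln\!\Bigg(\sum_{\ell=1}^{m+1}\omega_\ell\bigl[1+\bigO(e^{-cn})\bigr]\Bigg) = M'\ln\Omega + \bigO(e^{-cn}),
\end{align*}
where the last step uses that $\Omega=e^{u_1+\dots+u_m}$ is bounded above and bounded away from $0$ uniformly for $u_\ell\in D_\delta(x_\ell)$, provided $\delta$ is chosen sufficiently small depending on $x_1,\dots,x_m$.

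There is no genuine obstacle here: the whole point of this lemma is that in the range $j\le M'$ the sum is ``trivial'' because both $nr_\ell^{2b}$ and $n\rho^{2b}$ are in the exponentially small tail regime of $\gamma$ relative to $\Gamma$. The only difference from the hard-edge case handled in Lemma \ref{lemma: S0 hard} is the $\sqrt{n}$ perturbation in $nr_\ell^{2b}$ coming from \eqref{rellsemihardedge}, but this perturbation is dominated by the leading linear term $n\rho^{2b}$ and therefore does not affect the qualitative bound $nr_\ell^{2b}\ge cn$ used in the gamma asymptotics.
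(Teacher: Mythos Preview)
Your proof is correct and follows exactly the approach the paper takes: the paper simply states that this lemma ``is proved in the same way'' as Lemma~\ref{lemma: S0 hard}, and your write-up spells out precisely that argument, including the one adaptation needed (checking $nr_\ell^{2b}\ge cn$ under the semi-hard scaling).
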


\begin{lemma}\label{lemma: S2km1 semi-hard}
The constant $M'$ can be chosen sufficiently large such that the following holds. For any $x_{1},\dots,x_{m} \in \mathbb{R}$, there exists $\delta > 0$ such that
\begin{align*}
& S_{1} = (j_{-}-M'-1) \ln \Omega + \bigO(e^{-cn}),
\end{align*}
as $n \to +\infty$ uniformly for $u_{1} \in D_\delta(x_1),\dots,u_{m} \in D_\delta(x_m)$.
\end{lemma}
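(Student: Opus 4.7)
The plan is to follow the strategy used for the analogous hard-edge statement in Lemma \ref{lemma: S2km1 hard}, and to show that the only modifications needed to accommodate the semi-hard scaling $r_\ell = \rho(1 - \sqrt{2}\,\mathfrak{s}_\ell\rho^{-b}n^{-1/2})^{1/(2b)}$ are quantitative and do not affect the structure of the argument. First I would rewrite
$$S_1 = \sum_{j=M'+1}^{j_- - 1} \ln\bigg(1 + \sum_{\ell=1}^m \omega_\ell \, \frac{\gamma(a_j, a_j \lambda_{j,\ell})}{\gamma(a_j, a_j \lambda_j)}\bigg),$$
using \eqref{def etajl hard}, which now gives $\lambda_{j,\ell} = \lambda_j(1 - \sqrt{2}\,\mathfrak{s}_\ell \rho^{-b} n^{-1/2})$ instead of $\lambda_j(1 - t_\ell/n)$. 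For $M' + 1 \le j \le j_- - 1$, the choice \eqref{jplusjminusdef} of $j_-$ gives $\lambda_j > 1 + \epsilon$, and since the $\mathfrak{s}_\ell$ are fixed, $\lambda_{j,\ell} > 1 + \epsilon/2$ for all sufficiently large $n$ and all $\ell$. Hence Lemma \ref{lemma: uniform}$(i)$ applies uniformly and yields
$$\frac{\gamma(a_j, a_j\lambda_{j,\ell})}{\gamma(a_j, a_j\lambda_j)} = \frac{1 + \bigO(e^{-a_j\eta_{j,\ell}^2/2})}{1 + \bigO(e^{-a_j\eta_j^2/2})},$$
the error terms being uniform in $j \in \{M'+1,\dots,j_--1\}$, $\ell \in \{1,\dots,m\}$, and $u_\ell$ in any fixed compact set.

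The main step is to control $a_j \eta_j^2$ and $a_j \eta_{j,\ell}^2$ from below. As in Lemma \ref{lemma: S2km1 hard} the map $j \mapsto a_j \eta_j^2$ is strictly decreasing since $\partial_j(a_j\eta_j^2) = -\tfrac{2}{b}\ln\lambda_j < 0$, and a direct computation at $j = j_-$ using $\lambda_{j_-} \to 1 + \epsilon$ as $n \to \infty$ shows $a_{j_-}\eta_{j_-}^2 \ge 2cn$ for some $c > 0$, provided $M'$ is chosen large enough to dispose of small-$j$ effects. The new feature compared to the hard edge is the perturbation size: by the mean value theorem applied to $\lambda \mapsto \lambda - 1 - \ln \lambda$,
$$a_j|\eta_{j,\ell}^2 - \eta_j^2| \le 2 a_j |\lambda_{j,\ell} - \lambda_j| \sup_{\xi \in [\lambda_{j,\ell},\lambda_j]} \bigl|1 - \xi^{-1}\bigr| \le C\, a_j \lambda_j n^{-1/2} = C\rho^{2b}\sqrt{n},$$
where I used $a_j \lambda_j = n\rho^{2b}$. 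Thus $a_j\eta_{j,\ell}^2 = a_j\eta_j^2 + \bigO(\sqrt{n}) \ge cn - \bigO(\sqrt{n}) \ge c'n$ uniformly in $j$ and $\ell$ for all sufficiently large $n$. This is the step that is slightly more delicate than in the hard-edge proof (where the discrepancy was only $\bigO(1)$), but it is still more than sufficient to absorb everything into exponentially small errors.

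With these bounds the ratio equals $1 + \bigO(e^{-c'n})$ uniformly, so the inner summand is $\ln\bigl(1 + \sum_\ell \omega_\ell\bigr) + \bigO(e^{-c''n}) = \ln\Omega + \bigO(e^{-c''n})$, where I used $\Omega = 1 + \sum_\ell \omega_\ell$ from \eqref{def of Omega intro}--\eqref{def of Omega j intro}. Summing $j$ from $M'+1$ to $j_- - 1$ (at most $n$ terms) multiplies the exponentially small error by $n$, which remains exponentially small, and yields
$$S_1 = (j_- - M' - 1)\ln\Omega + \bigO(e^{-cn}).$$
Uniformity in $u_\ell \in D_\delta(x_\ell)$ comes for free since the $\omega_\ell$ are entire in $u_1,\dots,u_m$ and hence bounded on the compact polydisk, and all the preceding error estimates depended on the $u_\ell$ only through the (bounded) coefficients $\omega_\ell$.
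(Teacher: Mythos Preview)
Your proof is correct and follows the same approach as the paper, which simply states that this lemma ``is proved in the same way'' as Lemma~\ref{lemma: S2km1 hard}. You have correctly identified and handled the one substantive difference: in the semi-hard scaling the discrepancy $a_j(\eta_{j,\ell}^2 - \eta_j^2)$ is $\bigO(\sqrt{n})$ rather than $\bigO(1)$, which is still negligible against the lower bound $a_{j_-}\eta_{j_-}^2 \ge 2cn$.
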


\begin{lemma}\label{lemma:S3 asymp semi-hard}
For any $x_{1},\dots,x_{m} \in \mathbb{R}$, there exists $\delta > 0$ such that
\begin{align*}
& S_{3} = \bigO(e^{-c\sqrt{n}}),
\end{align*}
as $n \to +\infty$ uniformly for $u_{1} \in D_\delta(x_1),\dots,u_{m} \in D_\delta(x_m)$.
\end{lemma}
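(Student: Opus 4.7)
In contrast with the hard edge, where $\lambda_{j,\ell}-\lambda_j$ is of size $1/n$, here the semi-hard scaling gives $\lambda_{j,\ell}=\lambda_j(1-\sqrt 2\,\mathfrak{s}_\ell/(\rho^b\sqrt n))$, so that $\lambda_j-\lambda_{j,\ell}$ is of order $1/\sqrt n$. This, combined with the fact that $\lambda_j$ stays bounded away from $1$ on the $S_3$ range, produces an exponent $\tfrac{a_j}{2}(\eta_{j,\ell}^2-\eta_j^2)$ of order $\sqrt n$, and hence each summand of $S_3$ will be exponentially small of order $e^{-c\sqrt n}$. Since $S_3$ contains at most $n$ terms, absorbing the polynomial into a slightly smaller exponent will yield the claim.

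The first step is to apply part $(ii)$ of Lemma \ref{lemma: asymp of gamma for lambda bounded away from 1} to both $\gamma(a_j,a_j\lambda_{j,\ell})$ and $\gamma(a_j,a_j\lambda_j)$. For $j\geq j_++1$ we have $\lambda_j<1-\epsilon$, and since $\lambda_{j,\ell}<\lambda_j$, the same bound holds for $\lambda_{j,\ell}$ uniformly in $\ell$ for all sufficiently large $n$. Dividing the two resulting asymptotic expansions gives
\begin{align*}
\frac{\gamma(a_j,a_j\lambda_{j,\ell})}{\gamma(a_j,a_j\lambda_j)}=e^{-\frac{a_j}{2}(\eta_{j,\ell}^2-\eta_j^2)}\,\frac{\lambda_j-1}{\lambda_{j,\ell}-1}\bigl(1+\bigO(n^{-1/2})\bigr),
\end{align*}
and the prefactor $(\lambda_j-1)/(\lambda_{j,\ell}-1)$ is uniformly bounded.

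The second step is to estimate the exponent. Writing $\eta^2(\lambda)=2(\lambda-1-\ln\lambda)$ and using $\ln(\lambda_{j,\ell}/\lambda_j)=-\sqrt 2\,\mathfrak{s}_\ell/(\rho^b\sqrt n)+\bigO(n^{-1})$, a short Taylor expansion gives
\begin{align*}
\eta_{j,\ell}^2-\eta_j^2=\frac{2\sqrt 2\,\mathfrak{s}_\ell(1-\lambda_j)}{\rho^b\sqrt n}+\bigO\bigl(n^{-1}\bigr).
\end{align*}
Since $a_j(1-\lambda_j)=a_j-n\rho^{2b}\geq n\rho^{2b}\epsilon/(1-\epsilon)$ on $\{j\geq j_++1\}$ and $\mathfrak{s}_m>0$ is a uniform lower bound on the $\mathfrak{s}_\ell$'s, it follows that $\tfrac{a_j}{2}(\eta_{j,\ell}^2-\eta_j^2)\geq c\sqrt n$ for some constant $c>0$ depending only on $b$, $\rho$, $\epsilon$ and $\mathfrak{s}_m$.

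Combining the two steps, the sum inside the logarithm in $S_3$ is $\bigO(e^{-c\sqrt n})$, uniformly in $j\in\{j_++1,\ldots,n\}$, $\ell$, and $u_1\in D_\delta(x_1),\dots,u_m\in D_\delta(x_m)$ (the $\omega_\ell$ being bounded on these disks). Hence each summand is $\bigO(e^{-c\sqrt n})$, and the sum of at most $n$ such terms is $\bigO(ne^{-c\sqrt n})=\bigO(e^{-c'\sqrt n})$ for any $c'\in(0,c)$. The only real obstacle is verifying the uniformity of the exponential decay rate across the admissible range of $j$, $\ell$, and $\vec u$; crucially, the hypothesis $\mathfrak{s}_m>0$ is what prevents the exponent from degenerating when $j$ is close to $j_+$.
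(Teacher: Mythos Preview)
Your proof is correct and follows essentially the same approach as the paper. Both arguments apply Lemma \ref{lemma: asymp of gamma for lambda bounded away from 1}~$(ii)$ to write the ratio as $e^{-\frac{a_j}{2}(\eta_{j,\ell}^2-\eta_j^2)}$ times a uniformly bounded factor, compute the exponent via $\eta^2(\lambda)=2(\lambda-1-\ln\lambda)$ to obtain $\frac{a_j}{2}(\eta_{j,\ell}^2-\eta_j^2)\geq c\sqrt n$ on the $S_3$ range (using $\mathfrak{s}_m>0$ and $1-\lambda_j\geq\epsilon$), and then absorb the at most $n$ summands into the exponential. The paper records the exponent in the equivalent form $\frac{\sqrt 2\,\mathfrak{s}_\ell(j/n-b\rho^{2b})\sqrt n}{b\rho^b}$; your expression $\frac{a_j\sqrt 2\,\mathfrak{s}_\ell(1-\lambda_j)}{\rho^b\sqrt n}$ is the same quantity since $a_j(1-\lambda_j)=a_j-n\rho^{2b}$.
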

\begin{proof}
For $j \geq j_{+}+1$ and $k \in \{1,\dots,m\}$, $1-\lambda_{j}$ and $1-\lambda_{j,k}$ are positive and remain bounded away from $0$. Hence, using Lemma \ref{lemma: asymp of gamma for lambda bounded away from 1} $(ii)$, we obtain
\begin{align*}
S_{3} & = \sum_{j=j_{+}+1}^{n} \ln \bigg\{ 1+\sum_{\ell=1}^{m} \omega_{\ell}\frac{e^{-\frac{a_{j}\eta_{j,\ell}^{2}}{2}}( \frac{-1}{\lambda_{j,\ell}-1}\frac{1}{\sqrt{a_{j}}}
+ \bigO(n^{-\frac{3}{2}}) )}{e^{-\frac{a_{j}\eta_{j}^{2}}{2}}( \frac{-1}{\lambda_{j}-1}\frac{1}{\sqrt{a_{j}}}+ \bigO(n^{-\frac{3}{2}}) )} \bigg\}
= \sum_{j=j_{+}+1}^{n} \ln \bigg\{ 1+\sum_{\ell=1}^{m} \omega_{\ell} \bigO(e^{\frac{a_j}{2}(\eta_j^2 - \eta_{j,\ell}^2)}) \bigg\},
\end{align*}
where the $\bigO$-terms are uniform for $j \in \{j_{+}+1,\dots,n\}$ and independent of $u_{1},\dots,u_{m}$. Using that $r_k$ is given by (\ref{def of rell semi-hard}), we find, as $n \to +\infty$,
\begin{align}\label{ajetasemihard}
\frac{a_j}{2}(\eta_j^2 - \eta_{j,\ell}^2)
= -\frac{\sqrt{2} \mathfrak{s}_{\ell}(j/n-b\rho^{2b})\sqrt{n}}{b\rho^{b}}
+ \bigO(1)
\end{align}
and hence
\begin{align*}
S_3 = \sum_{j=j_{+}+1}^{n} \ln \bigg( 1 + \sum_{\ell=1}^{m} \omega_{\ell} \bigO(e^{-\frac{\sqrt{2} \mathfrak{s}_{\ell}(j/n-b\rho^{2b})\sqrt{n}}{b\rho^{b}}}) \bigg),
\end{align*}
where the $\bigO$-terms are uniform for $j \in \{j_{+}+1,\dots,n\}$ and independent of $u_{1},\dots,u_{m}$.
 Since $\mathfrak{s}_{\ell}>0$ for all $\ell \in \{1,\dots,m\}$ and since $j/n-b\rho^{2b}$ is positive and bounded away from $0$ as $n \to + \infty$ with $j \in \{j_{+}+1,\dots,n\}$, the claim follows.
\end{proof}

We now focus on $S_{2}$. As in Section \ref{section:proof edge}, we decompose $S_{2}$ into three pieces, $S_{2}=S_{2}^{(1)}+S_{2}^{(2)}+S_{2}^{(3)}$, where the $S_{2}^{(v)}$ are given by (\ref{asymp prelim of S2kpvp hard}). However, in contrast to Section \ref{section:proof edge}, we let the intervals $I_v$ be given by (\ref{I1I2I3def}) with $M:=M' \ln n$. Using this $M$, we define $g_{\pm}$ and
$\theta_{-}^{(n,M)},\theta_{+}^{(n,M)} \in [0,1)$ as in Section \ref{section:proof edge}. The following lemma is analogous to Lemma \ref{lemma:S2kp3p hard}
and is proved in the same way.

\begin{lemma}\label{lemma:S2kp3p semi-hard}
The constant $M'$ can be chosen sufficiently large such that the following holds. For any $x_{1},\dots,x_{m} \in \mathbb{R}$, there exists $\delta > 0$ such that
\begin{align*}
S_{2}^{(3)} = & \; \Big( b\rho^{2b}n - j_{-} - bM\rho^{2b}\sqrt{n} + bM^{2}\rho^{2b} -\alpha+\theta_{-}^{(n,M)} - bM^{3}\rho^{2b}n^{-\frac{1}{2}} \Big) \ln  \Omega + \bigO(M^{4}n^{-1}),
\end{align*}
as $n \to +\infty$ uniformly for $u_{1} \in D_\delta(x_1),\dots,u_{m} \in D_\delta(x_m)$.
\end{lemma}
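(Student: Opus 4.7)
The plan is to adapt the argument of Lemma \ref{lemma:S2kp3p hard} to the semi-hard regime, the only substantive change being that the jump locations are now at distance $O(1/\sqrt{n})$ rather than $O(1/n)$ from the hard edge. First I would verify that for $\lambda_j \in I_3$ and $\ell \in \{1,\dots,m\}$, both $\lambda_j - 1$ and $\lambda_{j,\ell}-1$ are positive and bounded below by $c M/\sqrt{n}$ for some $c>0$. Since $r_\ell$ is now given by \eqref{def of rell semi-hard}, we have $\lambda_{j,\ell} = \lambda_j\bigl(1 - \sqrt{2}\,\mathfrak{s}_\ell/(\rho^b \sqrt{n})\bigr)$, so
\begin{equation*}
\lambda_{j,\ell} - 1 \;=\; (\lambda_j-1) - \frac{\sqrt{2}\,\mathfrak{s}_\ell \,\lambda_j}{\rho^b \sqrt{n}} \;\geq\; \frac{M-C}{\sqrt{n}},
\end{equation*}
where $C$ depends only on $\mathfrak{s}_1,\dots,\mathfrak{s}_m,\rho,b,\epsilon$. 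Since $M=M'\ln n \to +\infty$, this lower bound is at least $cM/\sqrt{n}$ once $n$ is large, for any fixed $M'>0$.

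Second, since $\eta_j \asymp \lambda_j-1$ and $\eta_{j,\ell}\asymp \lambda_{j,\ell}-1$ in this regime, the above lower bounds give $a_j\eta_j^2 \geq cM^2$ and $a_j\eta_{j,\ell}^2 \geq cM^2$ uniformly. Applying Lemma \ref{lemma: uniform}$(i)$ yields
\begin{equation*}
\frac{\gamma(a_j,a_j\lambda_{j,\ell})}{\gamma(a_j,a_j\lambda_j)} \;=\; \frac{1+\bigO(e^{-a_j\eta_{j,\ell}^2/2})}{1+\bigO(e^{-a_j\eta_j^2/2})} \;=\; 1+\bigO\bigl(e^{-cM^2}\bigr),
\end{equation*}
uniformly for $j\in\{j_-,\dots,g_--1\}$ and $u_1,\dots,u_m$ in small disks. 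Summing over this range and using that there are at most $n$ terms, I obtain
\begin{equation*}
S_2^{(3)} \;=\; \sum_{j=j_-}^{g_--1}\ln\!\bigl(\Omega+\bigO(e^{-cM^2})\bigr) \;=\; (g_--j_-)\ln\Omega + \bigO\bigl(n\, e^{-cM^2}\bigr).
\end{equation*}
With $M=M'\ln n$, the error $n\,e^{-cM^2} = n^{1-c(M')^2\ln n}$ is super-polynomially small in $n$, and in particular is absorbed into $\bigO(M^4/n)$ provided $M'$ is chosen large enough (any fixed $M'>0$ in fact suffices for $n$ large).

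Finally I would compute $g_--j_-$ by expanding $(1+M/\sqrt n)^{-1}=1-M/\sqrt n+M^2/n-M^3/n^{3/2}+\bigO(M^4/n^2)$:
\begin{equation*}
\frac{bn\rho^{2b}}{1+M/\sqrt{n}} \;=\; bn\rho^{2b} - b\rho^{2b}M\sqrt{n} + b\rho^{2b}M^2 - \frac{b\rho^{2b}M^3}{\sqrt{n}} + \bigO\!\left(\frac{M^4}{n}\right),
\end{equation*}
and then using the definition of $\theta_-^{(n,M)}$ to write $g_- = bn\rho^{2b}/(1+M/\sqrt{n}) - \alpha + \theta_-^{(n,M)}$. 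Substituting gives the stated asymptotic formula for $S_2^{(3)}$. The only conceptual point requiring attention is the verification that with the comparatively small choice $M=M'\ln n$ (in contrast to the hard edge, where $M=n^{1/10}$), the decay $e^{-cM^2}=e^{-c(M')^2(\ln n)^2}$ still dominates any polynomial error such as $M^4/n=(M')^4(\ln n)^4/n$; this is automatic for $n$ large, so no genuine restriction on $M'$ arises beyond the one already used to secure the lower bound on $\lambda_{j,\ell}-1$.
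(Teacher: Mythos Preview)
Your argument is correct and follows the same route as the paper, which simply refers back to the proof of Lemma~\ref{lemma:S2kp3p hard}. One small correction: you invoke Lemma~\ref{lemma: uniform}\,$(i)$, but that statement requires $\lambda \geq 1+\delta$ for a \emph{fixed} $\delta>0$, whereas here $\lambda_j$ may be as small as $1+M/\sqrt{n}\to 1$; the correct reference is Lemma~\ref{lemma: asymp of gamma for lambda bounded away from 1}\,$(i)$, which is uniform for $\lambda \geq 1+1/\sqrt{a}$ and is exactly what the paper uses in the proof of Lemma~\ref{lemma:S2kp3p hard}.
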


In the case of the hard edge, we found that $S_2^{(1)}$ made important contributions to the asymptotic formula for large $n$ (see Lemma \ref{lemma:S2p1p hard}). However, in the semi-hard regime, $S_{2}^{(1)}$ is small as the next lemma shows.

\begin{lemma}\label{lemma:S2p1p semi-hard}
$M'$ can be chosen sufficiently large such that the following holds. For any $x_{1},\dots,x_{m} \in \mathbb{R}$, there exists $\delta > 0$ such that
\begin{align*}
& S_{2}^{(1)} = \bigO\big( n^{-100} \big),
\end{align*}
as $n \to +\infty$ uniformly for $u_{1} \in D_\delta(x_1),\dots,u_{m} \in D_\delta(x_m)$.
\end{lemma}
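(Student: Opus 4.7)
The plan is to adapt the estimate used in Lemma \ref{lemma:S3 asymp semi-hard} and choose $M'$ sufficiently large (depending on $\mathfrak{s}_m$) so that every summand in $S_2^{(1)}$ is super-polynomially small. For $j$ with $\lambda_j \in I_1 = [1-\epsilon, 1 - M/\sqrt n)$, both $1-\lambda_j$ and $1-\lambda_{j,\ell}$ are positive; in fact $\lambda_{j,\ell} = \lambda_j\big(1 - \sqrt{2}\mathfrak{s}_\ell/(\rho^b\sqrt n)\big) < \lambda_j \leq 1 - M/\sqrt n$ since $\mathfrak{s}_\ell>0$, so both quantities are bounded below by $M/\sqrt n$. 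Consequently $a_j(1-\lambda_j)^2 \geq M^2$ and $a_j(1-\lambda_{j,\ell})^2 \geq M^2$, so Lemma \ref{lemma: asymp of gamma for lambda bounded away from 1}(ii) (applied exactly as in the derivation of \eqref{Xjexpression}--\eqref{XjYjl}) gives
\begin{equation*}
\frac{\gamma(a_j, a_j\lambda_{j,\ell})}{\gamma(a_j, a_j\lambda_j)} = \frac{\lambda_j - 1}{\lambda_{j,\ell} - 1}\, e^{-\frac{a_j}{2}(\eta_{j,\ell}^2 - \eta_j^2)}\bigl(1 + \bigO(M^{-2})\bigr),
\end{equation*}
uniformly in $\ell$ and in $j$ in the indicated range. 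The prefactor satisfies $|\lambda_j - 1|/|\lambda_{j,\ell} - 1| \le 1$, because $\lambda_{j,\ell} < \lambda_j < 1$ forces $|\lambda_{j,\ell} - 1| = |\lambda_j - 1| + \lambda_j\sqrt{2}\mathfrak{s}_\ell/(\rho^b\sqrt n) > |\lambda_j - 1|$.

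The key step is to bound the exponent from below. Using $\eta(\lambda)^2/2 = \lambda - 1 - \ln\lambda$, a direct expansion in $c_\ell := \sqrt 2\mathfrak{s}_\ell/\rho^b$ (analogous to \eqref{ajetasemihard}) yields
\begin{equation*}
\tfrac{a_j}{2}(\eta_{j,\ell}^2 - \eta_j^2) = \frac{a_j c_\ell}{\sqrt n}(1-\lambda_j) + \frac{a_j c_\ell^2}{2n} + \bigO\bigl(a_j n^{-3/2}\bigr).
\end{equation*}
Since $a_j\lambda_j = n\rho^{2b}$ and $1-\lambda_j \ge M/\sqrt n$, the leading term is $\ge \sqrt 2\mathfrak{s}_\ell \rho^b M\,(1+o(1))$, while the remaining terms are $\bigO(1)$ uniformly in $j$. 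Minimising over $\ell$ (attained at $\ell = m$, the smallest $\mathfrak{s}_\ell$), this gives
\begin{equation*}
e^{-\frac{a_j}{2}(\eta_{j,\ell}^2 - \eta_j^2)} = \bigO\bigl(n^{-\sqrt 2\mathfrak{s}_m \rho^b M'}\bigr),
\end{equation*}
uniformly in $j$ with $\lambda_j \in I_1$ and in $\ell \in \{1,\dots,m\}$, where I used $M = M'\ln n$.

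Combining these two bounds, each summand $\sum_\ell \omega_\ell \gamma(a_j, a_j\lambda_{j,\ell})/\gamma(a_j, a_j\lambda_j)$ appearing inside the logarithm in $S_2^{(1)}$ is of order $n^{-\sqrt 2\mathfrak{s}_m \rho^b M'}$, and hence tends to $0$ as soon as $M'$ is large enough. Using $|\ln(1+x)| \leq 2|x|$ for such $x$ and summing over at most $n$ values of $j$,
\begin{equation*}
|S_2^{(1)}| = \bigO\bigl(n\cdot n^{-\sqrt 2\mathfrak{s}_m \rho^b M'}\bigr) = \bigO\bigl(n^{1-\sqrt 2\mathfrak{s}_m \rho^b M'}\bigr).
\end{equation*}
Choosing $M'$ with $\sqrt 2\mathfrak{s}_m \rho^b M' \geq 101$ gives the desired $\bigO(n^{-100})$. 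The one technical point that requires care is the uniformity of the remainder from Lemma \ref{lemma: asymp of gamma for lambda bounded away from 1}(ii) across the full range $\lambda_j \in I_1$, including its left endpoint close to $1-M/\sqrt n$; this is precisely controlled by the lower bounds $n(1-\lambda_j)^2, n(1-\lambda_{j,\ell})^2 \ge M^2 \to \infty$, which make all subleading corrections in the uniform $\gamma$-expansion genuinely negligible and allow the exponential factor to dominate.
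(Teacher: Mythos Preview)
Your proof is correct and follows essentially the same approach as the paper's: both arguments apply Lemma~\ref{lemma: asymp of gamma for lambda bounded away from 1}(ii) to reduce to the exponential factor $e^{-\frac{a_j}{2}(\eta_{j,\ell}^2-\eta_j^2)}$, expand this exponent (your expansion is equivalent to the paper's \eqref{ajetasemihard}), and use the lower bound $1-\lambda_j\ge M/\sqrt n$ together with $M=M'\ln n$ to obtain decay of order $n^{-\sqrt{2}\mathfrak{s}_m\rho^b M'}$, which is made $\bigO(n^{-100})$ by choosing $M'$ large. Your treatment is slightly more explicit about the prefactor $\tfrac{\lambda_j-1}{\lambda_{j,\ell}-1}$ and the $\bigO(M^{-2})$ remainder, but the argument is the same.
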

\begin{proof}
Since $\lambda_j \in [1-\epsilon, 1 - \frac{M}{\sqrt{n}})$ for $g_+ + 1 \leq j \leq j_+$ and $\lambda_{j,\ell} = \lambda_j(1-\frac{\sqrt{2} \mathfrak{s}_{\ell}}{\rho^b \sqrt{n}})$, we have $\eta_j, \eta_{j,\ell} \leq -c M/\sqrt{n}$ for some $c > 0$, and so Lemma \ref{lemma: asymp of gamma for lambda bounded away from 1} $(ii)$ yields
\begin{align*}
S_{2}^{(1)} & = \sum_{j= g_{+}+1}^{j_{+}} \ln \bigg( 1+ \frac{\sum_{\ell=1}^{m} \omega_{\ell} \gamma\big(a_j, a_j \lambda_{j, \ell}\big) }{ \gamma\big(a_j, a_j \lambda_j \big)} \bigg)
	\\
&= \sum_{j= g_{+}+1}^{j_{+}} \ln \bigg( 1+\sum_{\ell=1}^{m} \omega_{\ell}\frac{e^{-\frac{a_{j}\eta_{j,\ell}^{2}}{2}}( \frac{-1}{\lambda_{j,\ell}-1}\frac{1}{\sqrt{a_{j}}}
+ \bigO((a_j M^2/n)^{-\frac{3}{2}}) )}{e^{-\frac{a_{j}\eta_{j}^{2}}{2}}( \frac{-1}{\lambda_{j}-1}\frac{1}{\sqrt{a_{j}}}+ \bigO((a_j M^2/n)^{-\frac{3}{2}}) )} \bigg)
	\\
& = \sum_{j= g_{+}+1}^{j_{+}} \ln \bigg( 1+\sum_{\ell=1}^{m} \omega_{\ell} \bigO(e^{\frac{a_j}{2}(\eta_j^2 - \eta_{j,\ell}^2)}) \bigg)
 = \sum_{j= g_{+}+1}^{j_{+}} \ln \bigg( 1+\sum_{\ell=1}^{m} \omega_{\ell} \bigO(e^{-\frac{\sqrt{2} \mathfrak{s}_{\ell}(j/n-b\rho^{2b})\sqrt{n}}{b\rho^{b}}}) \bigg),
\end{align*}
where we have used \eqref{ajetasemihard} in the last step. Since $M=M'\ln n$ and $\mathfrak{s}_{\ell}>0$, the claim follows from the fact that $j/n-b\rho^{2b} \geq b\rho^{2b}\frac{M+\bigO(1)}{\sqrt{n}}$ as $n \to + \infty$ for $j \in \{g_{+}+1,\dots,j_{+}\}$.
\end{proof}

For $k \in \{1,\dots,m\}$ and $j \in \{j: \lambda_{j} \in I_{2}\}=\{g_{-},\dots,g_{+}\}$, we define $M_{j,k} := \sqrt{n}(\lambda_{j,k}-1)$ and $M_{j} := \sqrt{n}(\lambda_{j}-1)$.

\begin{lemma}\label{lemma:S2kp2p semi-hard}
For any $x_{1},\dots,x_{m} \in \mathbb{R}$, there exists $\delta > 0$ such that
\begin{align*}
S_{2}^{(2)} = &\; E_{2}^{(M)} \sqrt{n} + E_{3}^{(M)} + \frac{E_{4}^{(M)}}{\sqrt{n}} + \bigO\bigg( \frac{M^{4}}{n} \bigg), \\
E_{2}^{(M)} = &\;  \sqrt{2}b\rho^{b} \int_{-\frac{M\rho^{b}}{\sqrt{2}}}^{\frac{M\rho^{b}}{\sqrt{2}}}h_{0}(y)dy, \\
 E_{3}^{(M)} = &\; b \int_{-\frac{M\rho^{b}}{\sqrt{2}}}^{\frac{M\rho^{b}}{\sqrt{2}}} \big( 4yh_{0}(y)+\sqrt{2} h_{1}(y) \big)dy+ \bigg( \frac{1}{2}-\theta_{-}^{(n,M)}\bigg) h_{0}\Big( -\frac{M\rho^{b}}{\sqrt{2}} \Big) + \bigg( \frac{1}{2}-\theta_{+}^{(n,M)} \bigg)h_{0}\Big( \frac{M\rho^{b}}{\sqrt{2}} \Big), \\
 E_{4}^{(M)} = &\; b\rho^{-b} \int_{-\frac{M\rho^{b}}{\sqrt{2}}}^{\frac{M\rho^{b}}{\sqrt{2}}} \big( 6\sqrt{2} y^{2}h_{0}(y) + 4yh_{1}(y)+\sqrt{2}h_{2}(y) \big)dy - \bigg( \frac{1}{12} + \frac{\theta_{-}^{(n,M)}(\theta_{-}^{(n,M)}-1)}{2} \bigg) \frac{h_{0}'(-\frac{M\rho^{b}}{\sqrt{2}})}{\sqrt{2} b \rho^{b}} \\
& + \bigg( \frac{1}{12} + \frac{\theta_{+}^{(n,M)}(\theta_{+}^{(n,M)}-1)}{2} \bigg) \frac{h_{0}'(\frac{M\rho^{b}}{\sqrt{2}})}{\sqrt{2} b \rho^{b}} + \bigg( \frac{1}{2}-\theta_{-}^{(n,M)} \bigg)\rho^{-b}h_{1}\Big( -\frac{M\rho^{b}}{\sqrt{2}} \Big)
	\\
& + \bigg( \frac{1}{2}-\theta_{+}^{(n,M)} \bigg)\rho^{-b}h_{1}\Big( \frac{M\rho^{b}}{\sqrt{2}} \Big)
\end{align*}
as $n \to +\infty$ uniformly for $u_{1} \in D_\delta(x_1),\dots,u_{m} \in D_\delta(x_m)$, where $h_{0}$, $h_{1}$, $h_{2}$ are as in the statement of Theorem \ref{thm:main thm semi-hard}.
\end{lemma}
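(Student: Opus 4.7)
The approach mirrors the proof of Lemma \ref{lemma:S2kp2p hard}, adapted to the semi-hard scaling in which $\lambda_{j,\ell} = \lambda_j\bigl(1 - \tfrac{\sqrt{2}\mathfrak{s}_\ell}{\rho^b\sqrt{n}}\bigr)$. The key structural difference compared with the hard-edge case is that the shift between the $M$-variables is now of order $1$ instead of $1/\sqrt{n}$: using $\lambda_j = 1 + M_j/\sqrt{n}$, one finds
\begin{align*}
M_{j,\ell} = M_j - \tfrac{\sqrt{2}\mathfrak{s}_\ell}{\rho^b} + \bigO\bigl(M/\sqrt{n}\bigr), \qquad j \in \{g_-, \dots, g_+\}.
\end{align*}
Consequently the erfc-ratios in the final expansion involve $\mathrm{erfc}(y+\mathfrak{s}_\ell)/\mathrm{erfc}(y)$ with an $\bigO(1)$ separation in their arguments, rather than a fully merging argument as in the hard edge regime; this is exactly what produces the functions $g_0, g_1, g_2$ appearing in the statement of Theorem \ref{thm:main thm semi-hard}.

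Concretely, one first applies \eqref{asymp prelim of S2kpvp hard} together with Lemma \ref{lemma: uniform} to write
\begin{align*}
S_2^{(2)} = \sum_{j=g_-}^{g_+}\ln\bigg(1+\sum_{\ell=1}^m\omega_\ell\, \frac{\tfrac{1}{2}\mathrm{erfc}(-\eta_{j,\ell}\sqrt{a_j/2}) - R_{a_j}(\eta_{j,\ell})}{\tfrac{1}{2}\mathrm{erfc}(-\eta_j\sqrt{a_j/2}) - R_{a_j}(\eta_j)}\bigg),
\end{align*}
and then sets $y := -\rho^b M_j/\sqrt{2}\in[-M\rho^b/\sqrt{2},\,M\rho^b/\sqrt{2}]$. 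A Taylor expansion of $\eta_{j,\ell}$ and of $\sqrt{a_j/2}$ analogous to \eqref{asymp etaj and etajsqrtajover2 1 hard}--\eqref{asymp etaj and etajsqrtajover2 2 hard}, but replacing the $t_\ell$-corrections by $\mathfrak{s}_\ell$-corrections arising from the displacement of $M_{j,\ell}$ above, yields
\begin{align*}
-\eta_{j,\ell}\sqrt{a_j/2} = y + \mathfrak{s}_\ell + \frac{P_1(y,\mathfrak{s}_\ell)}{\sqrt{n}} + \frac{P_2(y,\mathfrak{s}_\ell)}{n} + \bigO\bigg(\frac{1+|M_j|^{K}}{n^{3/2}}\bigg)
\end{align*}
for explicit polynomials $P_1, P_2$ (and similarly for $-\eta_j\sqrt{a_j/2}$, obtained by setting $\mathfrak{s}_\ell = 0$). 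Combining with the expansion \eqref{asymp of Ra} of $R_{a_j}$, expanding the erfc-functions about $y+\mathfrak{s}_\ell$ and $y$, dividing the numerator by the denominator as a geometric series in $1/\sqrt{n}$, and taking the logarithm gives
\begin{align*}
\ln\bigg(1+\sum_{\ell=1}^m \omega_\ell \cdot\cdots\bigg) = h_0(y) + \frac{h_1(y)}{\sqrt{n}} + \frac{h_2(y)}{n} + \bigO\bigg(\frac{1+|M_j|^{K}}{n^{3/2}}\bigg),
\end{align*}
where $h_0, h_1, h_2$ are precisely the functions in Theorem \ref{thm:main thm semi-hard}.

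The final step applies Lemma \ref{lemma:Riemann sum} separately to each of $\sum h_0(M_j)$, $\tfrac{1}{\sqrt{n}}\sum h_1(M_j)$, and $\tfrac{1}{n}\sum h_2(M_j)$, producing the leading Riemann integrals $b\rho^{2b}\sqrt{n}\int_{-M}^M h_i(t)\,dt$ together with the boundary correction terms involving $\theta_\pm^{(n,M)}$. A linear change of variables $t\mapsto y = -\rho^b t/\sqrt{2}$ (with Jacobian $dt = -(\sqrt{2}/\rho^b)\,dy$ and limits $\pm M \leftrightarrow \mp M\rho^b/\sqrt{2}$) converts these into the stated integrals over $[-M\rho^b/\sqrt{2},\,M\rho^b/\sqrt{2}]$ and yields the explicit expressions for $E_2^{(M)}, E_3^{(M)}, E_4^{(M)}$. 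The pointwise error $\bigO((1+|M_j|^{K})/n^{3/2})$, summed over the $\bigO(M\sqrt{n})$ terms $j\in\{g_-,\dots,g_+\}$, gives a total error $\bigO(M^{K+1}/n)$; with $M = M'\ln n$ this matches the claimed $\bigO(M^4/n) = \bigO((\ln n)^4/n)$, provided the polynomial degree in $M_j$ is kept low enough.

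The main obstacle is bookkeeping. Keeping the error at $\bigO(M^4/n)$ requires the same kind of cancellations observed in Lemma \ref{lemma:S2kp2p hard} (where $g_3(x)$ was $\bigO(x^2)$ rather than $\bigO(x^4)$ as $x\to -\infty$): one must verify that the polynomial growth of the pointwise error in $M_j$ is not excessive, both because of the $\ln n$-scale cutoff $M$ and because $\mathrm{erfc}(-\cdot)$ grows exponentially only on one side. A secondary challenge is the algebraic verification that the coefficients of $1/\sqrt{n}$ and $1/n$ produced by the expansion above indeed reduce, after simplification, to the explicit forms of $h_1$ and $h_2$ in Theorem \ref{thm:main thm semi-hard}; this is mechanical but tedious, and relies on the explicit expressions of the polynomials $\varphi_k$ appearing in the uniform asymptotics of $R_{a_j}$.
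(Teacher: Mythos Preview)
Your outline matches the paper's proof almost step for step: the same uniform expansion via Lemma \ref{lemma: uniform}, the same Taylor expansion of $\eta_{j,\ell}$ and $-\eta_{j,\ell}\sqrt{a_j/2}$, the identification of $g_0,g_1,g_2$ and hence $h_0,h_1,h_2$, and the final application of Lemma \ref{lemma:Riemann sum} followed by the change of variables $y=-\rho^b t/\sqrt{2}$.

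The one place where you diverge from the paper is the error analysis, and here you are making your life harder than necessary. You bound the pointwise remainder by $\bigO\bigl((1+|M_j|^K)/n^{3/2}\bigr)$ and then worry about keeping $K\le 3$ via cancellations analogous to those in Lemma \ref{lemma:S2kp2p hard}. In the semi-hard regime this concern disappears: because $\mathfrak{s}_\ell>0$ for every $\ell$, each correction term in the expansion of the ratio carries a factor of the form $e^{-(y+\mathfrak{s}_\ell)^2}/\mathrm{erfc}(y)$ or $e^{-y^2}\mathrm{erfc}(y+\mathfrak{s}_\ell)/\mathrm{erfc}(y)^2$, and both of these decay like $e^{-c|y|}$ as $y\to\pm\infty$. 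The paper simply records the remainder as $\bigO\bigl(e^{-c|M_j|}/n^{3/2}\bigr)$, so that summing over $j\in\{g_-,\dots,g_+\}$ gives $\bigO(n^{-1})$ with no polynomial bookkeeping and no cancellations required. In other words, the ``main obstacle'' you flag is specific to the hard-edge lemma and does not arise here; recognizing this is the only substantive simplification you are missing.
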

\begin{proof}
Using \eqref{asymp prelim of S2kpvp hard} and Lemma \ref{lemma: uniform}, we obtain
\begin{align}\label{lol1 semi-hard}
& S_{2}^{(2)} = \sum_{j:\lambda_{j}\in I_{2}} \ln \bigg( 1+\sum_{\ell=1}^{m} \omega_{\ell} \frac{ \frac{1}{2}\mathrm{erfc}\Big(-\eta_{j,\ell} \sqrt{a_{j}/2}\Big) - R_{a_{j}}(\eta_{j,\ell}) }{ \frac{1}{2}\mathrm{erfc}\big(-\eta_{j} \sqrt{a_{j}/2}\big) - R_{a_{j}}(\eta_{j}) } \bigg).
\end{align}
For $j \in \{j:\lambda_{j}\in I_{2}\}$, we have $1-\frac{M}{\sqrt{n}} \leq \lambda_{j} = \frac{bn\rho^{2b}}{j+\alpha} \leq 1+\frac{M}{\sqrt{n}}$, $-M \leq M_{j} \leq M$, and
\begin{align*}
M_{j,k} = M_{j} - \frac{\sqrt{2}\,\mathfrak{s}_{k}}{\rho^{b}} - \frac{\sqrt{2}\, \mathfrak{s}_{k} M_{j}}{\rho^{b}\sqrt{n}}, \qquad k=1,\dots,m.
\end{align*}
Furthermore, as $n \to + \infty$ we have
\begin{align}
\eta_{j,\ell} & = \frac{M_{j}-\sqrt{2} \, \mathfrak{s}_{\ell} \rho^{-b}}{\sqrt{n}} - \frac{M_{j}^{2} + \sqrt{2}M_{j}\mathfrak{s}_{\ell} \rho^{-b} + 2\mathfrak{s}_{\ell}^{2}\rho^{-2b}}{3n} \nonumber \\
& + \frac{7M_{j}^{3} + 3\sqrt{2} M_{j}^{2} \mathfrak{s}_{\ell} \rho^{-b} -6M_{j} \mathfrak{s}_{\ell}^{2}\rho^{-2b} - 14\sqrt{2}\mathfrak{s}_{\ell}^{3}\rho^{-3b}}{36 n^{3/2}} + \bigO\bigg(\frac{1+M_{j}^{4}}{n^{2}}\bigg),
	\label{asymp etaj and etajsqrtajover2 1 semi-hard} \\
-\eta_{j,\ell} \sqrt{a_{j}/2} & = - \frac{M_{j}\rho^{b}}{\sqrt{2}} + \mathfrak{s}_{\ell} + \frac{5\sqrt{2}\, M_{j}^{2}\rho^{b}-2M_{j}\mathfrak{s}_{\ell}+4\sqrt{2} \mathfrak{s}_{\ell}^{2}\rho^{-b}}{12 \sqrt{n}} \nonumber \\
& - \frac{53\sqrt{2}M_{j}^{3}\rho^{b}-18M_{j}^{2}\mathfrak{s}_{\ell}+12\sqrt{2}M_{j}\mathfrak{s}_{\ell}^{2}\rho^{-b} - 56 \mathfrak{s}_{\ell}^{3}\rho^{-2b}}{144 n} + \bigO\bigg( \frac{1+M_{j}^{4}}{n^{3/2}} \bigg) \label{asymp etaj and etajsqrtajover2 2 semi-hard}
\end{align}
uniformly for $j\in \{j:\lambda_{j}\in I_{2}\}$. Hence, after a long computation using \eqref{asymp of Ra}, we obtain
\begin{align}
1+\sum_{\ell=1}^{m} \omega_{\ell} \frac{ \frac{1}{2}\mathrm{erfc}\Big(-\eta_{j,\ell} \sqrt{a_{j}/2}\Big) - R_{a_{j}}(\eta_{j,\ell}) }{ \frac{1}{2}\mathrm{erfc}\big(-\eta_{j} \sqrt{a_{j}/2}\big) - R_{a_{j}}(\eta_{j}) }  & = g_{0}(-\tfrac{\rho^{b}M_{j}}{\sqrt{2}}) + \frac{g_{1}(-\frac{\rho^{b}M_{j}}{\sqrt{2}})}{\rho^{b}\sqrt{n}} + \frac{g_{2}(-\tfrac{\rho^{b}M_{j}}{\sqrt{2}})}{\rho^{2b}n} + \bigO\Big(\frac{e^{-c|M_{j}|}}{n^{3/2}}\Big), \label{asymp of S2kp2p in proof semi-hard}
\end{align}
as $n \to + \infty$, where $g_{0}$, $g_{1}$ and $g_{2}$ are as in the statement of Theorem \ref{thm:main thm semi-hard}. For the above error term, we have used that $s_{\ell}>0$, $\ell \in \{1,\dots,m\}$. Thus
\begin{align*}
S_{2}^{(2)} & = \sum_{j=g_{-}}^{g_{+}} \ln \bigg( 1+\sum_{\ell=1}^{m} \omega_{\ell} \frac{ \frac{1}{2}\mathrm{erfc}\Big(-\eta_{j,\ell} \sqrt{a_{j}/2}\Big) - R_{a_{j}}(\eta_{j,\ell}) }{ \frac{1}{2}\mathrm{erfc}\big(-\eta_{j} \sqrt{a_{j}/2}\big) - R_{a_{j}}(\eta_{j}) } \bigg) \\
&  = \sum_{j=g_{-}}^{g_{+}} \bigg\{ h_{0}(-\tfrac{\rho^{b}M_{j}}{\sqrt{2}}) + \frac{h_{1}(-\frac{\rho^{b}M_{j}}{\sqrt{2}})}{\rho^{b}\sqrt{n}} + \frac{h_{2}(-\tfrac{\rho^{b}M_{j}}{\sqrt{2}})}{\rho^{2b}n} + \bigO\Big(\frac{e^{-c|M_{j}|}}{n^{3/2}}\Big) \bigg\}, \qquad \mbox{as } n \to + \infty.
\end{align*}
After a computation using Lemma \ref{lemma:Riemann sum}, a change of variables and the fact that $g_{1}(y),g_{2}(y)=\bigO(e^{-c|y|})$ as $y \to \pm \infty$, we find the claim.
\end{proof}
\begin{lemma}\label{lemma: asymp of S2k final semi-hard}
The constant $M'$ can be chosen sufficiently large such that the following holds. For any $x_{1},\dots,x_{m} \in \mathbb{R}$, there exists $\delta > 0$ such that
\begin{align*}
& S_{2} =  - j_{-} \ln  \Omega + C_{1}n + C_{2} \sqrt{n} + C_{3} + \ln \Omega + \frac{C_{4}}{\sqrt{n}}  + \bigO \bigg( \frac{M^{4}}{n} \bigg),
\end{align*}
as $n \to +\infty$ uniformly for $u_{1} \in D_\delta(x_1),\dots,u_{m} \in D_\delta(x_m)$, where $C_{1},\dots,C_{4}$ are as in the statement of Theorem \ref{thm:main thm semi-hard}.
\end{lemma}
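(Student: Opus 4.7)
The plan is to add up the asymptotic expansions of $S_2^{(1)}, S_2^{(2)}, S_2^{(3)}$ supplied by Lemmas \ref{lemma:S2kp3p semi-hard}, \ref{lemma:S2p1p semi-hard}, \ref{lemma:S2kp2p semi-hard}, and then reorganise the $M$-dependent pieces so that the finite integrals over $[-M\rho^b/\sqrt 2,M\rho^b/\sqrt 2]$ get promoted to integrals over $\mathbb R$, matching the constants $C_1,\dots,C_4$ of Theorem \ref{thm:main thm semi-hard}. First I would note that Lemma \ref{lemma:S2p1p semi-hard} contributes only $\bigO(n^{-100})$, and that Lemma \ref{lemma:S2kp3p semi-hard} contributes the polynomial-in-$M$ prefactor of $\ln\Omega$
\[
(b\rho^{2b}n-j_{-}-bM\rho^{2b}\sqrt n+bM^{2}\rho^{2b}-\alpha+\theta_{-}^{(n,M)}-bM^{3}\rho^{2b}n^{-1/2})\ln\Omega,
\]
plus an error $\bigO(M^{4}/n)$. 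Recognising $\ln\Omega=\sum_{j=1}^{m}u_{j}$, the $b\rho^{2b}n\ln\Omega$ term is exactly $C_{1}n$, while the $-j_{-}\ln\Omega$ term matches the $-j_{-}\ln\Omega$ in the statement. The remaining terms of orders $\sqrt n$, $1$ and $1/\sqrt n$ in $M$ have to be absorbed into $E_{2}^{(M)}\sqrt n$, $E_{3}^{(M)}$, $E_{4}^{(M)}/\sqrt n$ coming from Lemma \ref{lemma:S2kp2p semi-hard}.

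The key algebraic move is to rewrite every occurrence of $h_{0}$ inside the finite integrals as $h_{0}(y)=\big(h_{0}(y)-\chi_{(-\infty,0)}(y)\ln\Omega\big)+\chi_{(-\infty,0)}(y)\ln\Omega$. Using the elementary identities
\[
\int_{-M\rho^{b}/\sqrt 2}^{M\rho^{b}/\sqrt 2}\!\!\chi_{(-\infty,0)}(y)\,dy=\tfrac{M\rho^{b}}{\sqrt 2},\qquad \int_{-M\rho^{b}/\sqrt 2}^{M\rho^{b}/\sqrt 2}\!\!4y\chi_{(-\infty,0)}(y)\,dy=-M^{2}\rho^{2b},
\]
\[
\int_{-M\rho^{b}/\sqrt 2}^{M\rho^{b}/\sqrt 2}\!\!6\sqrt 2\,y^{2}\chi_{(-\infty,0)}(y)\,dy=M^{3}\rho^{3b},
\]
the $M$-polynomial contributions $-bM\rho^{2b}\sqrt n\,\ln\Omega$, $bM^{2}\rho^{2b}\ln\Omega$ and $-bM^{3}\rho^{2b}n^{-1/2}\ln\Omega$ coming from $S_{2}^{(3)}$ combine exactly with the corresponding $\chi_{(-\infty,0)}\ln\Omega$ pieces inside $E_{2}^{(M)}$, $E_{3}^{(M)}$, $E_{4}^{(M)}$, producing the regularised integrands appearing in the definitions of $C_{2},C_{3},C_{4}$. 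Simultaneously, the additive $(-\alpha+\theta_{-}^{(n,M)})\ln\Omega$ from $S_{2}^{(3)}$ combines with the boundary term $(\tfrac12-\theta_{-}^{(n,M)})h_{0}(-M\rho^{b}/\sqrt 2)$ from $E_{3}^{(M)}$: since $h_{0}(-M\rho^{b}/\sqrt 2)=\ln\Omega+\bigO(e^{-cM})$ by the definition of $g_{0}$ and $\mathrm{erfc}(y+\mathfrak s_{\ell})/\mathrm{erfc}(y)\to 1$ as $y\to-\infty$, the $\theta_{-}^{(n,M)}$ cancels and what remains is $(\tfrac12-\alpha)\ln\Omega$, which is precisely $C_{3}+\ln\Omega-b\int(\cdots)$.

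Next, I would argue that extending the integrals from $[-M\rho^{b}/\sqrt 2,M\rho^{b}/\sqrt 2]$ to $\mathbb R$ is cheap. The normalisations $g_{1}/g_{0}$ and $g_{2}/g_{0}$ decay as $\bigO(e^{-c|y|})$ as $y\to\pm\infty$ (because $\mathfrak s_{\ell}>0$ for all $\ell$ and $e^{-y^{2}}/\mathrm{erfc}(y)$ grows only polynomially on one side), while $h_{0}(y)-\chi_{(-\infty,0)}(y)\ln\Omega$ and its appearance weighted by $y$ or $y^{2}$ decay at Gaussian speed on both sides. Hence the tail errors are $\bigO(n^{-cM'})$. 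Likewise the boundary contributions $h_{0}'(\pm M\rho^{b}/\sqrt 2)$ and $h_{1}(\pm M\rho^{b}/\sqrt 2)$ that survive in $E_{4}^{(M)}$ decay as $\bigO(e^{-cM})=\bigO(n^{-cM'})$. Choosing $M'$ large enough that $n^{-cM'}\ll M^{4}/n=(M'\ln n)^{4}/n$ then absorbs every such error into the stated $\bigO(M^{4}/n)$.

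The main technical obstacle is the careful bookkeeping in the $1/\sqrt n$ coefficient, where three distinct $M$-polynomial sources need to match up: the $-bM^{3}\rho^{2b}\ln\Omega$ from $S_{2}^{(3)}$, the weight-$y^{2}$ piece of $E_{4}^{(M)}$, and the boundary derivatives $h_{0}'(\pm M\rho^{b}/\sqrt 2)$ multiplied by the Bernoulli-type factors $\tfrac{1}{12}+\tfrac{\theta^{(n,M)}_{\pm}(\theta^{(n,M)}_{\pm}-1)}{2}$. One must verify that, up to an error $\bigO(n^{-cM'})$, these $\theta$-dependent boundary terms vanish (because $h_{0}'(\pm M\rho^{b}/\sqrt 2)$ is exponentially small) and that the $M^{3}\rho^{3b}\ln\Omega$ produced by extracting $\chi_{(-\infty,0)}\ln\Omega$ from the $y^{2}h_{0}$ integral is exactly cancelled by $-bM^{3}\rho^{2b}n^{-1/2}\ln\Omega$ (times $\sqrt n$). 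Granting these cancellations, assembling everything gives the stated identity with error $\bigO(M^{4}/n)$.
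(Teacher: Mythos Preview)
Your proposal is correct and follows essentially the same route as the paper: combine Lemmas \ref{lemma:S2kp3p semi-hard}, \ref{lemma:S2p1p semi-hard}, \ref{lemma:S2kp2p semi-hard}, cancel the $M$-polynomial multiples of $\ln\Omega$ against the $\chi_{(-\infty,0)}\ln\Omega$ pieces extracted from the finite integrals, and extend the resulting regularised integrals to $\mathbb{R}$ with exponentially small error in $M=M'\ln n$. One small inaccuracy: $h_{0}(y)-\chi_{(-\infty,0)}(y)\ln\Omega$ does \emph{not} decay at Gaussian speed as $y\to+\infty$---since $\mathrm{erfc}(y+\mathfrak{s}_{\ell})/\mathrm{erfc}(y)\sim e^{-2\mathfrak{s}_{\ell}y-\mathfrak{s}_{\ell}^{2}}$ there, the decay is only exponential of rate $2\mathfrak{s}_{m}$---but this still yields $\bigO(e^{-cM})=\bigO(n^{-cM'})$ and so does not affect the argument.
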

\begin{proof}
By combining Lemmas \ref{lemma:S2kp3p semi-hard}, \ref{lemma:S2p1p semi-hard} and \ref{lemma:S2kp2p semi-hard}, we obtain
\begin{align*}
& S_{2} = - j_{-} \ln  \Omega + C_{1} n + C_{2}^{(M)}\sqrt{n} + C_{3}^{(M)} + \frac{C_{4}^{(M)}}{\sqrt{n}}  + \bigO\bigg( \frac{M^{4}}{n} \bigg),
\end{align*}
as $n \to +\infty$ uniformly for $u_{1} \in D_\delta(x_1),\dots,u_{m} \in D_\delta(x_m)$, where $C_{1}$ is as in the statement, and
\begin{align*}
& C_{2}^{(M)} = - bM \rho^{2b} \ln \Omega +E_{2}^{(M)}, \\
& C_{3}^{(M)} = \big( bM^{2}\rho^{2b} - \alpha + \theta_{-}^{(n,M)} \big) \ln \Omega+E_{3}^{(M)}, \\
& C_{4}^{(M)} = -bM^{3}\rho^{2b} \ln \Omega + E_{4}^{(M)}.
\end{align*}
A direct analysis shows that $M'$ can be chosen sufficiently large such that
\begin{align*}
& C_{2}^{(M)} = C_{2} + \bigO(n^{-100}), \qquad C_{3}^{(M)} = C_{3} + \ln \Omega + \bigO(n^{-100}), \qquad C_{4}^{(M)} = C_{4} + \bigO(n^{-100}),
\end{align*}
and the claim follows.
\end{proof}
\begin{proof}[End of the proof of Theorem \ref{thm:main thm semi-hard}]
Let $M' > 0$ be sufficiently large such that Lemmas \ref{lemma: S2km1 semi-hard} and \ref{lemma: asymp of S2k final semi-hard} hold. Using \eqref{log Dn as a sum of sums hard} and Lemmas \ref{lemma: S0 semi-hard}, \ref{lemma: S2km1 semi-hard}, \ref{lemma:S3 asymp semi-hard} and \ref{lemma: asymp of S2k final semi-hard}, we conclude that for any $x_{1},\dots,x_{m} \in \mathbb{R}$, there exists $\delta > 0$ such that
\begin{align*}
& \ln \mathcal{E}_{n} = S_{0}+S_{1}+S_{2}+S_{3} \\
& = M' \ln \Omega + (j_{-}-M'-1) \ln \Omega - j_{-} \ln  \Omega + C_{1}n + C_{2} \sqrt{n} + C_{3} + \ln \Omega + \frac{C_{4}}{\sqrt{n}}  + \bigO(M^{4}n^{-1}) \\
& = C_{1} n + C_{2} \sqrt{n} + C_{3} + \frac{C_{4}}{\sqrt{n}}  + \bigO(M^{4}n^{-1}),
\end{align*}
as $n \to +\infty$ uniformly for $u_{1} \in D_\delta(x_1),\dots,u_{m} \in D_\delta(x_m)$. This concludes the proof of Theorem \ref{thm:main thm semi-hard}.
\end{proof}

\appendix
\section{Uniform asymptotics of the incomplete gamma function}\label{section:uniform asymp gamma}
\begin{lemma}\label{lemma:various regime of gamma}(From \cite[formula 8.11.2]{NIST}).
Let $a>0$ be fixed. As $z \to +\infty$,
\begin{align*}
\gamma(a,z) = \Gamma(a) + \bigO(e^{-\frac{z}{2}}).
\end{align*}
\end{lemma}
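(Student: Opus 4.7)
The plan is to bound the tail integral directly. By definition,
\[
\Gamma(a) - \gamma(a,z) = \int_z^{\infty} t^{a-1}e^{-t}\,dt,
\]
so the task reduces to showing that this tail is $\bigO(e^{-z/2})$ as $z \to +\infty$. First I would factor $e^{-t} = e^{-t/2} \cdot e^{-t/2}$ inside the integrand and exploit the fact that for any fixed $a>0$, the continuous function $t \mapsto t^{a-1}e^{-t/2}$ tends to $0$ as $t \to +\infty$, and is therefore bounded on $[1,+\infty)$ by some constant $K_a$ depending only on $a$.

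Then, assuming $z \geq 1$ (which is harmless since we are considering $z \to +\infty$), I would estimate
\[
\int_z^{\infty} t^{a-1}e^{-t}\,dt \;=\; \int_z^{\infty} \bigl(t^{a-1}e^{-t/2}\bigr)\,e^{-t/2}\,dt \;\leq\; K_a \int_z^{\infty} e^{-t/2}\,dt \;=\; 2 K_a\, e^{-z/2},
\]
which is exactly the claimed bound. There is no genuine obstacle here; the only minor point to mention is that when $a \in (0,1)$ the factor $t^{a-1}$ is singular at $t=0$, but this is irrelevant because we integrate over $[z,\infty)$ with $z$ large, so $K_a := \sup_{t\geq 1} t^{a-1}e^{-t/2}$ is finite for every $a>0$. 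The splitting $e^{-t}=e^{-t/2}e^{-t/2}$ is cosmetic; any factorization $e^{-t}=e^{-\theta t}e^{-(1-\theta)t}$ with $\theta\in(0,1)$ would yield $\bigO(e^{-\theta z})$, and $\theta=1/2$ recovers the formulation in the statement.
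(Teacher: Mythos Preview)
Your argument is correct. The paper itself does not supply a proof of this lemma; it is simply quoted from the NIST handbook (formula 8.11.2) as a known asymptotic of the incomplete gamma function. Your direct tail estimate via the factorization $e^{-t}=e^{-t/2}e^{-t/2}$ and the boundedness of $t^{a-1}e^{-t/2}$ on $[1,\infty)$ is the standard elementary justification and fills in exactly what the citation stands for.
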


\begin{lemma}\label{lemma: uniform}(From \cite[Section 11.2.4]{Temme}).
We have
\begin{align*}
& \frac{\gamma(a,z)}{\Gamma(a)} = \frac{1}{2}\mathrm{erfc}(-\eta \sqrt{a/2}) - R_{a}(\eta), \qquad R_{a}(\eta) = \frac{e^{-\frac{1}{2}a \eta^{2}}}{2\pi i}\int_{-\infty}^{\infty}e^{-\frac{1}{2}a u^{2}}g(u)du,
\end{align*}
where $\mathrm{erfc}$ is defined in \eqref{def of erfc},
\begin{align}\label{lol8}
& \lambda = \frac{z}{a}, \qquad \eta = (\lambda-1)\sqrt{\frac{2 (\lambda-1-\ln \lambda)}{(\lambda-1)^{2}}}, \qquad g(u) := \frac{dt}{du} \frac{1}{\lambda -t} + \frac{1}{u +i\eta},
\end{align}
with $t$ and $u$ being related by the bijection $t \mapsto u$ from $\mathcal{L} := \{\frac{\theta}{\sin \theta} e^{i\theta} : - \pi < \theta < \pi\}$ to $\R$ given by
$$u = -i(t-1) \sqrt{\frac{2(t-1-\ln t)}{(t-1)^2}}, \qquad  t \in \mathcal{L},$$
and the principal branch is used for the roots. Furthermore, as $a \to + \infty$, uniformly for $z \in [0,\infty)$,
\begin{align}\label{asymp of Ra}
& R_{a}(\eta) \sim \frac{e^{-\frac{1}{2}a \eta^{2}}}{\sqrt{2\pi a}}\sum_{j=0}^{\infty} \frac{c_{j}(\eta)}{a^{j}},
\end{align}
where all coefficients $c_{j}(\eta)$ are bounded functions of $\eta \in \mathbb{R}$ (i.e. bounded for $\lambda \in (0,+\infty)$). The first two coefficients are given by (see \cite[p. 312]{Temme})
\begin{align*}
c_{0}(\eta) = \frac{1}{\lambda-1}-\frac{1}{\eta}, \qquad c_{1}(\eta) = \frac{1}{\eta^{3}}-\frac{1}{(\lambda-1)^{3}}-\frac{1}{(\lambda-1)^{2}}-\frac{1}{12(\lambda-1)}.
\end{align*}
More generally, we have
\begin{align}\label{recursive def of the cj}
c_{j}(\eta) = \frac{1}{\eta} \frac{d}{d\eta}c_{j-1}(\eta) + \frac{\gamma_{j}}{\lambda-1}, \; j \geq 1,
\end{align}
where the $\gamma_{j}$ are the Stirling coefficients
\begin{align}\label{stirlinggammaj}
\gamma_{j} = \frac{(-1)^{j}}{2^{j} \, j!} \bigg[ \frac{d^{2j}}{dx^{2j}} \bigg( \frac{1}{2}\frac{x^{2}}{x-\ln(1+x)} \bigg)^{j+\frac{1}{2}} \bigg]_{x=0}.
\end{align}
In particular, the following hold:
\item[(i)] Let $z=\lambda a$ and let $\delta >0$ be fixed. As $a \to +\infty$, uniformly for $\lambda \geq 1+\delta$,
\begin{align*}
\gamma(a,z) = \Gamma(a)\big(1 + \bigO(e^{-\frac{a \eta^{2}}{2}})\big).
\end{align*}
\item[(ii)] Let $z=\lambda a$. As $a \to +\infty$, uniformly for $\lambda$ in compact subsets of $(0,1)$,
\begin{align*}
\gamma(a,z) = \Gamma(a)\bigO(e^{-\frac{a \eta^{2}}{2}}).
\end{align*}
\end{lemma}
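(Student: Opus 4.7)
The plan is to follow Temme's classical saddle-point approach. I would start from
\begin{align*}
\frac{\gamma(a,z)}{\Gamma(a)} = \frac{a^{a}e^{-a}}{\Gamma(a)} \int_{0}^{\lambda} e^{-a\phi(t)}dt, \qquad \phi(t) = t - 1 - \ln t,
\end{align*}
where $\lambda = z/a$ and $\phi$ has its unique saddle at $t=1$. The key move is the substitution $\tfrac{1}{2}u^{2}=\phi(t)$, with the branch chosen so that $u$ and $t-1$ have the same sign; this gives a bijection from the contour $\mathcal{L}$ onto $\R$ which linearizes $\phi$ to a pure quadratic, and the endpoint $t=\lambda$ corresponds to $u=\eta$ as in \eqref{lol8}.

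After a contour deformation, the integral becomes $\int_{-\infty}^{\infty} e^{-au^{2}/2}\frac{dt}{du}\frac{1}{\lambda-t}du$. I would then split the integrand into its singular part $-1/(u+i\eta)$ (accounting for the simple pole at $t = \lambda$) and the regular remainder $g(u)$ from \eqref{lol8}. The singular part produces the $\erfc$ term via a standard Gaussian integral evaluation, and matching the prefactor against Stirling's approximation for $\Gamma(a)$ yields the exact identity
\begin{align*}
\frac{\gamma(a,z)}{\Gamma(a)} = \tfrac{1}{2}\erfc(-\eta\sqrt{a/2}) - R_{a}(\eta).
\end{align*}

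For the asymptotic expansion \eqref{asymp of Ra}, I would Taylor-expand $g(u)$ around $u=0$ and apply Watson's lemma. Since the apparent singularity of $\frac{dt}{du}/(\lambda-t)$ at $u=-i\eta$ is removed by construction, $g$ is analytic in a neighborhood of $\R$ and every Gaussian moment $\int e^{-au^{2}/2}u^{2j}du$ can be evaluated exactly, producing coefficients $c_{j}(\eta)$ bounded on $\R$. The explicit values $c_{0},c_{1}$ follow from a direct Taylor expansion of $t(u)$, while the recursion \eqref{recursive def of the cj} in terms of the Stirling coefficients \eqref{stirlinggammaj} is obtained by integration by parts in $R_{a}(\eta)$ and matching against the classical Stirling series for $\Gamma(a)$.

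Items (i) and (ii) then follow directly from the exact identity: when $\lambda \geq 1+\delta$, $\eta$ is bounded away from $0$ and positive, so $\tfrac{1}{2}\erfc(-\eta\sqrt{a/2}) = 1 + \bigO(e^{-a\eta^{2}/2}/\sqrt{a})$ while $R_{a}(\eta)=\bigO(e^{-a\eta^{2}/2}/\sqrt{a})$; when $\lambda$ lies in a compact subset of $(0,1)$, $\eta$ is bounded away from $0$ and negative, so both $\tfrac{1}{2}\erfc(-\eta\sqrt{a/2})$ and $R_{a}(\eta)$ are $\bigO(e^{-a\eta^{2}/2})$. The main obstacle is uniformity as $\lambda \to 1$, where the saddle coalesces with the endpoint: one must verify that $t \mapsto u$ extends analytically across $t=1$ (which it does, since the radical in \eqref{lol8} has a nonzero limit there), and that each $c_{j}(\eta)$ --- despite the explicit $1/\eta$ and $1/(\lambda-1)$ terms appearing e.g.\ in $c_{0}(\eta)=1/(\lambda-1)-1/\eta$ --- is in fact entire in $\eta$ by cancellation (a Taylor expansion gives $c_{0}(\eta) \to -1/3$ as $\lambda \to 1$). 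This analyticity, combined with a uniform bound on each $c_{j}$, is precisely what turns \eqref{asymp of Ra} into a genuinely uniform expansion over $z\in[0,\infty)$.
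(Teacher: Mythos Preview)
The paper does not prove this lemma; it is cited directly from Temme's book (the heading reads ``From [Temme, Section 11.2.4]''), so there is no in-paper proof to compare against. Your sketch follows exactly the method that the cited reference develops: the quadratic substitution $\tfrac{1}{2}u^{2}=\phi(t)$ linearizing the saddle, the splitting off of the simple pole to produce the $\erfc$ term, Watson's lemma for the remainder, and the observation that the apparent singularities in $c_{j}(\eta)$ at $\eta=0$ cancel so that the expansion is uniform through $\lambda=1$. That is the content of Temme's derivation, so your plan and the paper's reference agree.

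One technical point worth tightening: your starting representation $\frac{\gamma(a,z)}{\Gamma(a)} = \frac{a^{a}e^{-a}}{\Gamma(a)} \int_{0}^{\lambda} e^{-a\phi(t)}\,dt$ does not by itself produce the factor $1/(\lambda-t)$ in the integrand after a change of variables; that factor arises from Temme's contour-integral representation for $P(a,z)$ (or equivalently from writing $\int_{0}^{\lambda}$ via a Cauchy-type integral against $1/(\lambda-t)$ along a deformed path through the saddle). The transition from a real integral over $[0,\lambda]$ to an integral over all of $\mathbb{R}$ with integrand $e^{-au^{2}/2}\,\frac{dt}{du}\,\frac{1}{\lambda-t}$ therefore needs the intermediate contour-integral step spelled out. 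This is a gap in the write-up rather than in the strategy; once you invoke the correct contour representation from Temme, the rest of your argument goes through as stated.
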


The following lemma establishes a non-recursive formula for the coefficients $c_{j}$, which is new to our knowledge.

\begin{lemma}\label{lemma:coeff non-recursive}
For $j \geq 0$, the coefficients $c_j(\eta)$ in (\ref{asymp of Ra}) can be expressed as
\begin{align}\label{cjvarphij}
c_j(\eta) = \varphi_j(\lambda) - S(\varphi_j(\lambda)), \quad \text{where} \quad \varphi_j(\lambda) := \frac{(-1)^{j+1} (2j-1)!!}{\eta^{2j+1}}
\end{align}
and where $S(\varphi_j(\lambda))$ denotes the singular part of $\varphi_j(\lambda)$ at $\lambda = 1$, i.e., $S(\varphi_j(\lambda))$ is the sum of the singular terms in the Laurent expansion of $\varphi_j(\lambda)$ at $\lambda = 1$.
\end{lemma}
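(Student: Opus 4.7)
The plan is to prove the identity $c_j = \varphi_j - S(\varphi_j)$ by introducing the companion function $\Phi_j := \varphi_j - c_j$ and showing that $\Phi_j$ coincides with the principal part of $\varphi_j$ at $\lambda = 1$. The base case is immediate: $\Phi_0 = \varphi_0 - c_0 = -\frac{1}{\eta} - (\frac{1}{\lambda-1} - \frac{1}{\eta}) = -\frac{1}{\lambda-1}$, which matches $S(\varphi_0) = S(-1/\eta) = -1/(\lambda-1)$ via the expansion $\eta = (\lambda-1) - \frac{(\lambda-1)^2}{3} + \cdots$ coming from $\eta^2 = 2(\lambda - 1 - \ln\lambda)$. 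Moreover, since $\frac{d}{d\eta}\eta^{-(2j-1)} = -(2j-1)\eta^{-2j}$, one has $\frac{1}{\eta}\frac{d\varphi_{j-1}}{d\eta} = \varphi_j$, so inserting $c_{j-1} = \varphi_{j-1} - \Phi_{j-1}$ into the Temme recursion \eqref{recursive def of the cj} yields the companion recursion
\begin{equation*}
\Phi_j = \frac{1}{\eta}\frac{d\Phi_{j-1}}{d\eta} - \frac{\gamma_j}{\lambda-1}, \qquad j \geq 1.
\end{equation*}

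I would then establish two structural facts. First, by induction on $j$, I would show that $\Phi_j$ is a Laurent polynomial in $(\lambda-1)$ containing only strictly negative powers. This rests on the chain-rule identity $\frac{1}{\eta}\frac{d}{d\eta} = \bigl(1 + \frac{1}{\lambda-1}\bigr)\frac{d}{d\lambda}$ (obtained from $\eta\,d\eta = \frac{\lambda-1}{\lambda}\,d\lambda$): applied termwise to $\Phi_{j-1} = \sum_k \alpha_k(\lambda-1)^{-k}$ it produces only terms $(\lambda-1)^{-m}$ with $m \geq 2$, and subtracting $\gamma_j/(\lambda-1)$ preserves pure singularity. Second, I would prove that $c_j(\eta)$ extends analytically, as a function of $\lambda$, to $\lambda = 1$. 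Granted both facts, $c_j = \varphi_j - \Phi_j$ is analytic at $\lambda = 1$ while $\Phi_j$ is purely singular there; the uniqueness of the Laurent decomposition then forces $\Phi_j = S(\varphi_j)$, and hence $c_j = \varphi_j - S(\varphi_j)$.

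The main obstacle is the analyticity of $c_j$ at $\lambda = 1$, and I would derive it from the integral representation in Lemma \ref{lemma: uniform}. Applying Watson's lemma to $R_a(\eta) = \frac{e^{-a\eta^2/2}}{2\pi i}\int_{-\infty}^\infty e^{-au^2/2}g(u)\,du$ and matching with \eqref{asymp of Ra} gives the identification $c_j(\eta) = \frac{(2j-1)!!}{i}[u^{2j}]g(u,\lambda)$, so it suffices to show that $g(u,\lambda) = \frac{(dt/du)(u)}{\lambda - t(u)} + \frac{1}{u + i\eta(\lambda)}$ is jointly holomorphic at $(u,\lambda) = (0,1) \in \mathbb{C}^2$. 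Each summand has a simple pole along its own complex hypersurface --- $\lambda = t(u)$ for the first, $u = -i\eta(\lambda)$ for the second --- but combining $-u^2/2 = t-1-\ln t$ with $\eta^2/2 = \lambda-1-\ln\lambda$ and the sign conventions of Lemma \ref{lemma: uniform} shows that $t = \lambda$ is equivalent to $u = -i\eta(\lambda)$, so the two hypersurfaces coincide. An explicit local expansion near this surface (using $\lambda - t(u) = -\frac{dt}{du}\bigl|_{u=-i\eta}(u+i\eta) + O\bigl((u+i\eta)^2\bigr)$) shows that the polar contributions of the two summands cancel, so $g$ extends holomorphically to a neighborhood of $(0,1)$. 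Hence $[u^{2j}]g(u,\lambda) = \frac{1}{(2j)!}\partial_u^{2j}g(0,\lambda)$ is analytic in $\lambda$ near $1$, which completes the proof.
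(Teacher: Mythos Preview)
Your proof is correct and takes a genuinely different route from the paper's. Both arguments use the recursion \eqref{recursive def of the cj} and proceed by induction, but they diverge at the crucial step. The paper never returns to the integral representation: instead, it computes $\underset{\lambda=1}{\res}\,\varphi_k = -\gamma_k$ directly from the Stirling formula \eqref{stirlinggammaj}, and this single algebraic identity closes the induction (since $\frac{1}{\eta}\partial_\eta$ applied to $S(\varphi_{k-1})$ produces only powers $(\lambda-1)^{-m}$ with $m\ge 2$, while $\frac{1}{\eta}\partial_\eta$ applied to the analytic remainder $c_{k-1}$ produces at most a simple pole, whose residue must then equal $-\gamma_k$ by comparison with $\varphi_k$). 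Your approach replaces this residue calculation by establishing the analyticity of each $c_j$ at $\lambda=1$ directly from the integral formula for $R_a(\eta)$, via Watson's lemma and the cancellation of poles in $g(u,\lambda)$ near $(0,1)$. The paper's route is shorter and purely algebraic; yours is longer but gives an independent, conceptual reason for the regularity of the $c_j$ and avoids the somewhat opaque residue identity linking $\varphi_k$ to the Stirling coefficients.
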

\begin{proof}
The formula (\ref{cjvarphij}) holds for $j = 0$. Suppose it holds for $j = k-1 \geq 0$.  Then (\ref{recursive def of the cj}) yields
$$c_k(\eta) = \frac{1}{\eta} \frac{d}{d\eta}\varphi_{k-1}(\lambda) - \frac{1}{\eta} \frac{d}{d\eta}S(\varphi_{k-1}(\lambda)) + \frac{\gamma_{k}}{\lambda-1}.$$
We have $\partial_\eta \varphi_{k-1}(\lambda) = \eta \varphi_k(\lambda)$. Hence, using also that $\partial_\eta$ commutes with $S$,
$$c_k(\eta) = \varphi_{k}(\lambda) - \frac{1}{\eta} S(\eta \varphi_{k}(\lambda)) + \frac{\gamma_{k}}{\lambda-1}.$$
On the other hand, $\varphi_k$ has a pole of order $2k+1$ at $\lambda = 1$, so in light of the identity $(2k)! = (2k-1)!! 2^k k!$ and (\ref{stirlinggammaj}), we obtain
$$\underset{\lambda = 1}{\res} \varphi_k(\lambda)
= \frac{1}{(2k)!} \lim_{\lambda \to 1} \frac{d^{2k}}{d\lambda^{2k}}((\lambda-1)^{2k+1}\varphi_k(\lambda))
= \frac{(-1)^{k+1}}{2^k k!} \lim_{\lambda \to 1} \frac{d^{2k}}{d\lambda^{2k}} \bigg(\frac{(\lambda-1)^2}{2(\lambda - 1 - \ln{\lambda})}\bigg)^{k + \frac{1}{2}}
= -\gamma_k.$$
It follows that (\ref{cjvarphij}) holds also for $j = k$, completing the proof.
\end{proof}

Note that $S(\varphi_j(\lambda))$ is a polynomial of order $2j+1$ in $(\lambda -1)^{-1}$ without constant term. The first $S(\varphi_j(\lambda))$ are given by
  \begin{align*}
& S(\varphi_0(\lambda)) = -\frac{1}{\lambda-1}, \qquad
S(\varphi_1(\lambda)) = \frac{1}{(\lambda-1)^{3}}+\frac{1}{(\lambda-1)^{2}}+\frac{1}{12(\lambda-1)},
	\\
& S(\varphi_2(\lambda)) = -\frac{3}{(\lambda -1)^5}-\frac{5}{(\lambda -1)^4} - \frac{25}{12 (\lambda
   -1)^3} - \frac{1}{12 (\lambda -1)^2} - \frac{1}{288 (\lambda -1)}.
\end{align*}

The following lemma follows from a result of Tricomi \cite{Tricomi}, see also \cite{AC2021}. However, in contrast to \cite{Tricomi, AC2021}, the coefficients appearing in Lemma \ref{lemma: asymp of gamma for lambda bounded away from 1} below are written in a non-recursive way. Here we give a short proof relying on Lemmas \ref{lemma: uniform} and \ref{lemma:coeff non-recursive}.
\begin{lemma}\label{lemma: asymp of gamma for lambda bounded away from 1}
Let $N \geq 0$ be an integer and let $\eta$ and $S(\varphi_j(\lambda))$ be as in \eqref{cjvarphij}.

\item[(i)] As $a \to +\infty$, uniformly for $\lambda \geq 1+ \frac{1}{\sqrt{a}}$,
\begin{align*}
\frac{\gamma(a,\lambda a)}{\Gamma(a)} & = 1 + \frac{e^{-\frac{a}{2}\eta^2}}{\sqrt{2\pi}} \bigg\{\sum_{j=0}^{N-1} \frac{S(\varphi_j(\lambda))}{a^{j+\frac{1}{2}}} + \bigO\bigg(\frac{1}{a^{N+\frac{1}{2}}}\bigg) +  \bigO\bigg(\frac{1}{(a \eta^2)^{N+\frac{1}{2}}}\bigg)\bigg\}.
\end{align*}

\item[(ii)] As $a \to +\infty$, uniformly for $\lambda \in [\epsilon, 1-\frac{1}{\sqrt{a}}]$ for any fixed $\epsilon > 0$,
\begin{align*}
\frac{\gamma(a,\lambda a)}{\Gamma(a)} = \frac{e^{-\frac{a}{2}\eta^2}}{\sqrt{2\pi}} \bigg\{\sum_{j=0}^{N-1} \frac{S(\varphi_j(\lambda))}{a^{j+\frac{1}{2}}} + \bigO\bigg(\frac{1}{a^{N+\frac{1}{2}}}\bigg) +  \bigO\bigg(\frac{1}{(a \eta^2)^{N+\frac{1}{2}}}\bigg)\bigg\}.
\end{align*}
\end{lemma}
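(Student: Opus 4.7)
The plan is to combine the exact representation from Lemma~\ref{lemma: uniform},
\[
\frac{\gamma(a,\lambda a)}{\Gamma(a)} = \tfrac{1}{2}\mathrm{erfc}(-\eta\sqrt{a/2}) - R_{a}(\eta),
\]
with the asymptotic expansion of $R_a$ recorded in the same lemma and the standard Poincar\'e expansion of $\mathrm{erfc}$. The decisive observation is that, by the non-recursive formula $c_j(\eta) = \varphi_j(\lambda)-S(\varphi_j(\lambda))$ of Lemma~\ref{lemma:coeff non-recursive}, the regular parts $\varphi_j(\lambda)$ appearing in the expansion of $R_a(\eta)$ will cancel against the terms produced by the Poincar\'e expansion of $\mathrm{erfc}$, leaving only the singular parts $S(\varphi_j(\lambda))$ on the right-hand side.

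First, using that each $c_j$ is a bounded function of $\eta\in\R$, the expansion of $R_a$ from Lemma~\ref{lemma: uniform} combined with Lemma~\ref{lemma:coeff non-recursive} gives
\[
R_a(\eta) = \frac{e^{-a\eta^{2}/2}}{\sqrt{2\pi}}\sum_{j=0}^{N-1}\frac{\varphi_j(\lambda)-S(\varphi_j(\lambda))}{a^{j+1/2}} + \bigO\bigg(\frac{e^{-a\eta^2/2}}{a^{N+1/2}}\bigg),
\]
uniformly in $\eta\in\R$. In case~(i), $\lambda\geq 1+1/\sqrt{a}$ gives $\eta>0$ with $\eta\sqrt{a/2}$ bounded below by a positive constant, so I would apply $\tfrac12\mathrm{erfc}(-\eta\sqrt{a/2})=1-\tfrac12\mathrm{erfc}(\eta\sqrt{a/2})$ together with the standard expansion
\[
\mathrm{erfc}(x) = \frac{e^{-x^{2}}}{\sqrt{\pi}\, x}\sum_{k=0}^{N-1}\frac{(-1)^{k}(2k-1)!!}{(2x^{2})^{k}} + \bigO\bigg(\frac{e^{-x^{2}}}{x^{2N+1}}\bigg),
\]
uniformly for $x$ bounded below, at $x=\eta\sqrt{a/2}$. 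Rewriting the erfc coefficients through the definition $\varphi_k(\lambda)=(-1)^{k+1}(2k-1)!!/\eta^{2k+1}$ produces
\[
\tfrac{1}{2}\mathrm{erfc}(-\eta\sqrt{a/2}) = 1 + \frac{e^{-a\eta^{2}/2}}{\sqrt{2\pi}}\sum_{k=0}^{N-1}\frac{\varphi_k(\lambda)}{a^{k+1/2}} + \bigO\bigg(\frac{e^{-a\eta^{2}/2}}{(a\eta^{2})^{N+1/2}}\bigg).
\]
Subtracting the expansion of $R_a(\eta)$, the $\varphi_k(\lambda)$ pieces cancel in a single telescoping step, and the two remainders combine into the $\bigO(a^{-N-1/2})+\bigO((a\eta^2)^{-N-1/2})$ error claimed in~(i).

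Case~(ii), $\lambda\in[\epsilon,1-1/\sqrt a]$, is structurally identical: now $\eta<0$ but $-\eta\sqrt{a/2}$ is bounded below, so the Poincar\'e expansion applies directly to $\mathrm{erfc}(-\eta\sqrt{a/2})$ (no ``$1-(\cdot)$'' flip is needed, which is what removes the additive ``$1$'' that was present in~(i)); a parallel sign check again identifies the coefficients with $\varphi_k(\lambda)$, and the same telescoping against $R_a(\eta)$ delivers the formula of~(ii). The only substantive point requiring care is the sign bookkeeping that identifies $(-1)^k(2k-1)!!/\eta^{2k+1}$ with $\mp\varphi_k(\lambda)$ in each case, and verifying that $|\eta|\sqrt{a/2}$ is bounded below uniformly in the relevant range of $\lambda$; the latter follows from $\eta\asymp\lambda-1$ near $\lambda=1$ together with the assumption $|\lambda-1|\geq 1/\sqrt{a}$.
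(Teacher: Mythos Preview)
Your proposal is correct and follows essentially the same route as the paper: start from the exact representation $\gamma(a,\lambda a)/\Gamma(a)=\tfrac12\mathrm{erfc}(-\eta\sqrt{a/2})-R_a(\eta)$, expand $R_a$ via Lemma~\ref{lemma: uniform} and rewrite its coefficients using $c_j=\varphi_j-S(\varphi_j)$ from Lemma~\ref{lemma:coeff non-recursive}, expand $\mathrm{erfc}$ via its Poincar\'e series (with the flip $\mathrm{erfc}(-x)=2-\mathrm{erfc}(x)$ in case~(i)), and observe that the $\varphi_j$ contributions cancel, leaving only $S(\varphi_j)$. The paper's proof is the same argument, down to the identification $(1/2)_j\,2^j=(2j-1)!!$ and the observation that $|\eta|\sqrt{a}$ is bounded below on the stated $\lambda$-ranges; the only cosmetic difference is that what you call ``telescoping'' is really just a direct term-by-term cancellation.
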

\begin{proof}
$(i)$ The assumption $\lambda \geq 1+ \frac{1}{\sqrt{a}}$ implies that $-\eta \sqrt{a} \leq -c$ for some $c > 0$.
In view of the identity $\mathrm{erfc}(-x) = 2 - \mathrm{erfc}(x)$ and the expansion
\begin{align}\label{erfclargex}
\mathrm{erfc}(x) \sim \frac{e^{-x^2}}{\sqrt{\pi}} \sum_{j=0}^\infty \frac{(-1)^j (1/2)_j}{x^{2j+1}}, \qquad x \to +\infty,
\end{align}
where $(1/2)_j=\prod_{k=0}^{j-1}(\frac{1}{2}+k)$ is the rising factorial,
Lemma \ref{lemma: uniform} implies that, for any $N \geq 0$,
\begin{align*}
\frac{\gamma(a,\lambda a)}{\Gamma(a)}
= 1 - \frac{e^{-\frac{a}{2}\eta^2}}{2\sqrt{\pi}} \sum_{j=0}^{N-1} \frac{(-1)^j (1/2)_j}{(\eta \sqrt{a/2})^{2j+1}} +  \bigO\bigg(\frac{1}{(\eta \sqrt{a})^{2N+1}}\bigg)
- \frac{e^{-\frac{1}{2}a \eta^{2}}}{\sqrt{2\pi a}}\sum_{j=0}^{N-1} \frac{c_{j}(\eta)}{a^{j}}
+ \bigO\bigg(\frac{1}{a^{N+\frac{1}{2}}}\bigg)
	\\
= 1 - \frac{e^{-\frac{a}{2}\eta^2}}{\sqrt{2\pi}} \sum_{j=0}^{N-1} \frac{1}{(\sqrt{a})^{2j+1}}\bigg(\frac{(-1)^j (1/2)_j 2^j}{\eta^{2j+1}} + c_{j}(\eta)\bigg)
 +  \bigO\bigg(\frac{1}{(a \eta^2)^{N+\frac{1}{2}}}\bigg) + \bigO\bigg(\frac{1}{a^{N+\frac{1}{2}}}\bigg).
\end{align*}
Since $(1/2)_j 2^j = \frac{1}{2} \cdot \frac{3}{2} \cdot\frac{5}{2} \cdots \frac{2j-1}{2} 2^j = (2j-1)!!$,  the desired conclusion follows from (\ref{cjvarphij}).

$(ii)$ The assumption $\lambda \leq 1- \frac{1}{\sqrt{a}}$ implies that $-\eta \sqrt{a} \geq c$ for some $c > 0$. Using (\ref{erfclargex}) and Lemma \ref{lemma: uniform}, the desired conclusion now follows as in the proof of $(i)$.
\end{proof}

\paragraph{Acknowledgements.} CC acknowledges support from the Swedish Research Council, Grant No. 2021-04626. JL acknowledges support from the Swedish Research Council, Grant No. 2021-03877 and the Ruth and Nils-Erik Stenb\"ack Foundation. We thank S.-S. Byun for help with Figure \ref{fig: ML with hard wall}.

\footnotesize

\end{document}